  \newcommand{\tabper}
    {
          \begin{tabular}{|l|l||c|c|c||c|c|c|>{\color{blue}} c|}
            \hline \multicolumn{2}{|c||}{\ } & \multicolumn{3}{c||}{\ } & \multicolumn{4}{c|}{\ } \\[-0.80em]
             \multicolumn{2}{|c||}{\ }  & \multicolumn{3}{c||}{Exponential} & \multicolumn{4}{c|}{Quasi Polynomial}  \\
            \hline  \multicolumn{2}{|c||}{\ } & & & & & & &\\[-.9em]
            \multicolumn{2}{|c||}{Benchmarks}   & \multicolumn{1}{c|}{ZLK} & \multicolumn{1}{c|}{NPP} & \multicolumn{1}{c||}{RPP} & \multicolumn{1}{c|}{SSPM} & \multicolumn{1}{c|}{QPT} & \multicolumn{1}{c|}{$\ $ZLKQ$\ $} &  \multicolumn{1}{c|}{\textbf{\color{blue}HPP}}  \\
            \hline & &  && & & & & \\[-0.80em]
                     & \text{\scriptsize Tot. Time}      & 27.16 &  20.12   &   53.66    & {\color{red} $\ >$849.01$\ $}  &  {\color{red}$>$1259.84}    & 42.38  & \textbf{64.21}\\
            Model-Ch.  & \text{\scriptsize Avg. Time}    & 0.08 & 0.06    &    0.17    & {\color{red}$>$2.71}  & {\color{red} $>$4.02}    & 0.13 & \textbf{0.2}  \\
                     & \text{\scriptsize  TimeOut}    & \multicolumn{1}{c|}{0\%} &  \multicolumn{1}{c|}{0\%} &  \multicolumn{1}{c||}{0\%} & \multicolumn{1}{c|}{\color{red}22\%}  & \multicolumn{1}{c|}{\color{red}36.4\%}     & \multicolumn{1}{c|}{0\%} & \multicolumn{1}{c|}{\textbf{\color{blue}0\%}} \\
            \hline & & & & & & & &\\[-.9em]
                     & \text{\scriptsize Tot. Time}      & 202.95 & 137.92    &    242.32       & {\color{red}$\ >$2681.33$\ $}  & {\color{red}$>$3139.85}    & 208.3  & $\ $\textbf{280.81}$\ $\\
            Equiv. Ch.  & \text{\scriptsize Avg. Time}    & 0.94 & 0.63       &    1.12         &  {\color{red} $>$12.41}  & {\color{red}$>$14.53}   &   0.93  & \textbf{1.3}  \\
                     & \text{\scriptsize  TimeOut}    & \multicolumn{1}{c|}{0\%} &  \multicolumn{1}{c|}{0\%} &  \multicolumn{1}{c||}{0\%} & \multicolumn{1}{c|}{\color{red}28.2\%}  & \multicolumn{1}{c|}{\color{red}27.3\%}     & \multicolumn{1}{c|}{0\%} & \multicolumn{1}{c|}{\textbf{\color{blue}0\%}} \\
            \hline  & & & & &  && &\\[-.9em]
                     & \text{\scriptsize Tot. Time}      & 13.20 &  11.75   &   13.27        & {\color{red}$>$360.8}  & {\color{red}$>$853.64}    & 66.85  & \textbf{14.27}\\
            Decision Prb.  & \text{\scriptsize Avg. Time} & 0.07 &  0.06    &   0.07         & {\color{red}$>$1.87}  & {\color{red}$>$4.44}  & 0.35  & \textbf{0.07}  \\
                     & \text{\scriptsize  TimeOut}    & \multicolumn{1}{c|}{0\%} &  \multicolumn{1}{c|}{0\%} &  \multicolumn{1}{c||}{0\%} & \multicolumn{1}{c|}{\color{red}11\%}  & \multicolumn{1}{c|}{\color{red}26.5\%}     & \multicolumn{1}{c|}{0\%} & \multicolumn{1}{c|}{\textbf{\color{blue}0\%}} \\
            \hline  & & & & & & & & \\[-0.80em]
                     & \text{\scriptsize Tot. Time}      & 1.54  &  2.21    &   2.89         & {\color{red}$\ >$3615.22$\ $}  & {\color{red}$\ >$4069.12$\ $}   &  8.62  & \textbf{4.04}\\
            PGSolver  & \text{\scriptsize Avg. Time}    &  0.005 &  0.007   &  0.009         & {\color{red}$>$12.42}  &  {\color{red}$>$13.98}     &  0.03 & \textbf{0.01}  \\
                     & \text{\scriptsize  TimeOut}    & \multicolumn{1}{c|}{0\%} &  \multicolumn{1}{c|}{0\%} &  \multicolumn{1}{c||}{0\%} & \multicolumn{1}{c|}{\color{red}78\%}  & \multicolumn{1}{c|}{\color{red}92.4\%}     & \multicolumn{1}{c|}{0\%}  & \multicolumn{1}{c|}{\textbf{\color{blue}0\%}} \\
            \hline
          \end{tabular}
    }
    \newcommand{\tabwc}
    {
      \begin{tabular}{|l|l||c|c||c|c||c|c|c|}
            \hline \multicolumn{2}{|c||}{\ } & \multicolumn{2}{c||}{\ } & \multicolumn{2}{c||}{\ } & \multicolumn{2}{c|}{\ } \\[-0.80em]
             \multicolumn{2}{|c||}{Worst Case}  & \multicolumn{2}{c||}{NPP} & \multicolumn{2}{c||}{u-ZLK} & \multicolumn{2}{c|}{HPP}  \\
            \hline  \multicolumn{2}{|c||}{\ } & & & & & &\\[-.9em]
            \multicolumn{1}{c|}{Index} & \multicolumn{1}{c|}{Positions} & \multicolumn{1}{c|}{Time} & \multicolumn{1}{c||}{Iterations} & \multicolumn{1}{c|}{Time} & \multicolumn{1}{c||}{Iterations} & \multicolumn{1}{c|}{Time} & \multicolumn{1}{c|}{Iterations}\\
            \hline
            10 & 88 & 0.00 & 11267 & 0.00 & 1295 & 0.00 & 69\\
            15 & 170 & 0.18 & 524294 & 0.00 & 10175 & 0.00 & 177\\
            20 & 278 & 7.70 & 22020104 & 0.03 & 133631 & 0.00 & 339\\
            25 & 410 & - & - & 0.17 & 796671 & 0.00 & 879\\
            30 & 568 & - & - & 1.99 & 8863743 & 0.00 & 1217\\
            35 & 750 & - & - & 11.25 & 47120383 & 0.01 & 2125\\
            40 & 958 & - & - & - & - & 0.02 & 2952\\
            \hline
      \end{tabular}
    }
\newcommand{\algsolexpl}{
\begin{algorithm}[H]
  \caption{\label{alg:expsol} \RPP Solver}
  \SetInd{0.25em}{0.5em}
  \Function{$\solFun(\sttElm \colon \SttSet[\MSym]) \colon \RegSet$}
    {
    \nl \If{$\prtElm[\sttElm] \neq \bot$}
      {
      \nl $\RFun[\sttElm] \gets \atrFun[ {\GameName[\sttElm]}
          ][\alpha_{\sttElm}][ {\RFun[\sttElm]} ]$ \;
      \nl \eIf{$\sttElm$ is open}
        {
        \nl $\regFun[\sttElm] \gets \solFun(\NxtFun(\sttElm))$ \;
        \nl \eIf{$\sttElm$ is open}
          {
          \nl $\MaxFun(\sttElm)$ \;
          }
          {
          \nl $\PrmFun(\sttElm)$ \;
          }
        }
        {
        \nl $\PrmFun(\sttElm)$ \;
        }
      \nl $\regFun[\sttElm] \gets \solFun(\sttElm)$ \;
      }
    \nl \Return $\regFun[\sttElm]$ \;
    }
\end{algorithm}
}
\newcommand{\algsolexp}{
\begin{algorithm}[H]
  \caption{\RPP Solver}
  \SetInd{0.25em}{0.5em}
  \Function{$\solFun(\sttElm \colon \SttSet[\MSym]) \colon \RegSet$}
    {
    \nl \If{$\prtElm[\sttElm] \neq \bot$}
      {
      \nl $\RFun[\sttElm] \gets \atrFun[ {\GameName[\sttElm]}
          ][\alpha_{\sttElm}][ {\RFun[\sttElm]} ]$ \;
      \nl \eIf{$\sttElm$ is open}
        {
        \nl $\regFun[\sttElm] \gets \solFun(\NxtFun(\sttElm))$ \;
        \nl \eIf{$\sttElm$ is open}
          {
          \nl $\MaxFun(\sttElm)$ \;
          }
          {
          \nl $\PrmFun(\sttElm)$ \;
          }
        }
        {
        \nl $\PrmFun(\sttElm)$ \;
        }
      \nl $\regFun[\sttElm] \gets \solFun(\sttElm)$ \;
      }
    \nl \Return $\regFun[\sttElm]$ \;
    }
\end{algorithm}
}
\newcommand{\algsolexpmacro}{
\NoCaptionOfAlgo
\begin{algorithm}[H]
  \SetInd{0.25em}{0.5em}
  \caption{Auxiliary Functions / Procedures}
  \vspace{0.125em}
  \Function{$\NxtFun(\sttElm \colon \SttSet[\SSym]) \colon \SttSet[\MSym]$}
    {
    \nl $\qprtElm \gets \max(\rng{\regFun[\sttElm]^{(< \prtElm[\sttElm])}})$ \;
    \nl \Return $(\regFun[\sttElm], \qprtElm)$ \;
    }
  \vspace{0.265em}
  \setcounter{AlgoLine}{0}
  \Procedure{$\MaxFun(\sttElm \colon \SttSet)$}
    {
    \nl \ForEach{$\alpha \in \SetB$}
      {
      \nl $\XSet \gets \atrFun[][\alpha][ {\HSet[\sttElm][\alpha] \setminus
          \LSet[\sttElm], \LSet[\sttElm]} ]$ \;
      \nl $\qprtElm \gets \min(\rng{ \regFun[\sttElm] \rst\,
          (\HSet[\sttElm][\alpha] \setminus \LSet[\sttElm]) })$ \;
      \nl $\regFun[\sttElm] \gets {\regFun[\sttElm]}[\XSet \mapsto
          \qprtElm]$ \;
      }
    \nl $\regFun[\sttElm] \gets {\regFun[\sttElm]}[\posElm \in \LSet[\sttElm]
        \mapsto \prtFun(\posElm)]$ \;
    }
  \vspace{0.265em}
  \setcounter{AlgoLine}{0}
  \Procedure{$\PrmFun(\sttElm \colon \SttSet[\PSym])$}
    {
    \nl $\qprtElm \gets \bepFun[][ {\dual{\alpha}[\sttElm]} ][{\RFun[\sttElm]},
        {\regFun[\sttElm]}]$ \;
    \nl $\regFun[\sttElm] \gets {\regFun[\sttElm]}[{\RFun[\sttElm]} \mapsto
        {\qprtElm}]$ \;
    }
  \vspace{0.125em}
\end{algorithm}
}
\newcommand{\algsolqsipoll}{
\begin{algorithm}[H]
  \caption{\label{alg:hybsol} \HPP Solver}
  \SetInd{0.25em}{0.5em}
  \vspace{0.12em}
  \Function{$\solFun(\sttElm \colon \SttSet[\MSym]) \colon \RegSet
    \times \PrmSet$}
    {
    \nl \eIf{$\prtElm[\sttElm] \!=\! \bot \vee \bElm[0\sttElm] \!=\! 0 \vee
        \bElm[1\sttElm] \!=\! 0$}
      {
      \nl \Return $(\regFun[\sttElm], \undFun[\sttElm])$ \;
      }
      {
      \nl $\hsolFun(\sttElm)$ \;
      \nl $\der{\sttElm} \gets \sttElm$ \;
      \nl $(\regFun[\sttElm], \undFun[\sttElm]) \gets
          \solFun(\NxtFun(\sttElm))$ \;
      \nl \eIf{$\sttElm$ is open}
        {
        \nl $\MaxFun(\sttElm)$ \;
        }
        {
        \nl $\PrmFun(\sttElm)$ \;
        }
      \nl \lIf{$\sttElm \prec \der{\sttElm}$\,}
        {
        $\hsolFun(\sttElm)$
        }
      \nl \Return $\UndFun(\sttElm)$ \;
      }
    }
  \vspace{0.1175em}
\end{algorithm}
}
\newcommand{\algsolqsipol}{
\begin{algorithm}[H]
  \caption{\HPP Solver}
  \SetInd{0.25em}{0.5em}
  \vspace{0.12em}
  \Function{$\solFun(\sttElm \colon \SttSet[\MSym]) \colon \RegSet
    \times \PrmSet$}
    {
    \nl \eIf{$\prtElm[\sttElm] \!=\! \bot \vee \bElm[0\sttElm] \!=\! 0 \vee
        \bElm[1\sttElm] \!=\! 0$}
      {
      \nl \Return $(\regFun[\sttElm], \undFun[\sttElm])$ \;
      }
      {
      \nl $\hsolFun(\sttElm)$ \;
      \nl $\der{\sttElm} \gets \sttElm$ \;
      \nl $(\regFun[\sttElm], \undFun[\sttElm]) \gets
          \solFun(\NxtFun(\sttElm))$ \;
      \nl \eIf{$\sttElm$ is open}
        {
        \nl $\MaxFun(\sttElm)$ \;
        }
        {
        \nl $\PrmFun(\sttElm)$ \;
        }
      \nl \lIf{$\sttElm \prec \der{\sttElm}$\,}
        {
        $\hsolFun(\sttElm)$
        }
      \nl \Return $\UndFun(\sttElm)$ \;
      }
    }
  \vspace{0.1175em}
\end{algorithm}
}
\newcommand{\algsolqsipolhalf}{
\begin{algorithm}[H]
  \caption{Half-Solver{\protect\vphantom{\HPP}}}
  \SetInd{0.25em}{0.5em}
  \Procedure{$\hsolFun(\sttElm \colon \SttSet[\MSym])$}
    {
    \nl \Repeat{$\sttElm \not\prec \der{\sttElm}$}
      {
      \nl $\RFun[\sttElm] \gets \atrFun[ {\GameName[\sttElm]}
          ][\alpha_{\sttElm}][ {\RFun[\sttElm]} ]$ \;
      \nl $\der{\sttElm} \gets \sttElm$ \;
      \nl \eIf{$\sttElm$ is open}
        {
        \nl $(\regFun[\sttElm], \undFun[\sttElm]) \gets
            \solFun(\HalfFun(\sttElm))$ \;
        \nl \eIf{$\sttElm$ is open}
          {
          \nl $\MaxFun(\sttElm)$ \;
          }
          {
          \nl $\PrmFun(\sttElm)$ \;
          }
        }
        {
        \nl $\PrmFun(\sttElm)$ \;
        }
      }
    }
\end{algorithm}
}
\newcommand{\algsolqsipolmacroi}{
\NoCaptionOfAlgo
\begin{algorithm}[H]
  \caption{Auxiliary Functions / Procedures I\!\!}
  \SetInd{0.25em}{0.5em}
  \vspace{1em}
  \Function{$\NxtFun(\sttElm \colon \SttSet[\SSym]) \colon \SttSet[\MSym]$}
    {
    \nl $\qprtElm \gets \max(\rng{\regFun[\sttElm]^{(< \prtElm[\sttElm])}})$ \;
    \nl \Return $((\regFun[\sttElm], \qprtElm), (\undFun[\sttElm],
        \prtElm[\sttElm], \bElm[0\sttElm], \bElm[1\sttElm]))\!\!$ \;
    }
  \vspace{1.2em}
  \setcounter{AlgoLine}{0}
  \Function{$\HalfFun(\sttElm \colon \SttSet[\SSym]) \colon
    \SttSet[\MSym]$}
    {
    \nl $(\bElm[0\sttElm], \bElm[1\sttElm]) \gets
        (\floor{\frac{\bElm[0\sttElm]}{1 + \alpha_{\sttElm}}},
        \floor{\frac{\bElm[1\sttElm]}{2 - \alpha_{\sttElm}}})$ \;
    \nl \Return $\NxtFun(\sttElm)$ \;
    }
  \vspace{1.2em}
  \setcounter{AlgoLine}{0}
  \Function{$\UndFun(\sttElm \colon \SttSet[\SSym]) \colon \RegSet \!\times\!
    \PrmSet$\!\!\!\!\!\!\!}
    {
      \nl \eIf{$\cprtElm[\sttElm] \equiv_{2} \alpha_{\sttElm}$}
        {
        \nl $\undFun[\sttElm] \gets {\undFun[\sttElm]}[{ \USet[\sttElm] \mapsto
            \cprtElm[\sttElm] }]$ \;
        }
        {
        \nl $\regFun[\sttElm] \gets {\regFun[\sttElm]^{(\geq
            \cprtElm[\sttElm])}}[\posElm \!\in\! \USet[\sttElm] \mapsto
            \prtFun(\posElm)]\!\!$ \;
        \nl $\undFun[\sttElm] \gets \undFun[\sttElm]^{(\geq
            \cprtElm[\sttElm])}{[\LSet[\sttElm] \mapsto \cprtElm[\sttElm]]}$ \;
        }
    \nl \Return $(\regFun[\sttElm], \undFun[\sttElm])$ \;
    }
  \vspace{1em}
\end{algorithm}
}
\newcommand{\algsolqsipolmacroii}{
\NoCaptionOfAlgo
\setlength{\algomargin}{0.5em}
\begin{algorithm}[H]
  \caption{Auxiliary Functions / Procedures II}
  \SetInd{0.25em}{0.5em}
  \setcounter{AlgoLine}{0}
  \Procedure{$\PrmFun(\sttElm \colon \SttSet[\PSym])$}
    {
    \nl $(\prtElm[\regFun], \prtElm[\undFun]) \!\gets\!
        ({\bepFun[][ {\dual{\alpha}[\sttElm]} ][{\RFun[\sttElm]},
        {\regFun[\sttElm]}]},
        {\bepFun[][ {\dual{\alpha}[\sttElm]} ][{\RFun[\sttElm]},
        {\undFun[\sttElm]}]})\!\!\!\!\!\!\!$ \;
    \nl \eIf{$\prtElm[\regFun] \leq \prtElm[\undFun]$}
      {
      \nl $\regFun[\sttElm] \gets {\regFun[\sttElm]}[{\RFun[\sttElm]} \mapsto
          {\prtElm[\regFun]}]$ \;
      }
      {
      \nl $(\regFun[\sttElm], \undFun[\sttElm]) \gets (\regFun[\sttElm]
          \!\setminus\! \RSet[\sttElm], {\undFun[\sttElm]}[{\RSet[\sttElm]}
          \mapsto {\prtElm[\undFun]}])$ \;
      }
    }
  \setcounter{AlgoLine}{0}
  \Procedure{$\MaxFun(\sttElm \colon \SttSet)$}
    {
    \nl $\ZSet \gets \RSet[\sttElm]$ \;
    \nl \ForEach{$\alpha \in \SetB$}
      {
      \nl $\XSet \gets \atrFun[][\alpha][ {\HSet[\sttElm][\alpha] \setminus
          \LSet[\sttElm], \LSet[\sttElm] \cup \USet[\sttElm]} ]$ \;

      \nl $\qprtElm \gets \min(\rng{ (\regFun[\sttElm] \cup \undFun[\sttElm])
          \rst\, (\HSet[\sttElm][\alpha] \setminus \LSet[\sttElm]) })\!\!$ \;

      \nl \eIf{$\qprtElm \equiv_{2} \alpha$}
        {
        \nl $\regFun[\sttElm] \gets {\regFun[\sttElm]}[\XSet \mapsto \qprtElm]$
            \;
        \nl $\undFun[\sttElm] \!\gets \undFun[\sttElm] \setminus \XSet$ \;
        }
        {
        \nl $\regFun[\sttElm] \gets {\regFun[\sttElm]} \setminus \XSet$ \;
        \nl $\undFun[\sttElm] \gets {\undFun[\sttElm]}[\XSet \mapsto \qprtElm]$
            \;
        }
      }
    \nl \lIf{$\ZSet \neq \RSet[\sttElm]$}
      {
      \!$\regFun[\sttElm] \!\gets\! {\regFun[\sttElm]}[\posElm \!\in\!
      \LSet[\sttElm] \!\mapsto\! \prtFun(\posElm)]$\!
      }
    }
\end{algorithm}
}
\newcommand{\algsolparl}{
\begin{algorithm}[H]
  \caption{\label{alg:parsol} Parys Solver}
  \SetInd{0.25em}{0.5em}
  \vspace{0.12em}
  \Function{$\solFun(\sttElm \colon \SttSet) \colon 2^{\PosSet}$}
    {
    \nl \eIf{$\prtElm[\sttElm] \!=\! \bot \vee \bElm[0\sttElm] \!=\! 0 \vee
        \bElm[1\sttElm] \!=\! 0$}
      {
      \nl \Return $(\regFun[\sttElm],\undFun[\sttElm])$ \;
      }
      {
      \nl $\hsolFun(\sttElm)$ \;
      \nl $\der{\sttElm} \gets \sttElm$ \;
      \nl $(\regFun[\sttElm], \undFun[\sttElm]) \gets
          \solFun(\NxtFun(\sttElm))$ \;
      \nl $\MaxFun(\sttElm)$ \;
      \nl \lIf{$\sttElm \prec \der{\sttElm}$\,}
        {
        $\hsolFun(\sttElm)$
        }
      \nl \Return $\UndFun(\sttElm)$ \;
      }
    }
  \vspace{0.1175em}
\end{algorithm}
}
\newcommand{\algsolpar}{
\begin{algorithm}[H]
  \caption{Parys Solver}
  \SetInd{0.25em}{0.5em}
  \vspace{0.12em}
  \Function{$\solFun(\sttElm \colon \SttSet) \colon 2^{\PosSet}$}
    {
    \nl \eIf{$\prtElm[\sttElm] \!=\! \bot \vee \bElm[0\sttElm] \!=\! 0 \vee
        \bElm[1\sttElm] \!=\! 0$}
      {
      \nl \Return $(\regFun[\sttElm],\undFun[\sttElm])$ \;
      }
      {
      \nl $\hsolFun(\sttElm)$ \;
      \nl $\der{\sttElm} \gets \sttElm$ \;
      \nl $(\regFun[\sttElm], \undFun[\sttElm]) \gets
          \solFun(\NxtFun(\sttElm))$ \;
      \nl $\MaxFun(\sttElm)$ \;
      \nl \lIf{$\sttElm \prec \der{\sttElm}$\,}
        {
        $\hsolFun(\sttElm)$
        }
      \nl \Return $\UndFun(\sttElm)$ \;
      }
    }
  \vspace{0.1175em}
\end{algorithm}
}
\newcommand{\algsolparhalf}{
\begin{algorithm}[H]
  \caption{\label{alg:parsolhalf}Half-Solver{\protect\vphantom{$Par$}}}
  \SetInd{0.25em}{0.5em}
  \Procedure{$\hsolFun(\sttElm \colon \SttSet)$}
    {
    \nl \Repeat{$\sttElm \not\prec \der{\sttElm}$}
      {
      \nl $\RFun[\sttElm] \gets \atrFun[ {\GameName[\sttElm]}
          ][\alpha_{\sttElm}][ {\RFun[\sttElm]} ]$ \;
      \nl $\der{\sttElm} \gets \sttElm$ \;
        \nl $(\regFun[\sttElm], \undFun[\sttElm]) \gets
            \solFun(\HalfFun(\sttElm))$ \;
        \nl $\MaxFun(\sttElm)$ \;
      }
    }
\end{algorithm}
}
\newcommand{\algsolparmacroi}{
\NoCaptionOfAlgo
\begin{algorithm}[H]
  \caption{Auxiliary Functions / Procedures I\!\!}
  \SetInd{0.25em}{0.5em}
  \Function{$\NxtFun(\sttElm \colon \SttSet) \colon \SttSet$}
    {
    \nl \Return $((\regFun[\sttElm], \prtElm[\sttElm]\!-\!1), (\undFun[\sttElm], \prtElm[\sttElm], \bElm[0\sttElm], \bElm[1\sttElm]))$\!\! \;
    }
  \setcounter{AlgoLine}{0}
  \Function{$\HalfFun(\sttElm \colon \SttSet) \colon \SttSet$}
    {
    \nl $(\bElm[0\sttElm], \bElm[1\sttElm]) \gets
        (\floor{\frac{\bElm[0\sttElm]}{1 + \alpha_{\sttElm}}},
        \floor{\frac{\bElm[1\sttElm]}{2 - \alpha_{\sttElm}}})$ \;
    \nl \Return $\NxtFun(\sttElm)$ \;
    }
\end{algorithm}
}
\newcommand{\algsolparmacroii}{
\NoCaptionOfAlgo
\setlength{\algomargin}{0.5em}
\begin{algorithm}[H]
  \caption{Auxiliary Functions / Procedures II\!\!\!\!\!\!\!\!\!\!\!}
  \SetInd{0.25em}{0.5em}
  \setcounter{AlgoLine}{0}
  \Procedure{$\MaxFun(\sttElm \colon \SttSet)$}
    {
      \nl $\XSet \gets \atrFun[][
          {\dual{\alpha}[\sttElm]} ][{ \HSet[\sttElm][ {\dual{\alpha}[\sttElm]}], \LSet[\sttElm]}]$ \;
      \nl $\regFun[\sttElm] \gets \regFun[\sttElm] \setminus \XSet$ \;
      \nl $\undFun[\sttElm] \!\gets {\undFun[\sttElm]}[\XSet \mapsto \prtElm[\sttElm]]$ \;
    }
  \vspace{0.5em}
  \setcounter{AlgoLine}{0}
  \Function{$\UndFun(\sttElm \colon \SttSet) : \RegSet \!\times\! \PrmSet$}
    {
      \nl $\regFun[\sttElm] \gets {\regFun[\sttElm]^{(>\prtElm[\sttElm])}}[\posElm \!\in\! \USet[\sttElm] \mapsto
            \prtFun(\posElm)]\!\!$ \;
      \nl $\undFun[\sttElm] \gets \undFun[\sttElm]{[\LSet[\sttElm] \mapsto \prtElm[\sttElm]+1]}$ \;
      \nl \Return $(\regFun[\sttElm], \undFun[\sttElm])$ \;
    }
\end{algorithm}
}
\title{Priority Promotion with Parysian Flair}
\titlerunning{Priority Promotion with Parysian Flair}
\author{Massimo Benerecetti}{Universit\`a degli Studi di Napoli Federico II, Naples, Italy}{massimo.benerecetti@unina.it}{https://orcid.org/0000-0003-4664-6061}{}
\author{Daniele Dell'Erba}{University of Liverpool, Liverpool, UK}{daniele.dell-erba@liverpool.ac.uk}{https://orcid.org/0000-0003-1196-6110}{}
\author{Fabio Mogavero}{Universit\`a degli Studi di Napoli Federico II, Naples, Italy}{fabio.mogavero@unina.it}{https://orcid.org/0000-0002-5140-5783}{}
\author{Sven Schewe}{University of Liverpool, Liverpool, UK}{sven.schewe@liverpool.ac.uk}{https://orcid.org/0000-0002-9093-9518}{}
\author{Dominik Wojtczak}{University of Liverpool, Liverpool, UK}{d.wojtczak@liverpool.ac.uk}{https://orcid.org/0000-0001-5560-0546}{}
\authorrunning{Benerecetti \etal}
\keywords{Parity Game, Quasi Dominion, Quasi Polynomial algorithm}
\begin{document}

%
%
%

  \maketitle


\begin{abstract}

  We develop an algorithm that combines the advantages of priority promotion - one of the leading approaches to solving large parity games in practice - with the quasi-polynomial time guarantees offered by Parys' algorithm. Hybridising these algorithms sounds both natural and difficult, as they both generalise the classic recursive algorithm in different ways that appear to be irreconcilable: while the promotion transcends the call structure, the guarantees change on each level. We show that an interface that respects both is not only effective, but also efficient.


\end{abstract}


  \sloppy


\section{Introduction}

Parity games have many applications in model checking~\cite{Koz83,EJS93,%
deAlfaro+Henzinger+Majumdar/01/Control,%
AHK02,Wil01,KV97} and synthesis~\cite{Wil01,Koz83,Var98,SF06,Pit06,%
Schewe/09/determinise,Schewe+Finkbeiner/06/Asynchronous}.
In particular, modal and alternating-time $\mu$-calculus model
checking~\cite{Wil01,AHK02}, synthesis~\cite{Schewe+Finkbeiner/06/Asynchronous,%
Pit06,%
Schewe/09/determinise} and satisfiability checking~\cite{Wil01,Koz83,Var98,SF06}
for reactive systems, module checking~\cite{KV97}, and \ATLS model
checking~\cite{deAlfaro+Henzinger+Majumdar/01/Control,%
AHK02} can be reduced to solving parity games.
This relevance of parity games led to a series of different approaches to
solving them
\cite{McN93,EL86,%
Ludwig/95/random,Puri/95/simprove,%
ZP96,%
Browne-all/97/fixedpoint,%
Zie98,Jur98,Jur00,VJ00,%
Obdrzalek/03/TreeWidth,Lange/05/ParitySAT,%
BDHK06,BV07,JPZ08,FS13,BDM18,CJKLS17,JL17,Leh18,FJKSSW19,Par19,LSW19,Par20,%
DJT20,LPSW21,Dij18a,LBD20}.

The research falls into two categories: on the one hand to develop \emph{fast solvers}; 
on the other hand to
determine the \emph{complexity of parity games} or to find algorithms with a
good \emph{worst-case complexity}.
With its practical motivation, one of the leading algorithms most efficient for solving parity games is currently \emph{priority promotion}
techniques~\cite{BDM16b,BDM18,BDM18a}, a refinement of the classic \emph{recursive
algorithm}~\cite{McN93,EL86,Zie98} that follows the iterated fixed-point
structure induced by the parity condition.
The complexity of solving parity games is still an open problem.
Parity games are \emph{memoryless determined}~\cite{EJS93,Bjorklund2004365},
which implies that nondeterministic algorithms can determine winning regions and
strategies for both players.
Due to their symmetry, they are therefore in
\NPTime~$\cap$~\CoNPTime~\cite{EJS93}, and by reduction to payoff
games~\cite{ZP96}, in \UPTime~$\cap$~\CoUPTime~\cite{Jur98}.
While determining their membership in \PTime continues to be a major challenge,
one of the most celebrated results in recent years has been the landmark result
of Calude \etal~\cite{CJKLS17}, which established that parity games can be
solved in quasi-polynomial time (QP).
This was a major step from former deterministic algorithms, which were (at
least) exponential in the number of priorities~\cite{McN93,EL86,ZP96,%
Browne-all/97/fixedpoint,%
Zie98,Jur00,BV07,Sch08a,Sch17,BDM18} ($n^{O(c)}$), or in the square-root of the
number of game positions~\cite{Ludwig/95/random,%
JPZ08,BV07} (approximately $n^{O(\sqrt{n})}$).
The breakthrough of Calude \etal~\cite{CJKLS17} has triggered a new line of
research into QP algorithms,
including~\cite{JL17,Leh18,FJKSSW19,Par19,LSW19,Par20,DJT20}.

Algorithms that are good in practice do not tend to display their worst-case
behaviour, except for in carefully designed hostile examples.
This holds in particular for strategy improvement
algorithms~\cite{Ludwig/95/random,Puri/95/simprove,%
VJ00,BV07,Sch08a,Fea10a,STV15}, which were considered candidates for tractable
algorithms until they were shown to be exponential by Friedman's delicate lower
bound constructions~\cite{Fri11b,%
Friedmann/11/Zadeh,%
Fri13} (with the notable exception of the symmetric approach from~\cite{STV15},
for which no hard families are known).
But while it is easier to design hard classes for recursive~\cite{Fri11a,BDM17}
and priority-promotion algorithms~\cite{BDM18a}, these classes are still not
relevant in practice.
However, with a host of QP algorithms at hand, an upgrade of priority promotion
that offers QP lower bounds without undue compromise on efficiency will be an
attractive challenge that combines the best of both worlds.

Interestingly, Parys' algorithm~\cite{Par19} and variations
thereof~\cite{LSW19,LPSW21}, which, like priority promotion techniques, adjust the
classic recursive algorithm~\cite{McN93,EL86,Zie98}, are relatively fast among
the QP algorithms, where~\cite{Par19} has the edge on benchmarks,
while~\cite{LSW19} has the edge on theoretical guarantees.
On first glance, this seems to invite synthesising one of these algorithms with
priority promotion.
On second glance, the prospect of this synthesis seems less promising.
Priority promotion techniques~\cite{BDM16b,BDM18,BDM18a} achieve their advancement over
the previously leading recursive algorithms~\cite{McN93,EL86,Zie98} by globally
bypassing the call structure through temporally increasing the priority of a
position.
Parys' approach, on the other hand, locally creates sets with guarantees with
quickly falling strength along the recursive call structure, where subgames are
split into areas that contain all $0$-dominions with size up to a bound $b_0$
and all $1$-dominions of size up to a bound $b_1$; one of these bounds is halved
in each call until the guarantees are trivial.
\Primafacie, it seems clear that such guarantees are ill suited for a promotion
across the call structure.
We did, however, find that, when one shifts the view on the essence of a
promotion from creating quasi-dominions to creating regions and promoting them
to the lowest level where they are no longer dominions, this allows for a
concurrent treatment of sets with bounded guarantees (the Parysian flair of our
hybrid algorithm) and with unbounded guarantees (the Priority Promotion core of
our algorithm).
While the integration of these seemingly antagonistic concepts is intricate, it
provides an efficient bridge between the behaviour and the data structure
of~\cite{BDM18} and~\cite{Par19}: the resulting algorithm guarantees a
quasi-polynomial running time, and offers excellent practical behaviour on the
benchmarks we have tested it against.



\section{Preliminaries}
\label{sec:prl}

A two-player turn-based \emph{arena} is a tuple $\ArenaName = \tuple
{\PosSet[\PlrSym]} {\PosSet[\OppSym]} {\MovRel}$, with $\PosSet[\PlrSym] \cap
\PosSet[\OppSym] = \emptyset$ and $\PosSet \defeq \PosSet[\PlrSym] \cup
\PosSet[\OppSym]$, such that $\tuple {\PosSet}{\MovRel}$ is a finite directed
graph without sinks.
$\PosSet[\PlrSym]$ (\resp, $\PosSet[\OppSym]$) is the set of positions of the
Player (\resp, the Opponent) and $\MovRel \subseteq \PosSet \times \PosSet$ is
a left-total relation describing all possible moves.
A \emph{path} in $\VSet \subseteq \PosSet$ is a finite or infinite sequence
$\pthElm \in \PthSet(\VSet)$ of positions in $\VSet$ compatible with the move
relation, \ie, $(\pthElm_{i}, \pthElm_{i + 1}) \in \MovRel$, for all $i \in
\numco{0}{\card{\pthElm} - 1}$.
%
A positional \emph{strategy} for player $\alpha \in \{ \PlrSym, \OppSym \}$ on
$\VSet \subseteq \PosSet$ is a function $\strElm[\alpha] \in
\StrSet[\alpha](\VSet) \subseteq (\VSet \cap \PosSet[\alpha]) \to \VSet$,
mapping each $\alpha$-position $\posElm$ in $\VSet$ to position
$\strElm[\alpha](\posElm)$ compatible with the move relation, \ie, $(\posElm,
\strElm[\alpha](\posElm)) \in \MovRel$.
With $\StrSet[\alpha](\VSet)$ we denote the set of all $\alpha$-strategies on
$\VSet$.
When talking about players, with $\dual{\alpha}$ we will refer to the opponent player of $\alpha$.
A \emph{play} in $\VSet \subseteq \PosSet$ from a position $\posElm \in \VSet$
\wrt a pair of strategies $(\strElm[\PlrSym], \strElm[\OppSym]) \in
\StrSet[\PlrSym](\VSet) \times \StrSet[\OppSym](\VSet)$, called
\emph{$((\strElm[\PlrSym], \strElm[\OppSym]), \posElm)$-play}, is a path
$\pthElm \in \PthSet(\VSet)$ such that $(\pthElm)_{0} = \posElm$ and, for all $i
\in \numco{0}{\card{\pthElm} - 1}$, if $(\pthElm)_{i} \in \PosSet[\PlrSym]$ then
$(\pthElm)_{i + 1} = \strElm[\PlrSym]((\pthElm)_{i})$ else $(\pthElm)_{i + 1} =
\strElm[\OppSym]((\pthElm)_{i})$.
The \emph{play function} $\playFun : (\StrSet[\PlrSym](\VSet) \times
\StrSet[\OppSym](\VSet)) \times \VSet \to \PthSet(\VSet)$ returns, for each
position $\posElm \in \VSet$ and pair of strategies $(\strElm[\PlrSym],
\strElm[\OppSym]) \in \StrSet[\PlrSym](\VSet) \times \StrSet[\OppSym](\VSet)$,
the maximal $((\strElm[\PlrSym], \strElm[\OppSym]), \posElm)$-play
$\playFun((\strElm[\PlrSym], \strElm[\OppSym]), \posElm)$.
Given a partial function $\fFun : \ASet \pto \BSet$, $\dom{\fFun}\subseteq \ASet$ and $\rng{\fFun} \subseteq \BSet$ we indicate the domain and the range of $\fFun$.

A \emph{parity game} is a tuple $\GameName = \tuple {\ArenaName} {\PrtSet}
{\prtFun} \in \PG$, where $\ArenaName$ is an arena, $\PrtSet \subset \SetN$ is a
finite set of priorities, and $\prtFun : \PosSet \to \PrtSet$ is a
\emph{priority function} assigning a priority to each position.
The priority function can be naturally extended to games and paths as follows:
$\prtFun(\GameName) \defeq \max[\posElm \in \PosSet] \, \prtFun(\posElm)$; for a
path $\pthElm \in \PthSet$, we set $\prtFun(\pthElm) \defeq \max_{i \in
\numco{0}{\card{\pthElm}}} \, \prtFun((\pthElm)_{i})$, if $\pthElm$ is finite,
and $\prtFun(\pthElm) \defeq \limsup_{i \in \SetN} \prtFun((\pthElm)_{i})$,
otherwise.
A set of positions $\VSet \subseteq \PosSet$ is an $\alpha$-\emph{dominion},
with $\alpha \in \{ 0, 1 \}$, if there exists an $\alpha$-strategy
$\strElm[\alpha] \in \StrSet[\alpha](\VSet)$ such that, for all
$\dual{\alpha}$-strategies $\strElm[\dual{\alpha}] \in
\StrSet[\dual{\alpha}](\VSet)$ and positions $\posElm \in \VSet$, the induced
play $\pthElm = \playFun((\strElm[\PlrSym], \strElm[\OppSym]), \posElm)$ is
infinite and $\prtFun(\pthElm) \equiv_{2} \alpha$.
In other words, $\strElm[\alpha]$ only induces on $\VSet$ infinite plays whose
maximal priority visited infinitely often has parity $\alpha$.
The \emph{winning region} for player $\alpha \in \{ 0, 1 \}$ in game
$\GameName$, denoted by $\WinSet[\GameName][\alpha]$, is the greatest set of
positions that is also a $\alpha$-dominion in $\GameName$.
Since parity games are memoryless determined~\cite{EJ91}, meaning that from each
position one of the two players wins, the two winning regions of a game
$\GameName$ form a partition of its positions, \ie, $\WinSet[\GameName][\PlrSym]
\cup \WinSet[\GameName][\OppSym] = \PosSet[\GameName]$.

By $\GameName \!\setminus\! \VSet$ we denote the maximal subgame of $\GameName$
with set of positions $\PosSet'$ contained in $\PosSet \!\setminus\! \VSet$ and
move relation $\MovRel'$ equal to the restriction of $\MovRel$ to $\PosSet'$.
The $\alpha$-predecessor of $\VSet$, in symbols $\preFun[][\alpha](\VSet) \defeq
\set{ \posElm \in \PosSet[\alpha] }{ \MovRel(\posElm) \cap \VSet \neq \emptyset
} \cup \set{ \posElm \in \PosSet[\dual{\alpha}] }{ \MovRel(\posElm) \subseteq
\VSet }$, collects the positions from which player $\alpha$ can force the game
to reach some position in $\VSet$ with a single move.
The $\alpha$-attractor $\atrFun[][\alpha](\VSet)$ generalises the notion of
$\alpha$-predecessor $\preFun[][\alpha](\VSet)$ to an arbitrary number of moves.
Thus, it corresponds to the least fix-point of that operator.
When $\VSet = \atrFun[][\alpha](\VSet)$, player $\alpha$ cannot force any
position outside $\VSet$ to enter this set that is,therefore, called
$\alpha$-maximal.
For such a $\VSet$, the set of positions of the subgame $\GameName \setminus
\VSet$ is precisely $\PosSet \setminus \VSet$.
When the computation of the attractor is restricted to a given set of
positions $\XSet$, we will use the notation $\atrFun[][\alpha](\VSet,\XSet)$
which corresponds to the least fix-point of $\preFun[][\alpha](\VSet) \cap
\XSet$.
Finally, the set $\escFun[][\alpha](\VSet) \defeq \preFun[][\alpha](\PosSet
\setminus \VSet) \cap \VSet$, called the \emph{$\alpha$-escape} of $\VSet$,
contains the positions in $\VSet$ from which $\alpha$ can leave $\VSet$ in one
move.
Observe that all the operators and sets described above actually depend on the
specific game $\GameName$ they are applied to.
In the rest of the paper, we shall only add $\GameName$ as subscript of an
operator, \eg\ $\atrFun[\GameName][\alpha][\VSet]$, when the game is not clear
from the context.



\section{A Hybrid Priority-Promotion Algorithm}
\label{sec:hybalg}

We introduce the hybrid algorithm in three steps. In the first step (Section
\ref{sec:hybalg;sub:prtprm}), we introduce a variation of classic Priority
Promotion, which serves as the backbone of our hybrid algorithm in Section
\ref{sec:hybalg;sub:hybalg}.
We provide a recap of how Priority Promotion operates and an introduction to its
data structure, which we later extend for our hybrid algorithm.
In a nutshell, Priority Promotion accelerates the classic recursive algorithm,
by allowing to merge dominions in subgames that span non-adjacent recursive
calls, which is the essence of the promotion operations.
In the following subsection (Section \ref{sub:parys}), we outline Parys'
algorithm, which does not seek to identify \emph{all dominions} on a level, but merely
\emph{all small dominions} up to given bounds $b_0$ and $b_1$ for the dominions of Player $0$ and
Player $1$, respectively.
It truncates the size of the call tree by making all but one call with half the
precision for one of the players.
Here, we formulate the algorithm with a terminology analogous to Priority
Promotion, and present it in a form similar to our hybrid algorithm.

The two concepts of Priority Promotion and truncated tree size through limited
guarantees appear to be unlikely allies: not only does the presence of Parys'
sets with limited guarantees impede the promotion of dominions, any attempt to
promote sets with bounded guarantees are set to fail, when the bounds are larger
(and thus the required guarantees stronger) along the call tree.
In Section \ref{sec:hybalg;sub:hybalg}, we see that, when synthesising the
algorithms carefully, sets with the `region guarantees' from Priority Promotion
and with `bounded guarantees' from Parys' approach can co-exist, so long as they
are kept carefully apart and treated differently.

The resulting algorithm can identify dominions in many places, and these
dominions can be promoted.
This promotion can be to a set with `region guarantees' at a higher level, but
it can also be that the correct target is a set with `bounded guarantees' (which
works across levels because dominions have unbounded guarantees).
The identification of the right set to promote to, instead, remains fairly
similar to the way it is identified in classic Priority Promotion.
While sets with bounded guarantees cannot be promoted along the data structure
(which follows the call tree), they lose parts of their locality: positions can
be promoted into them, and, crucially, they do not prevent promotions to higher
levels.
This way, we can keep the Priority Promotion part, which usually carries the
main burden of solving the parity game and can play out its practical efficiency
in full, while we also retain the quasi-polynomial complexity from Parys'
algorithm~\cite{Par19}, bypassing the known hard cases for recursive algorithms.
For practical considerations, it is still computationally attractive to grow the
bounded sets more slowly: we found that some of the points where Parys'
algorithm applies a closure of sets with bounded guarantees are merely for the
convenience of the proof.
For efficiency, we have restricted the closure under attractor of these sets to
the places where it is necessary for correctness.


\subsection{The Priority-Promotion Approach}
\label{sec:hybalg;sub:prtprm}

The \emph{priority-promotion approaches}~\cite{BDM18,BDM18a} attack the problem
of solving a parity game $\GameName \in \PG$ by iteratively computing, one at a
time, a sequence of $\alpha$-dominions $\DSet[0][\alpha], \DSet[1][\alpha],
\ldots \subseteq \PosSet$, for some player $\alpha \in \SetB \defeq \{ \PlrSym,
\OppSym \}$.
These, indeed, are portions of the two winning regions, $\WinSet[\PlrSym]$ and
$\WinSet[\OppSym]$, that need to be identified.
The idea here is to start from a weaker notion, called \emph{quasi dominion},
and then compose them until a dominion is obtained.
The name of the approach comes precisely from the fact that this composition is
computed by applying the following operation of \emph{promotion}: given two
quasi dominions $\QSet[1]$ and $\QSet[2]$ to which some priorities $\prtElm[1] <
\prtElm[2]$ of the same parity are assigned, $\QSet[1]$ is combined with
$\QSet[2]$ by promoting the former to the priority of the latter.

Similarly to a dominion, a quasi dominion is a set of positions over which one
of the two players, called the leading player, has a strategy defined on that
set, whose induced plays, if infinite, are winning for that player.
As opposed to dominions, however, some of these plays may be finite, since the
opponent may have the possibility to escape from those positions towards a
different part of the game, hoping for a better outcome.

\begin{definition}[Quasi Dominion]
  \label{def:qsidom}
  A set of positions $\QSet \subseteq \PosSet$ is a \emph{quasi
  $\alpha$-dominion}, for some player $\alpha \in \{ \PlrSym, \OppSym \}$, if
  there exists an $\alpha$-strategy $\strElm[\alpha] \in
  \StrSet[][\alpha](\QSet)$, called \emph{$\alpha$-witness for $\QSet$}, such
  that, for all $\dual{\alpha}$-strategies $\strElm[\dual{\alpha}] \in
  \StrSet[][\dual{\alpha}](\QSet)$ and positions $\posElm \in \QSet$, the
  induced play $\pthElm = \playFun((\strElm[\PlrSym], \strElm[\OppSym]),
  \posElm)$, satisfies
  $\prtFun(\pthElm) \equiv_{2} \alpha$, if infinite.
\end{definition}

The usefulness of the above concept, in addition to the property of being suitably
composable, resides in the fact that quasi dominions are closed under inclusion.
Thus, when a closed subset of a quasi dominion is found, a dominion is identified.
Notice that, differently from the definition given in~\cite{BDM16}, no constraint is
given on the finite plays, since the type of quasi dominions of~\cite{BDM16} are not
closed under inclusion.
However, such a constraint is implicitly stated in Definition~\ref{def:regfun}
below.

\begin{theorem}[Induced Dominion]
  \label{thm:inddom}
  Let $\alpha \in \{ \PlrSym, \OppSym \}$ be a player, $\QSet \subseteq \PosSet$
  a quasi $\alpha$-dominion, $\strElm \in \StrSet[\alpha](\QSet)$ one of its
  $\alpha$-witnesses, and $\DSet \subseteq \QSet$ a subset such that
  $\strElm(\posElm) \in \DSet$, if $\posElm \!\in\! \PosSet[\alpha]$, and
  $\MovRel(\posElm) \!\subseteq\! \DSet$, otherwise, for all positions $\posElm
  \!\in\! \DSet$.
  Then, $\DSet$ is an $\alpha$-dominion.
\end{theorem}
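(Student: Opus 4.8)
The plan is to exhibit the restriction of the witness $\strElm$ to $\DSet$ as an $\alpha$-strategy certifying that $\DSet$ is an $\alpha$-dominion, and to argue that the two closure hypotheses on $\DSet$ upgrade the \emph{conditional} guarantee of a quasi dominion (plays are winning \emph{if} infinite) to the \emph{unconditional} guarantee of a dominion (plays are infinite \emph{and} winning).

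First I would verify that $\strElm$ restricted to $\DSet \cap \PosSet[\alpha]$ lies in $\StrSet[\alpha](\DSet)$: the hypothesis $\strElm(\posElm) \in \DSet$ for $\alpha$-positions $\posElm \in \DSet$ guarantees that the restricted map sends $\DSet$ into $\DSet$, and it remains compatible with $\MovRel$ since $\strElm$ already is. Then I would fix an arbitrary opponent strategy $\strElm[\dual{\alpha}] \in \StrSet[\dual{\alpha}](\DSet)$ and a starting position $\posElm \in \DSet$, and show by induction on $i$ that every position $(\pthElm)_{i}$ of the play $\pthElm$ induced from $\posElm$ by $\strElm$ and $\strElm[\dual{\alpha}]$ lies in $\DSet$: the base case is $(\pthElm)_{0} = \posElm \in \DSet$, and in the step either $(\pthElm)_{i} \in \DSet \cap \PosSet[\alpha]$, whence $(\pthElm)_{i+1} = \strElm((\pthElm)_{i}) \in \DSet$ by the first hypothesis, or $(\pthElm)_{i} \in \DSet \cap \PosSet[\dual{\alpha}]$, whence $(\pthElm)_{i+1} \in \MovRel((\pthElm)_{i}) \subseteq \DSet$ by the second. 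The crux of the whole argument is exactly this observation: the two closure conditions turn $\DSet$ into a trap that neither player's moves can escape, so that, because both strategies are defined at every position of $\DSet$ and the arena has no sinks, the maximal play $\pthElm$ never gets stuck and is therefore infinite. This forced infinitude is what converts the quasi dominion into a dominion, and it is the step I regard as the heart of the proof rather than the routine induction itself.

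It then remains to invoke the quasi-dominion guarantee. Since $\pthElm$ stays inside $\DSet \subseteq \QSet$ and follows $\strElm$ at every $\alpha$-position, it is an infinite, $\strElm$-consistent play in $\QSet$, so Definition~\ref{def:qsidom} yields $\prtFun(\pthElm) \equiv_{2} \alpha$; as $\posElm$ and $\strElm[\dual{\alpha}]$ were arbitrary, this is precisely the condition for $\DSet$ to be an $\alpha$-dominion. The one genuinely delicate point is this last appeal, since Definition~\ref{def:qsidom} quantifies opponent strategies over $\StrSet[\dual{\alpha}](\QSet)$ rather than $\StrSet[\dual{\alpha}](\DSet)$; to bridge the gap I would note that $\pthElm$ coincides with the play produced by $\strElm$ together with any extension of $\strElm[\dual{\alpha}]$ to $\QSet$ that reproduces, on the opponent positions actually visited, the successors occurring in $\pthElm$ (each a legitimate move, as it lies in $\DSet$). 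Because $\pthElm$ never leaves $\DSet$, the extension's behaviour elsewhere is immaterial, so the guarantee transfers verbatim, and everything else reduces to a direct unfolding of the definitions of strategy and play.
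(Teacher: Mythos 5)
The paper states this theorem without any proof (neither inline nor in the appendices), so there is no paper proof to compare against; judged on its own, your argument is correct and is the natural one. The two closure hypotheses make $\DSet$ a trap from which neither player's prescribed moves can exit, forcing every induced play to be infinite, and your handling of the one genuinely delicate point --- that Definition~\ref{def:qsidom} quantifies over $\StrSet[\dual{\alpha}](\QSet)$ rather than $\StrSet[\dual{\alpha}](\DSet)$, bridged by extending an opponent strategy on $\DSet$ arbitrarily outside the positions actually visited --- is exactly what is needed to transfer the quasi-dominion guarantee and conclude that $\DSet$ is an $\alpha$-dominion.
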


Intuitively, the solution algorithms following this approach carry on the search
for a dominion by exploring a \emph{finite strict partial order} $\tuple
{\SttSet} {\isttElm} {\prec}$, whose elements, called \emph{states}, record
information about the quasi dominions computed up to a certain point.
In the \emph{initial state} $\isttElm$, the quasi dominions are initialised to
the sets of positions with the same priority.
At each step, a new quasi $\alpha$-dominion $\QSet$, for some player $\alpha \in
\SetB$, is \emph{extracted} from the current state $\sttElm$ and used to compute
a \emph{successor state} \wrt the order $\prec$, if $\QSet$ is \emph{open}, \ie,
it if is not an $\alpha$-dominion.
If, on the other hand, it is \emph{closed}, the search is over and $\QSet$ is
added to the portion of the winning region $\WinSet[\alpha]$ computed so far.

We start by describing a new priority-promotion algorithm that instantiates the
above partial order and serves as a basis for the hybrid approach presented
later in this section.
To do so, we first need to introduce few technical notions, all of which refer
to some fixed parity game $\GameName \in \PG$.
By $\PrtSet[\botSym] \defeq \PrtSet \cup \{ \botSym \}$ and $\PrtSet[\topSym]
\defeq \PrtSet \cup \{ \topZSym, \topOSym \}$ we denote the set of priorities in
$\GameName$ extended with the bottom symbol $\botSym$ and two top symbols
$\topZSym$ and $\topOSym$, one for each player.
The standard ordering $<$ on $\PrtSet$ is extended to these additional elements
in the natural way: $\botSym$ is the smallest element, while both $\topZSym$ and
$\topOSym$ are strictly greater than every other priority; we do not assume any
specific order between the two maximal elements, though, we consider $\topZSym$
even and $\topOSym$ odd.

The first step in the formalisation of the notion of state requires the concept
of \emph{promotion function}, which represents the backbone of the algorithm,
being the data structure to which the promotion operation is applied.
Intuitively, it is a partial function from positions to priorities that
over-approximates the priority function of the game.

\begin{definition}[Promotion Function]
  \label{def:prmfun}
  A \emph{promotion function} $\prmFun \in \PrmSet \defeq \PosSet \pto
  \PrtSet[\topSym]$ is a partial function such that $\prmFun(\posElm) \geq
  \prtFun(\posElm)$, for every position $\posElm \in \dom{\prmFun}$.
\end{definition}

In the following, we adopt the same notation as in~\cite{BDM18}.
Given a promotion function $\prmFun \in \PrmSet$ and a priority $\prtElm \in
\PrtSet$, we denote with $\prmFun[][(\sim \prtElm)]$, for $\sim\: \in \{ <,
\leq, \equiv_{2}, \geq, > \}$, the function obtained by restricting the domain
of $\prmFun$ to those positions $\posElm \in \dom{\prmFun}$ whose priority
$\prmFun(\posElm)$ satisfies the relation $\prmFun(\posElm) \sim \prtElm$, \ie,
$\prmFun[][(\sim \prtElm)] \defeq \prmFun \rst \set{ \posElm \in \dom{\prmFun}
}{ \prmFun(\posElm) \sim \prtElm }$, where $\rst$ is the standard operation of
domain restriction.
We may also use Boolean combinations of the above restrictions, as in
$\prmFun[][(\equiv_{2} \alpha) \wedge (\geq \prtElm)]$.
By $\HSet[\prmFun][\alpha] \defeq \dom{\prmFun[][(\equiv_{2} \alpha)]}$ we
denote the set of positions in $\prmFun$ with a priority congruent to $\alpha
\in \SetB$ and with $\HSet[\prmFun][\alpha, \prtElm] \defeq
\dom{\prmFun[][(\equiv_{2} \alpha) \wedge (\geq \prtElm)]}$ its subset with
priorities greater than or equal to $\prtElm$.

A state encodes information about the quasi dominions computed up to a certain
point of the computation.
To this end, we require all positions in a promotion function $\prmFun$ with
priority of parity $\alpha \in \SetB$, \ie, the set $\HSet[\prmFun][\alpha]$, to
form a quasi $\alpha$-dominion.
Moreover, the idea is to store all $\alpha$-dominions already identified by
associating them with the corresponding maximal priority $\topSym[\alpha]$.

\begin{definition}[Quasi-Dominion Function]
  \label{def:qsidomfun}
  A \emph{quasi-dominion function} $\qdomFun \in \QDomSet \subseteq \PrmSet$ is
  a promotion function satisfying the following conditions, for every $\alpha
  \in \SetB$:
  \begin{inparaenum}[1)]
    \item\label{def:qsidomfun(qsi)}
      the set $\HSet[\qdomFun][\alpha]$ is a quasi $\alpha$-dominion;
    \item\label{def:qsidomfun(dom)}
      the set $\qdomFun[][-1](\topSym[\alpha])$ is an $\alpha$-dominion.
  \end{inparaenum}
\end{definition}

An important property of dominions is that the extension of an $\alpha$-dominion $\QSet$
by means of its $\alpha$-attractor is still an $\alpha$-dominion.
This property, however is not enjoyed by arbitrary quasi dominions.
Indeed, there may even be cases where the $\alpha$-attractor of a quasi
$\alpha$-dominion is a $\dual{\alpha}$-dominion.
Moreover, to efficiently verify whether a quasi dominion is actually a dominion,
an explicit representation of one of its witnesses is usually required.
To overcome these complications, we consider a subclass of quasi dominions that
meets the following requirements:
\begin{inparaenum}[1)]
  \item
    the set $\escFun(\QSet, \strElm) \defeq \set{ \posElm \in \PosSet[\alpha]
    \cap \QSet }{ \strElm(\posElm) \not\in \QSet }$ of $\alpha$-positions, which
    leave a quasi $\alpha$-dominion $\QSet \subseteq \PosSet$ by following one
    of its $\alpha$-witnesses $\strElm \in \StrSet[][\alpha](\QSet)$, is a
    subset of the $\dual{\alpha}$-escape positions
    $\escFun[][\dual{\alpha}](\QSet)$;
  \item
    all these $\dual{\alpha}$-escape positions have priorities congruent to
    $\alpha$ and greater than the ones of the positions that can be attracted by $\alpha$ to $\QSet$.
\end{inparaenum}
The first requirement ensures that, to verify whether $\QSet$ is an
$\alpha$-dominion, it suffices to check for the emptiness of
$\escFun[][\dual{\alpha}](\QSet)$.
The second one, instead, can be exploited to regain closure under extension by
$\alpha$-attractor.

\begin{definition}[Region Function]
  \label{def:regfun}
  A \emph{region function} $\regFun \in \RegSet \subseteq \QDomSet$ is a
  quasi-dominion function satisfying the following conditions, for every $\alpha
  \in \SetB$:
  \begin{inparaenum}[1)]
    \item\label{def:qsidomfun(wit)}
      there exists an $\alpha$-witness $\strElm[\alpha] \in
      \StrSet[][\alpha](\HSet[\regFun][\alpha])$ for $\HSet[\regFun][\alpha]$
      such that $\escFun(\HSet[\regFun][\alpha, \prtElm], \strElm[\alpha])
      \subseteq \escFun[][\dual{\alpha}](\HSet[\regFun][\alpha])$, for all
      $\prtElm \in \rng{\regFun}$, with $\prtElm \equiv_{2} \alpha$;
    \item\label{def:qsidomfun(prt)}
      $\prtElm \leq \prtFun(\posElm) \equiv_{2} \alpha$, for all $\prtElm \in
      \rng{\regFun}$, with $\prtElm \equiv_{2} \alpha$, and $\posElm \in
      \escFun[][\dual{\alpha}](\HSet[\regFun][\alpha, \prtElm])$.
  \end{inparaenum}
\end{definition}

Notice that, every set $\HSet[\regFun][\alpha, \prtElm]$, with $\prtElm \in
\rng{\regFun}$, is a quasi $\alpha$-dominion, being a subset of the quasi
$\alpha$-dominion $\HSet[\regFun][\alpha]$.
Also, it is immediate to see that the priority function $\prtFun$ of a given
parity game $\GameName$ is always a region function.
Indeed, it is trivially a promotion function.
Moreover, the positions with a priority of parity $\alpha$, \ie,
$\HSet[\prtFun][\alpha]$, form a quasi $\alpha$-dominion with $\alpha$-witness
any strategy that always chooses to remain inside the set, if allowed by the
move relation.
Thus, it is a quasi-dominion function as well.
Finally, since $\HSet[\prtFun][\alpha, \prtElm]$ cannot contain positions of
parity $\dual{\alpha}$ and thanks to the way the $\alpha$-witness is chosen, it
is clear that $\prtFun$ also satisfies the conditions of
Definition~\ref{def:regfun}.

At this point, we have the technical tools to introduce the \emph{search space}
that instantiates the finite strict partial order described in the intuitive
explanation of the approach.
In particular, to account for the current status of the search of a dominion in
a game $\GameName$, we define a \emph{state} $\sttElm$ as a pair, comprising a
region function $\regFun$ and a priority $\prtElm$, with the idea that
\begin{inparaenum}[1)]
  \item
    all quasi $\alpha$-dominions computed so far are contained in
    $\HSet[\regFun][\alpha, \qprtElm]$, for some $\qprtElm > \prtElm$,
  \item
    the current quasi dominion to focus on is contained in $\regFun$ at priority
    $\prtElm$, and
  \item
    all positions with priorities smaller than or equal to $\prtElm$ correspond
    to the portion of the game that has still to be processed.
\end{inparaenum}
The initial state is composed of the priority function $\prtFun$ of the game and
its maximal priority $\prtFun(\GameName)$.
Finally, we assume that a state $\sttElm[1]$ is lower than another state
$\sttElm[2]$ \wrt the partial order relation $\prec$, if the set of unprocessed
positions in $\sttElm[1]$ is a subset of those in $\sttElm[2]$.

\begin{definition}[Search Space]
  \label{def:srcspc}
  A \emph{search space} is a tuple $\SName \defeq \tuple {\SttSet} {\isttElm}
  {\prec}$, whose three components are defined as follows:
  \vspace{-0.5em}
  \begin{enumerate}
    \item
      $\SSet \subseteq \RegSet \times \PrtSet[\bot]$ is the set of all pairs
      $\sttElm \defeq (\regFun, \prtElm)$, called \emph{states}, where
      $\dom{\regFun} = \PosSet$;
      for every state $\sttElm \in \SttSet$, we set
      \begin{inparaenum}[(i)]
        \item
          $\alpha_{\sttElm} \defeq \prtElm \bmod{2}$,
        \item
          $\HSet[\sttElm][\alpha] \defeq \HSet[\regFun][\alpha]$,
        \item
          $\HSet[\sttElm][\alpha, \qprtElm] \defeq \HSet[\regFun][\alpha,
          \qprtElm]$,
        \item
          $\RSet[\sttElm] \defeq \regFun[][-1](\prtElm)$, and
        \item
          $\LSet[\sttElm] \defeq \dom{\regFun[][(\leq \prtElm)]}$;
      \end{inparaenum}
    \item
      $\isttElm \defeq (\prtFun, \prtFun(\GameName))$ is the \emph{initial
      state};
    \item
      $\sttElm[1] \prec \sttElm[2]$ if either $\LSet[{\sttElm[1]}] \subset
      \LSet[{\sttElm[2]}]$ or $\LSet[{\sttElm[1]}] = \LSet[{\sttElm[2]}]$ and
      $\prtElm[{\sttElm[1]}] < \prtElm[{\sttElm[2]}]$.
  \end{enumerate}
\end{definition}

Given a state $\sttElm \in \SttSet$, we refer to $\LSet[\sttElm]$ as the
\emph{local area}, \ie, the set of unprocessed positions yet to be analysed.
This also includes the quasi $\alpha_{\sttElm}$-dominion $\RSet[\sttElm]$,
called \emph{region}, on which the next step of the search will focus.
The two quasi dominions $\HSet[\sttElm][\PlrSym]$ and $\HSet[\sttElm][\OppSym]$
partition the entire set of positions in the game, while
$\HSet[\sttElm][\PlrSym, \qprtElm]$ and $\HSet[\sttElm][\OppSym, \qprtElm]$
represent the portions of these quasi dominions to which the region function
has assigned a priority at least equal to $\qprtElm \in \PrtSet$.
Notice that the pseudo-priority $\bot$ is used to indicate the situation where
all positions have been processed, which corresponds to an empty local area.
\begin{wrapfigure}[14]{R}{0.532\textwidth}
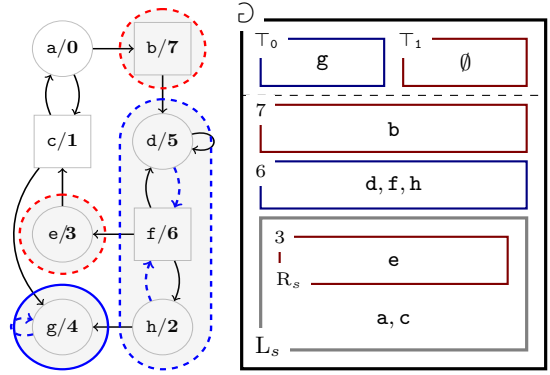

  \scalebox{0.75}[0.7]{\figexma}
  \hspace{-1em}
  \scalebox{1}[1]{\figsttspcexp}
  \caption{\label{fig:sttspcexp} \small A game and a corresponding state
   representation.}
  \vspace{-0em}
\end{wrapfigure}

\indent

To exemplify the above notions, consider the game depicted in
Figure~\ref{fig:sttspcexp}, where circled shaped positions belong to
Player~$\PlrSym$ and square shaped ones to Player~$\OppSym$.
Clearly, $\gSym$ and $\hSym$ are won by Player~$\PlrSym$, while the rest of the
game is won by the opponent.
At the state $\sttElm = (\regFun, 3)$, where $\regFun = \{ \aSym \!\mapsto\! 0;
\cSym \!\mapsto\! 1; \eSym \!\mapsto\! 3; \dSym, \fSym, \hSym \!\mapsto\! 6;
\allowbreak \bSym \!\mapsto\! 7; \gSym \!\mapsto\! \topZSym \}$, the local area
$\LSet[\sttElm]$ contains the positions $\aSym$, $\cSym$, and $\eSym$.
Of these only $\eSym$ is part of the current region $\RSet[\sttElm]$.
The quasi $\PlrSym$-dominion $\HSet[\sttElm][\PlrSym]$ contains the positions
$\aSym$, $\dSym$, $\fSym$, $\hSym$, and $\gSym$, while the quasi
$\OppSym$-dominion $\HSet[\sttElm][\OppSym]$ takes the remaining ones, namely
$\bSym$, $\cSym$, and $\eSym$.
Position $\gSym$ forms a $\PlrSym$-dominion on its own, represented in the
picture by the solid closed line.

Apart from this position, all the other ones are contained in open quasi
dominions, indicated, instead, by the dashed closed lines.
For example, the set $\regFun[][-1](6) = \{ \dSym, \fSym, \hSym \}$ is a quasi
$\PlrSym$-dominion, since, if Player $\OppSym$ decides to remain inside, the
adversary wins the play.
However, Player $\OppSym$ also has the choice to escape from position $\fSym$
moving to $\eSym$, \ie, $\escFun[][1](\regFun[][-1](6)) = \{ \fSym \}$.
Similarly, $\escFun[][0](\regFun[][-1](7)) = \{ \bSym \}$.
Finally, notice that $\HSet[\sttElm][\PlrSym, 4] = \{ \dSym, \fSym, \gSym, \hSym
\}$ and $\HSet[\sttElm][\OppSym, 4] = \{ \bSym \}$.
During the exploration of the search space, a priority-promotion algorithm
typically traverses several types of states, some of which enjoy important
properties that need to be explicitly identified, as they are exploited during
the search for a dominion.
Given a player $\alpha \in \SetB$, we say that a state $\sttElm \in \SttSet$ is
\emph{$\alpha$-maximal}, if the quasi $\alpha$-dominion $\HSet[\sttElm][\alpha]
\setminus \LSet[\sttElm]$ is $\alpha$-maximal \wrt $\LSet[\sttElm]$, \ie, the
$\alpha$-attractor $\atrFun[][\alpha](\HSet[\sttElm][\alpha] \setminus
\LSet[\sttElm], \LSet[\sttElm])$ to $\HSet[\sttElm][\alpha] \setminus
\LSet[\sttElm]$ of positions from the local area $\LSet[\sttElm]$ is empty.
If $\sttElm$ is $\alpha$-maximal \wrt both players $\alpha \in \SetB$, we simply
say that it is \emph{maximal} and denote by $\SttSet[\MSym] \subseteq \SttSet$
the corresponding subset of states and by $\GameName[\sttElm] \defeq \GameName
\setminus \dom{\regFun[\sttElm][(> {\prtElm[\sttElm]})]}$ the induced subgame.
A maximal state $\sttElm$ is \emph{strongly maximal}, if the current region
$\RSet[\sttElm]$ is $\alpha_{\sttElm}$-maximal \wrt $\LSet[\sttElm]$.
By $\SttSet[\SSym] \subseteq \SttSet[\MSym]$ we denote the set of strongly
maximal states.
Recall that region $\RSet[\sttElm]$ of a state $\sttElm$ is contained in the
quasi $\alpha_{\sttElm}$-dominion
$\HSet[\sttElm][\alpha_{\sttElm}, {\prtElm[\sttElm]}]$. We say that $\sttElm$ is
\emph{open} if the opponent $\dual{\alpha}[\sttElm]$ can escape from
$\HSet[\sttElm][ \alpha_{\sttElm}, {\prtElm[\sttElm]}]$ starting from
$\RSet[\sttElm]$ using a single move, \ie, if
$\RSet[\sttElm] \cap
\escFun[][ {\dual{\alpha}[\sttElm]} ](\HSet[\sttElm][\alpha_{\sttElm},
{\prtElm[\sttElm]}]) \neq \emptyset$. In this case, the opponent may escape from
$\RSet[\sttElm]$ by either moving to the remaining portion of local area
$L_{\sttElm}\setminus\RSet[\sttElm]$ or to the quasi
$\dual{\alpha}[\sttElm]$-dominion
$\HSet[\sttElm][ {\dual{\alpha}[\sttElm]}, {\prtElm[\sttElm]} ]$.  The state is
said to be \emph{closed}, otherwise.
For technical convenience, a state with an empty region is always considered
open.
Finally, a closed state $\sttElm$ is \emph{promotable}, if it
$\dual{\alpha}[\sttElm]$-maximal and $\RSet[\sttElm]$ is
$\alpha_{\sttElm}$-maximal \wrt $\LSet[\sttElm]$.
By $\SttSet[\PSym] \subseteq \SttSet$ we denote the set of promotable states.

The main function $\solFun$ of the new priority-promotion-based approach, called \emph{recursive priority
promotion} (\RPP, for short), is reported in Algorithm~\ref{alg:expsol}.
The auxiliary function $\NxtFun$ and the two procedures $\MaxFun$ and
$\PrmFun$ are, instead, reported in Appendix~\ref{app:c}.
The function $\solFun$ assumes the input state $\sttElm$ to be maximal, \ie,
$\sttElm \in \SttSet[\MSym]$.
At Line~1 it checks if there are still unprocessed positions in the game, namely
if the priority of the current state is different from $\bot$.
If this is the case, Line~2 maximises the region of the current state, namely
$\RFun[\sttElm] \defeq \regFun[][-1](\prtElm)$, by computing its
$\alpha_{\sttElm}$-attractor, so that the resulting set is
$\alpha_{\sttElm}$-maximal and, therefore, $\sttElm$ becomes strongly maximal,
\ie, $\sttElm \in \SttSet[\SSym]$.

\begin{wrapfigure}[16]{R}{0.358\textwidth}
  \vspace{-2.5em}
  \algsolexpl
  \vspace{-0em}
\end{wrapfigure}
\indent
For convenience, we abbreviate the update of some component in a state
$\sttElm$, say component $\RFun[\sttElm]$ for instance, simply as
$\RFun[\sttElm] \gets \mthelm{exp}$, for some expression $\mthelm{exp}$.
Therefore, the instruction at Line~2 updates the state $\sttElm$ by replacing
the original region $\RFun[\sttElm]$ with $\atrFun[ {\GameName[\sttElm]}
][\alpha_{\sttElm}][ {\RFun[\sttElm]} ]$ within the state.
If the resulting state $\sttElm$ is closed, it is also promotable, \ie,
$\sttElm \in \SttSet[\PSym]$, being maximal by hypothesis, and, therefore, a
promotion can be applied at Line~8, by means of a call to procedure $\PrmFun$.
If, instead, $\sttElm$ is open, which means that the opponent can escape from
$\RFun[\sttElm]$ moving outside the quasi $\alpha_{\sttElm}$-dominion
$\HSet[\sttElm][\alpha_{\sttElm}, {\prtElm[\sttElm]}]$, the algorithms proceeds
to analyse the part of the game still unprocessed.
To do this, we first compute the next state by means of $\NxtFun(\sttElm)$,
which simply identifies the next priority to consider, namely the maximal
priority of the unprocessed positions.
The resulting state is then given as input to the recursive call at Line~4.
Once the recursive call completes, the state is updated with the new region
function returned by the call.
The new state $\sttElm$ is such that the local area $\LSet[\sttElm]$ coincides
with $\RSet[\sttElm]$, since the recursive call ends after analysing all the
previously unprocessed positions.
As a consequence, either the opponent cannot escape from $\RFun[\sttElm]$
anymore or it can only move to its own quasi dominion $\HSet[\sttElm][
{\dual{\alpha}[\sttElm]}, {\prtElm[\sttElm]} ]$.
Line~5 checks which one of the two possibilities occurs.
In the first case, the new state $\sttElm$ is closed, hence
$\dual{\alpha}[\sttElm]$-maximal.
Moreover, since $\LSet[\sttElm] = \RSet[\sttElm]$, the region $\RFun[\sttElm]$
cannot attract any other positions and is, therefore,
$\alpha_{\sttElm}$-maximal.
This means that $\sttElm$ is promotable, \ie, $\sttElm \in \SttSet[\PSym]$, and
Line~7 promotes the region within the quasi $\alpha_{\sttElm}$-dominion.
If, on the other hand, $\sttElm$ is still open, then the opponent can escape to
$\HSet[\sttElm][ {\dual{\alpha}[\sttElm]}, {\prtElm[\sttElm]} ]$ from some
positions in $\RFun[\sttElm]$.
This means that the current state is not $\dual{\alpha}[\sttElm]$-maximal and
Line~6 fixes this by calling the procedure $\MaxFun$.
The aim of this function is to reestablish maximality of the quasi dominions
$\HSet[\sttElm][ \alpha_{\sttElm}, {\prtElm[\sttElm]}]$ and $\HSet[\sttElm][
{\dual{\alpha}[\sttElm]}, {\prtElm[\sttElm]} ]$ associated with the state
$\sttElm$.
This is done by attracting positions from the current region $\RFun[\sttElm] =
\LFun[\sttElm]$.
The surviving positions in $\RFun[\sttElm]$, if any, need not form a quasi
$\alpha_{\sttElm}$-dominion anymore and are set by $\MaxFun$ to their original
priority according to the priority function $\prtFun$ of the game.
In any case, when the computation reaches Line~9, whether coming from Line~6,
Line~7 or Line~8, the state $\sttElm$ is maximal, \ie, $\sttElm \in
\SttSet[\MSym]$, and a second, and final, recursive call is performed on
$\sttElm$ to process the remaining positions in $\LFun[\sttElm]$, if any.
We refer to Appendix~\ref{app:c} for the detailed description of the auxiliary function $\NxtFun$ and the two procedures $\MaxFun$ and $\PrmFun$.



\subsection{Parys' Algorithm}
\label{sub:parys}
In order to obtain a quasi-polynomial time priority-promotion-based solution
procedure, we entangle the algorithm of the previous subsection with Parys'
idea~\cite{Par19} to suitably truncate the recursion tree.
Naturally, cutting some of the recursive calls may prevent us from deciding the
winner for some of the positions with certainty.
These intermediate results are thus \emph{undetermined} (we use a function
$\undFun$, read `undetermined', to refer to these results).
Parys' contribution was to design the truncation in a way that offers bounded
guarantees, namely that the undetermined sets contain all small dominions of one player and
do not intersect with small dominions of the other.
The \emph{bounds} up to which these limited guarantees hold are shed quickly in
the call tree: most of the calls are made with half precision (meaning that one
of the bounds is halved) and only one is made with full precision, meaning
that both bounds are kept.
A first step in the integration of Parys' approach with Priority Promotion is to
formulate it in the same terms and to 
introduce the
notation needed when hybridising the approaches.
To this end, we assume $\USet[\sttElm] \defeq
\undFun[\sttElm][-1](\prtElm[\sttElm])$ is the set of undetermined positions at
a certain stage $\sttElm$ of the search.
We require that $\USet[\sttElm]$ satisfies the following property: it must
contain all the $\dual{\alpha}$-dominions of size no greater than a
given bound $\bElm[\alpha]$ and it cannot intersect any $\alpha$-dominion of
size no greater than 
a second given bound $\bElm[\dual{\alpha}]$.

Just as pure Priority Promotion does not use undetermined positions, Parys'
approach does not use regions (beyond the attractor of the nodes with highest
priority).
Consequently, little happens to the region functions in our representation of
Parys' algorithm: positions that are added to $\USet[\sttElm]$ are removed from
the region function, and when $\USet[\sttElm]$ is destroyed, they are added
(with their native priorities) back to $\regFun[\sttElm]$.

\begin{wrapfigure}[21]{R}{0.436\textwidth}
  \vspace{-1.20em}
  \algsolparl
  \algsolparhalf
  \vspace{-0em}
\end{wrapfigure}

We only outline the principles here together with slight generalisations of the
standard lemmas employed in each step, required later for hybrid approach.
The main function $\solFun$ and the halved solver $\hsolFun$ are reported in Algorithm~\ref{alg:parsol} and Algorithm~\ref{alg:parsolhalf}. The procedure $\MaxFun$ and the auxiliary functions $\NxtFun$, $\HalfFun$, and $\UndFun$ are, instead, reported in Appendix~\ref{app:c}.
In $\hsolFun$, the attractor of the positions with maximal priority is removed,
and the recursive call of $\solFun$ with half precision adds all small dominions $\leq \floor{
\frac{\bSym[ {\dual{\alpha}[\sttElm]} ]}{2} }$ (but no part of any small region
($\leq \bSym[\alpha_{\sttElm}]$) of player $\alpha$) of the remaining subgame to
$\USet[\sttElm]$.
Lemma \ref{lem:no-p} provides that, if such an $\overline\alpha_s$ dominion was contained in the game before removing $\RSet[\sttElm]$ then a sub-dominion of it still remains after $\RSet[\sttElm]$ is removed.

\begin{restatable}{lemma}{noattraction}
  \label{lem:noattraction}
If a dominion $D$ for Player $\alpha$ in a
game $\GameName$ does not intersect
with a set $S$ of nodes, then it does not intersect with
$\atrFun_{\GameName}^{\overline{\alpha}}(S)$ either.
\end{restatable}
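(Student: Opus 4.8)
The plan is to exploit the fact that an $\alpha$-dominion $D$ is a \emph{trap} for the opponent $\overline{\alpha}$: from inside $D$, player $\overline{\alpha}$ can never force the play to leave $D$. Concretely, a witness $\strElm \in \StrSet[\alpha](D)$ keeps every $\alpha$-position $\posElm \in D$ inside $D$, i.e.\ $\strElm(\posElm) \in D$, while every $\overline{\alpha}$-position $\posElm \in D$ satisfies $\MovRel(\posElm) \subseteq D$, so that $\overline{\alpha}$ has no escaping move at all; this is exactly the closure property recorded in the hypotheses of Theorem~\ref{thm:inddom}. Since $\atrFun_{\GameName}^{\overline{\alpha}}(S)$ is the set of positions from which $\overline{\alpha}$ can force reaching $S$, and $S \cap D = \emptyset$, the claim amounts to saying that the opponent cannot attract any position of the trap $D$ to a target lying outside it.

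To turn this into a formal argument, I would induct on the rank of the least fixpoint defining the attractor. Write $\atrFun_{\GameName}^{\overline{\alpha}}(S) = \bigcup_{i \geq 0} A_i$, where $A_0 \defeq S$ and $A_{i+1}$ is obtained from $A_i$ by adjoining every $\overline{\alpha}$-position with at least one move into $A_i$ together with every $\alpha$-position all of whose moves land in $A_i$. The induction hypothesis is $D \cap A_i = \emptyset$, and the base case $D \cap A_0 = D \cap S = \emptyset$ is the assumption. For the inductive step, suppose towards a contradiction that some $\posElm \in D \cap (A_{i+1} \setminus A_i)$. If $\posElm \in \PosSet[\overline{\alpha}]$, then $\posElm$ has a move into $A_i$; but by the trap property $\MovRel(\posElm) \subseteq D$, so that move lands in $D \cap A_i$, contradicting the hypothesis. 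If instead $\posElm \in \PosSet[\alpha]$, then $\MovRel(\posElm) \subseteq A_i$; yet the witness move $\strElm(\posElm) \in \MovRel(\posElm)$ lies in $D$, so again $\strElm(\posElm) \in D \cap A_i$, a contradiction. Hence $D \cap A_{i+1} = \emptyset$, and taking the union over all $i$ yields $D \cap \atrFun_{\GameName}^{\overline{\alpha}}(S) = \emptyset$.

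An equivalent, more compact route is to observe that the trap property says precisely that $\PosSet \setminus D$ is $\overline{\alpha}$-maximal, i.e.\ $\atrFun_{\GameName}^{\overline{\alpha}}(\PosSet \setminus D) = \PosSet \setminus D$; since $S \subseteq \PosSet \setminus D$, monotonicity of the attractor gives $\atrFun_{\GameName}^{\overline{\alpha}}(S) \subseteq \PosSet \setminus D$, which is the claim. Either way, the only real content lies in establishing the trap property, and I expect this to be the main obstacle: the bare wording of the dominion definition only constrains plays that stay inside $D$, so one must argue that a genuine $\alpha$-dominion cannot contain an opponent position with an escaping move (equivalently, appeal to the closure condition underpinning Theorem~\ref{thm:inddom}). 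Once the trap property is secured, the rank induction and its two ownership cases are entirely routine.
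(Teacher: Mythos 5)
Your proof is correct and follows essentially the same route as the paper's: both arguments hinge on the fact that an $\alpha$-dominion is a trap for the opponent (the witness keeps $\alpha$'s moves inside $D$, and opponent positions in $D$ admit no escaping move), from which non-intersection with the $\overline{\alpha}$-attractor of $S$ follows. The paper compresses the second step into a one-line play-based contradiction, while you render it as a rank induction on the attractor fixpoint (equivalently, $\overline{\alpha}$-maximality of the complement of $D$ plus monotonicity); your version is if anything more rigorous, and you correctly isolate the trap property as the only real content — a point the paper's own proof simply asserts "from the definition of a dominion."
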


For $\alpha \equiv_2 p$, a $p$-Region is a quasi $\alpha$-dominion, such that
(1) all nodes have priority $\leq p$, and (2) all escape positions have priority
$p$.

\begin{restatable}{lemma}{nop}
  \label{lem:no-p}
If the highest priority in a non-empty dominion $D$ for player $\alpha$ is $p$
and it intersects with a $q$-region $R$ for $q \not\equiv_2 \alpha$ and $q \geq
p$, 
then
$D$
contains a non-empty sub-dominion that does not intersect with
$\atrFun_{\GameName}^{\overline{\alpha}}(R)$.
\end{restatable}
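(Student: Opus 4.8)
\emph{The plan is to reduce the claim to Lemma~\ref{lem:noattraction}.} It suffices to exhibit a non-empty $\alpha$-dominion $T \subseteq D$ with $T \cap R = \emptyset$: for such a $T$, Lemma~\ref{lem:noattraction} (applied with the set $S \defeq R$) gives $T \cap \atrFun_{\GameName}^{\overline{\alpha}}(R) = \emptyset$, so $T$ is exactly a non-empty sub-dominion of $D$ avoiding $\atrFun_{\GameName}^{\overline{\alpha}}(R)$. Hence the whole difficulty is to build $T$, and the quantity $\atrFun_{\GameName}^{\overline{\alpha}}(R)$ is only touched at the very end, through the black box of Lemma~\ref{lem:noattraction}.

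To construct $T$, I would pass to the subgame $\GameName' \defeq \GameName \setminus (\PosSet \setminus D)$ induced by $D$. Since $D$ is an $\alpha$-dominion it is an $\overline{\alpha}$-trap (every move of $\overline{\alpha}$ out of a position of $D$ stays in $D$), so $\GameName'$ is a genuine subgame, and it is won by $\alpha$ at every position. Inside $\GameName'$ set $T \defeq D \setminus \atrFun_{\GameName'}^{\overline{\alpha}}(R \cap D)$, the largest $\overline{\alpha}$-trap of $\GameName'$ disjoint from $R \cap D$. By construction $T$ is again a legal subgame ($\overline{\alpha}$-positions of $T$ keep all their moves in $T$, while $\alpha$-positions keep at least one), and $T \cap R = T \cap (R \cap D) = \emptyset$, as required.

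\emph{The core of the argument is that $T$ is a non-empty $\alpha$-dominion, and this is where the priority hypothesis is used.} Because $R$ is a $q$-region with $q \not\equiv_{2} \alpha$, every escape position of $R$ — every position of $\escFun[][\alpha](R)$, from which the opponent $\alpha$ can leave $R$ in a single move — has priority exactly $q$, whereas every position of $D$ has priority at most $p$. In the regime in which the lemma is applied, $p$ is the dominant priority of $D$, hence $p \equiv_{2} \alpha$ and therefore $p < q$; so no position of $R \cap D$ is an escape position of $R$. The consequence is a \emph{confinement property}: once a play of $\GameName'$ enters $R \cap D$ it can never leave $R$. Indeed, at an $\alpha$-position of $R \cap D$ all moves stay in $R$, and, as $D$ is an $\overline{\alpha}$-trap, inside $\GameName'$ they stay in $D$, hence in $R \cap D$; at an $\overline{\alpha}$-position an $\overline{\alpha}$-witness $\strElm$ of the quasi $\overline{\alpha}$-dominion $R$ keeps the play in $R$, and closure of $D$ keeps it in $D$. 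Since $R$ is a quasi $\overline{\alpha}$-dominion, any play trapped in $R \cap D$ is won by $\overline{\alpha}$.

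This confinement property yields both non-emptiness and the dominion claim by a single contradiction. Suppose $\overline{\alpha}$ won some position $v$ in the subgame induced by $T$ (in particular this covers the degenerate case $T = \emptyset$, where $\overline{\alpha}$ would win everywhere in $\GameName'$). Compose $\overline{\alpha}$'s strategies: play the $T$-subgame winning strategy while the play remains in $T$; whenever $\alpha$ leaves $T$ into $\atrFun_{\GameName'}^{\overline{\alpha}}(R \cap D)$, play the attractor strategy towards $R \cap D$; and once in $R \cap D$, play the witness $\strElm$. By confinement the last phase, if ever entered, traps the play in $R \cap D$ forever with an $\overline{\alpha}$-winning outcome, and otherwise the play stays in $T$ and is $\overline{\alpha}$-winning by assumption; either way $\overline{\alpha}$ wins $v$ in $\GameName'$, contradicting that $\GameName'$ is entirely won by $\alpha$. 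Hence $\alpha$ wins the whole $T$-subgame and $T$ is non-empty, and since $T$ is an $\overline{\alpha}$-trap of $\GameName$, this subgame-winning strategy makes $T$ an $\alpha$-dominion of $\GameName$, completing the reduction. \emph{The main obstacle I anticipate is precisely the priority bookkeeping in this last step:} one must argue rigorously that the escape structure of the $q$-region forbids $\alpha$ from leaving $R \cap D$ inside $\GameName'$ — which is clean when $p < q$ but needs a separate, more delicate treatment of the residual case $p = q$ (possible only when $D$ carries a transient highest priority of parity $\overline{\alpha}$, in which case each escape of $R$ passes through a priority-$q$ position and the outcome must be recovered from the parity of $q$) — and that each phase-switch of the composed strategy keeps the play inside $D$.
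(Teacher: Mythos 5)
Your overall skeleton is the same as the paper's: both proofs take the part of $D$ from which player $\overline{\alpha}$ cannot attract to $R$, show it is a non-empty $\alpha$-dominion disjoint from $R$, and finish with Lemma~\ref{lem:noattraction}. The difference is in the core argument that this set is non-empty and a dominion, and that is exactly where your proof has a genuine gap. Your contradiction rests entirely on the \emph{confinement property} (once a play enters $R \cap D$ it can never leave $R$), which you derive from the claim that ``$p$ is the dominant priority of $D$, hence $p \equiv_{2} \alpha$ and therefore $p < q$.'' This claim is not a consequence of the lemma's hypotheses: the highest priority of an $\alpha$-dominion may well have parity $\overline{\alpha}$ (it can occur only transiently, outside every cycle of the witness strategy), so $p \not\equiv_{2} \alpha$ and $p = q$ is a perfectly admissible instance — and it is not a fringe case, since in the paper's own use of the lemma the region $R$ contains positions of priority $q$ equal to the maximal priority of the subgame, and the dominion $D$ may contain such positions. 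When $p = q$, positions of $R \cap D$ can be escape positions of $R$, player $\alpha$ can leave $R$ from inside $D$, and your composed strategy's final phase no longer ``traps the play forever''; the ensuing play may oscillate between $R$ and $D \setminus R$, and your argument produces no winner for it. You acknowledge this residual case at the end but only gesture at its resolution, so the proof as written is incomplete.

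The missing piece is precisely what the paper's proof supplies, and it handles both cases uniformly: consider the lasso play induced by the composed $\overline{\alpha}$-strategy (attractor outside $R$, region witness inside $R$) against $\alpha$'s witness for $D$. Since the play stays in $D$, every priority it sees is at most $p \leq q$. Either the play eventually remains in $R$ forever, and the region witness makes $\overline{\alpha}$ win; or it exits $R$ infinitely often, hence visits escape positions of $R$ — all of priority $q$ — infinitely often, so $q$ is the highest priority seen infinitely often and, as $q \not\equiv_{2} \alpha$, player $\overline{\alpha}$ wins again. Both branches contradict $D$ being an $\alpha$-dominion, with no case split on whether $p < q$ or $p = q$. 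Replacing your confinement step by this recurrence-of-priority-$q$ argument (and keeping the rest of your construction, including the determinacy-based formulation of the complement of the attractor in the subgame $\GameName'$, which is a legitimate variant of the paper's strategy-relative attractor) would close the gap.
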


These dominions are then closed under attractor by a call of $\MaxFun$, until
$\sttElm$ does not change, and thus until no $\dual{\alpha}[\sttElm]$
dominion $\leq \lfloor \frac{b_{\dual{\alpha}[\sttElm]}}{2}\rfloor$ is left.
This guarantee is used in $\solFun$ to ensure that a dominion $\DSet$ of player
$\dual{\alpha}[\sttElm]$ is in $\USet[\sttElm]$ at the end of the function:
all $\dual{\alpha}[\sttElm]$ dominions returned in the recursive call of line 5
are  $> \lfloor \frac{b_{\dual{\alpha}[\sttElm]}}{2}\rfloor$.
Closing them under attractor through a call of $\MaxFun$ (line 6) leaves, by
Lemma \ref{lem:noattraction}, a (possibly empty) $\dual{\alpha}[\sttElm]$
dominion of size $\leq \lfloor \frac{b_{\dual{\alpha}[\sttElm]}}{2}\rfloor$,
when $\hsolFun$ is called for the second time, and thus fully included in
$U_s$, by means of $\UndFun$, that simply updates the data structures $\regFun[\sttElm]$ and $\undFun[\sttElm]$.
After Parys' Solver is run (with full precision) on a game with maximal priority
$p_{\max}$ identified by the function $\NxtFun$, we have that $u^{-1}(p_{\max}+1)$ contains the winning region of player
$p_{\max} \mod 2$.

\subsection{A Hybrid Algorithm}
\label{sec:hybalg;sub:hybalg}

In our hybrid approach, we have to synthesise the use of regions and the use of
undetermined sets.
To formalise this intuition, a state for the hybrid algorithm needs to embed
quite a bit of additional information \wrt a simple state of \RPP.
It obviously contains both a region function $\regFun$, tracking the quasi
dominions already analysed, and the current priority $\prtElm$.
It also features an additional promotion function $\undFun$, used to maintain
the set of undetermined positions, which only satisfy the relative guarantees
mentioned above, the priority of the caller used in the update of this function,
and two numbers $\bElm[\PlrSym]$ and $\bElm[\OppSym]$, representing the two
bounds \wrt which the guarantees are expressed.
The initial state, from which the search starts, contains, as is the case in the
exponential algorithm, the priority function $\prtFun$ and the maximal priority
$\prtFun(\GameName)$.
In addition, the two bounds are both set to the number of positions in the game,
while the accessory function $\undFun$ is empty.
For technical convenience, we set the caller priority to $\topZSym$.
Finally, the ordering between the states is again defined in terms of the sets
of unprocessed positions in the two states.

\begin{definition}[Hybrid Search Space]
  \label{def:hybsrcspc}
  A \emph{hybrid search space} is a tuple $\SName \defeq \tuple {\SttSet}
  {\isttElm} {\prec}$, whose three components are defined as follows:
  \begin{enumerate}
    \item
      $\SttSet \subseteq (\RegSet \times \PrtSet[\bot]) \times (\PrmSet \times
      \PrtSet[\top] \times \SetN \times \SetN)$ is the set of all tuples
      $\sttElm \defeq ((\regFun, \prtElm), (\undFun, \cprtElm, \bElm[\PlrSym],
      \bElm[\OppSym]))$, called \emph{hybrid states}, where:
      \begin{enumerate}[a)]
        \item\label{def:hybsrcspc(par)}
          $\dom{\regFun} \!\cap\! \dom{\undFun} \!=\! \emptyset$, $\dom{\regFun}
          \!\cup\! \dom{\undFun} \!=\! \PosSet$, $\dom{(\regFun \!\cup\!
          \undFun)^{(> \prtElm) \land (< \cprtElm)}} \!=\! \emptyset$, $\prtElm
          \!<\! \cprtElm$;
        \item\label{def:hybsrcspc(und)}
          $\undFun[][-1](\topOSym) = \dom{\undFun[][(< \prtElm)]} = \emptyfun$
          and if $\cprtElm \neq \topZSym$ then $\undFun[][-1](\topZSym) =
          \emptyfun$;
        \item\label{def:hybsrcspc(esc)}
          $\dom{\undFun} \cap \escFun[][\dual{\alpha}](\HSet[\sttElm][\alpha,
          \qprtElm]) = \emptyset$, with $\HSet[\sttElm][\alpha, \qprtElm] \defeq
          \HSet[\regFun][\alpha, \qprtElm] \cup \HSet[\undFun][\dual{\alpha},
          \qprtElm]$, for all $\alpha \in \SetB$ and $\qprtElm \geq \prtElm$;
      \end{enumerate}
      for every state $\sttElm \in \SttSet$, we set
      \begin{inparaenum}[(i)]
        \item
          $\alpha_{\sttElm} \defeq \prtElm \bmod{2}$,
        \item
          $\HSet[\sttElm][\alpha] \defeq \HSet[\regFun][\alpha] \cup
          \HSet[\undFun][\dual{\alpha}]$,
        \item
          $\USet[\sttElm] \defeq \undFun[][-1](\prtElm)$,
        \item
          $\RSet[\sttElm] \defeq \regFun[][-1](\prtElm)$, and
        \item
          $\LSet[\sttElm] \defeq \dom{\regFun[][(\leq \prtElm)]}$;
      \end{inparaenum}
    \item
      $\isttElm \defeq ((\prtFun, \prtFun(\GameName)), (\emptyfun,
      \topSym[0], \card{\GameName}, \card{\GameName}))$ is the
      \emph{initial state};
    \item
      $\sttElm[1] \prec \sttElm[2]$ if either $\LSet[{\sttElm[1]}] \subset
      \LSet[{\sttElm[2]}]$ or $\LSet[{\sttElm[1]}] = \LSet[{\sttElm[2]}]$ and
      $\prtElm[{\sttElm[1]}] < \prtElm[{\sttElm[2]}]$.
  \end{enumerate}
\end{definition}

Intuitively, Item~\ref{def:hybsrcspc(par)} ensures that the set of positions in
the game is partitioned into two categories: \emph{(i)} those contained in the
region function $\regFun$, which are considered \emph{determined}, in the sense
that they belong to known quasi dominions; and \emph{(ii)} those contained in
the promotion function $\undFun$, which are \emph{undetermined}, since they form
sets that only satisfy the bounded guarantees.
Obviously the priority of the caller has to be higher than the current one and
no position can be associated with a priority between those two.
Moreover, since positions assigned to the two top pseudo-priorities must be
determined, as they form dominions, $\undFun$ cannot refer to those two values,
except for the outermost call, when $\cprtElm = \topZSym$.
In this case, indeed, any undetermined position is necessarily won by player
$\OppSym$, \ie, $\undFun[][-1](\topZSym) \subseteq \WinSet[\OppSym]$.
Moreover, all positions with priorities lower than the current one are still
unprocessed, therefore they cannot be undetermined.
Both these requirements are expressed by Item~\ref{def:hybsrcspc(und)}.
Finally, we need to ensure that a player cannot immediately escape from a set of
undetermined positions, as specified in Item~\ref{def:hybsrcspc(esc)}.
This property is crucial to maintain, after the update of the promotion function
$\undFun$, the implicit invariant stating that, if a strongly-maximal state is
closed, then it is promotable.

\begin{wrapfigure}[13]{R}{0.285\textwidth}
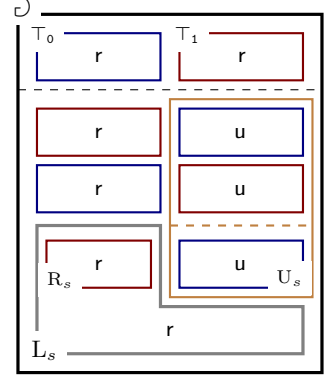

  \vspace{-3.00em}
  \scalebox{1}[1]{\figsttspchyb}
  \vspace{-0.5em}
  \caption{\label{fig:sttspchyb} \!\!\small The structure of a hybrid
    state.\!\!}
  \vspace{-0em}
\end{wrapfigure}

Figure~\ref{fig:sttspchyb} reports a graphical representation of the structure
of a hybrid state.
Most of the concepts and notation introduced for the states of \RPP have a
similar meaning and play a similar role for hybrid states.
In particular, given a hybrid state $\sttElm \in \SttSet$, the set
$\LSet[\sttElm]$ identifies the \emph{local area}, \ie, the set of positions yet
to analyse, while $\RSet[\sttElm]$ is the quasi $\alpha_{\sttElm}$-dominion,
called \emph{region}, included in $\LSet[\sttElm]$, which the algorithm is
currently focusing on.
Moreover, the two sets $\HSet[\sttElm][\PlrSym]$ and $\HSet[\sttElm][\OppSym]$
partition the game, while $\HSet[\sttElm][\PlrSym, \qprtElm]$ and
$\HSet[\sttElm][\OppSym, \qprtElm]$ represent the portions of these sets having
a priority, assigned either in $\regFun$ or in $\undFun$, at least equal to
$\qprtElm \in \PrtSet$.
As opposed to the previous notion of state, however, $\HSet[\sttElm][\PlrSym]$ and $\HSet[\sttElm][\OppSym]$ are not
necessarily quasi dominions, since they may include undetermined positions,
namely those contained in $\HSet[\uElm][\PlrSym, \qprtElm]$ and
$\HSet[\uElm][\OppSym, \qprtElm]$.
Only the two subsets $\HSet[{\regFun[\sttElm]}][\PlrSym]$ and
$\HSet[{\regFun[\sttElm]}][\OppSym]$, as well as their relativised versions
$\HSet[{\regFun[\sttElm]}][\PlrSym, \qprtElm]$ and
$\HSet[{\regFun[\sttElm]}][\OppSym, \qprtElm]$, are known to be quasi dominions.

\noindent
Given a player $\alpha \in \SetB$, we say that a hybrid state $\sttElm \in
\SttSet$ is \emph{$\alpha$-maximal}, if the quasi $\alpha$-dominion
$\HSet[\sttElm][\alpha] \setminus \LSet[\sttElm]$ is $\alpha$-maximal \wrt
$\LSet[\sttElm]$.
If $\sttElm$ is $\alpha$-maximal \wrt both players $\alpha \in \SetB$, we say
that it is \emph{maximal}.
We denote with $\SttSet[\MSym] \subseteq \SttSet$ the set of maximal hybrid
states and with $\GameName[\sttElm] \defeq \GameName \setminus
\dom{\regFun[\sttElm][(> {\prtElm[\sttElm]})] \cup \undFun[\sttElm]}$ the
induced subgame over the local area $\LSet[\sttElm]$.
A maximal hybrid state $\sttElm$ is \emph{strongly maximal}, if the current
region $\RSet[\sttElm]$ is $\alpha_{\sttElm}$-maximal \wrt $\LSet[\sttElm]$ and
the quasi $\dual{\alpha}[\sttElm]$-dominion $\HSet[\sttElm][
{\dual{\alpha}[\sttElm]} ] \setminus (\LSet[\sttElm] \cup \USet[\sttElm])$ is
$\dual{\alpha}[\sttElm]$-maximal \wrt $\LSet[\sttElm] \cup \USet[\sttElm]$.
By $\SttSet[\SSym] \subseteq \SttSet[\MSym]$ we denote the set of strongly
maximal hybrid states.
Again, we say that $\sttElm$ is \emph{open} if $\RSet[\sttElm] \cap \escFun[][
{\dual{\alpha}[\sttElm]} ](\HSet[\sttElm][\alpha_{\sttElm}, {\prtElm[\sttElm]}])
\neq \emptyset$, and we say that it is \emph{closed}, otherwise.
For technical convenience, we always consider a hybrid state with an empty
region open.
Finally, a closed hybrid state $\sttElm$ is \emph{promotable}, if it is
$\dual{\alpha}[\sttElm]$-maximal and $\RSet[\sttElm]$ is
$\alpha_{\sttElm}$-maximal \wrt $\LSet[\sttElm]$.
By $\SttSet[\PSym] \subseteq \SttSet$ we denote the set of promotable hybrid
states.
\\ \indent
The main functions of the \emph{hybrid priority-promotion algorithm} (\HPP, for short) reported in
Algorithm~\ref{alg:hybsol} combines the recursive priority-promotion
technique of Algorithm~\ref{alg:expsol} and the recursion-tree truncation idea
of Algorithm~\ref{alg:parsol} in a single approach. Again, the auxiliary functions and procedures are reported in Appendix~\ref{app:c}.
As for the \RPP, the main function $\solFun$ assumes the input state $\sttElm$
to be maximal, \ie, $\sttElm \in \SttSet[\MSym]$.
Line~1 checks whether
\begin{inparaenum}[(i)]
  \item
    there are no unprocessed positions in the game or
  \item
    one of the two bounds on the guarantees over the undetermined positions has
    reached threshold zero.
\end{inparaenum}
If one of these conditions is satisfied, the current region function
$\regFun[\sttElm]$ and the promotion function $\undFun[\sttElm]$ are returned
unmodified at Line~2, as no further progress can be achieved in the current
recursive call.
Otherwise, similarly to Parys' approach, the search for a dominion is split into
three phases:
\begin{inparaenum}[(i)]
  \item
    a first search with halved precision made by calling the auxiliary
    mutually-recursive procedure $\hsolFun$ (Line~3);
  \item
    a second search with full precision via a recursive call to $\solFun$ itself
    (Lines~4 to~8);
  \item
    a final search by means of $\hsolFun$, again with halved precision,
    conditioned to the actual progress obtained during the previous phase
    (Line~9).
\end{inparaenum}
Once these three phases terminate, the information about the undetermined
positions 
contained in the local area $\LSet[\sttElm]$ or in the
undetermined set $\USet[\sttElm]$ is suitably updated by function $\UndFun$
at Line~10.
\\ \indent
To discuss the guarantees and their effects in more detail, let us fix a small
dominion $\DSet$, with $\card{\DSet} \leq b_{\dual{\alpha}[\sttElm]}$.
The call to $\hsolFun$ at Line~3 modifies in-place the maximal state $\sttElm$
given as input into a strongly-maximal one such that
$\mathsf{dom}(\regFun^{(\leq p_s)})$ does no longer contain any tiny dominions
of player $\dual{\alpha}[\sttElm]$ of size
$\leq \lfloor \frac{b_{\dual{\alpha}[\sttElm]}}{2} \rfloor$.
Moreover, $\DSet' = \DSet \cap \mathsf{dom}(\regFun^{(\leq p_s)})$ is a dominion
in $\mathsf{dom}(\regFun^{(\leq p_s)})$, while $\DSet \setminus \DSet'$ has
been processed and added to $\mathsf{dom}(\undFun^{(\geq p_s)}) \cup
\mathsf{dom}(\regFun^{(> p_s)})$.
After that, at Line~4, the obtained state is locally recorded in order to
determine, later on, whether the second phase achieves any progress.
The algorithm then proceeds to analyse the remaining part of the game still
unprocessed.
To do so, the next state computed by $\NxtFun(\sttElm)$ is given as input to the
recursive call at Line~5.
Once the call completes, the state is updated with the two new functions
returned by the call.
At this point, the guarantee that all $\dual{\alpha}[\sttElm]$ dominions in
$\mathsf{dom}(\regFun^{(\leq p_s)})$ are larger than $\lfloor
\frac{b_{\dual{\alpha}[\sttElm]}}{2} \rfloor$ entails that, if $\DSet'$ is not
empty (and thus $\DSet$ completely processed), then a non-empty sub-dominion
$\DSet''$ of $\DSet'$ is part of the call.
As $\mathsf{dom}(\regFun^{(\leq p_s)})$ contains no tiny
$\dual{\alpha}[\sttElm]$ dominion, $\card{\DSet''} > \lfloor
\frac{b_{\dual{\alpha}[\sttElm]}}{2}\rfloor$, and
$\DSet''\setminus \DSet' \leq \lfloor
\frac{b_{\dual{\alpha}[\sttElm]}}{2}\rfloor$.
%

Depending on whether the state is closed or not, either a promotion or a
maximisation operation is performed (Lines~6 to 8), to ensure that the new state
is maximal.
If the middle phase has made some progress in the search, a last call to
$\hsolFun$ at Line~9 is performed, which again modifies in-place the current
state into a strongly-maximal one.
As $\DSet''\setminus \DSet'\leq \lfloor
\frac{b_{\dual{\alpha}[\sttElm]}}{2}\rfloor$, it is processed in $\hsolFun$.
If no progress occurred, instead, the current state is equal to the one
previously returned by the first call to $\hsolFun$ and, thus, strongly-maximal.
It also entails that $\DSet'$ was empty, and $\DSet$ therefore processed
completely in the fist call of $\hsolFun$.
In both cases, the state is fed to the function $\UndFun$, after which the
current call terminates.

The procedure $\hsolFun$ simply executes the main body of the \RPP algorithm by
making mutually-recursive calls to the $\solFun$ function (Line~5) with halved
precision, until no progress on the search for a dominion can be made.
As for the \RPP, the auxiliary function $\NxtFun$ identifies the next
priority to consider, and the promotion and maximisation procedures, $\PrmFun$ and $\MaxFun$, generalise the corresponding ones associated with \RPP. Hence, $\PrmFun$ applies a promotion while $\MaxFun$ makes a state $\dual{\alpha}$-maximal.
Similarly to Parys' algorithm, the $\HalfFun$ function halves the bound of the
opponent player $\dual{\alpha}[\sttElm]$, and, finally, the $\UndFun$ function, updates or reset the 'undetermined' set of positions.

At this point, by defining the winning regions of the players as $\WSet[\OppSym]
= \regFun^{-1}(\topOSym) \cup \undFun^{-1}(\topZSym)$ and $\WSet[\PlrSym] =
\PosSet \setminus \WSet[\OppSym]$, \ie, $\solFun(\GameName) \defeq
(\WSet[\PlrSym], \WSet[\OppSym])$, where $(\regFun, \undFun) \defeq
\solFun(\isttElm)$, we obtain a sound and complete solution algorithm, whose
time-complexity is quasi-polynomial, as we shall show in the next section.

\vspace{-1.25em}
\noindent
\begin{minipage}[t]{0.496\textwidth}
  \null
  \algsolqsipoll
\end{minipage}
\hspace{-0.5em}
\begin{minipage}[t]{0.504\textwidth}
  \null
  \algsolqsipolhalf
  \label{alg:hybsolhalf}
\end{minipage}



\section{Correctness and Complexity}

We now discuss how we can entangle the concepts of Priority Promotion---the
transfer of information across the call structure, which makes it so efficient
in practice---with the concept of relative guarantees that provides favourable
complexity guarantees to Parys' algorithm.
Before turning to the principle guarantees provided by the algorithm, we note
that the two algorithms from the previous sections, Parys' algorithm and the
selected variation of Priority Promotion, can be viewed as variations of our
hybrid algorithm.
This is particularly easy to see for the exponential Priority Promotion algorithm
from Section \ref{sec:hybalg;sub:prtprm}:
when we set the bounds to infinity---or to $2^{c}$, where $c$ is the number of
different priorities of the game---then the algorithm never runs out of bounds.
In this case, the function $u$ is never used, and the algorithm behaves exactly
as Algorithm~\ref{alg:parsol}.
The connection to Parys' algorithm is slightly looser, but essentially it
replaces $\MaxFun$ by closing only $\USet[\sttElm]$ under attractor, and skips
the promotions (through calling $\PrmFun$) altogether.
Note that these changes would not impact the partial correctness argument, while
the remaining parts of the algorithm alone are strong enough to guarantee
progress.

\label{sec:corr}

As the algorithm is a hybrid one, its correctness proof has both local and
global aspects.
The global guarantees are that the regions stored in $\regFun^{(>p_s)}$ and the
bounded dominions stored in $\undFun^{(\geq p_s)}$ retain their properties in
all function calls.
These properties are not entirely local, and to conveniently reason about the
effect of updates, we use $\BSet[\sttElm] =
\HSet[\sttElm][\dual{\alpha}] \setminus \RSet[\sttElm] = \HSet[\undFun][\alpha,{\prtElm[\sttElm]}] \cup (\HSet[\regFun][\dual{\alpha},{\prtElm[\sttElm]}] \setminus \RSet[\sttElm])$
for the states that are \emph{bad for Player $\alpha_s$} in that they contain the states in $u_s^{-1}(q)$ for all $q \geq p_s$ with $q \equiv_2 \alpha_s$ and all states in $r_s^{-1}(q)$ for all $q > p_s$ with $q \not\equiv_2 \alpha_s$, and
$\GSet[\sttElm] =
\HSet[\sttElm][\alpha] \setminus \RSet[\sttElm] = \HSet[\undFun][\dual{\alpha},{\prtElm[\sttElm]}] \cup (\HSet[\regFun][\alpha,{\prtElm[\sttElm]}] \setminus \RSet[\sttElm])$
for the states that are \emph{good for Player $\alpha_s$} in that they contain the states in $u_s^{-1}(q)$ for all $q > p_s$ with $q  \not\equiv_2 \alpha_s$ and all states in $r_s^{-1}(q)$ for all $q > p_s$ with $q\equiv_2 \alpha_s$.

We also introduce additional data for each function, namely a set
$\PSet[\sttElm]$, which stores the local area $\LSet[\sttElm]$ at the beginning
of the call of $\solFun$, which is then available also in the $\hsolFun$ at the
level where they are called.
This additional set $\PSet[\sttElm]$ of initial positions is relevant, as the
guarantees of finding small dominions is formulated relative to this initial
set, and not relative to the local area $\LSet[\sttElm]$ at the end of
$\solFun$.
The correctness proof falls into lemmas that refer to the guarantees maintained
by the auxiliary functions, and an inductive proof of the main theorem.
While the proofs for the auxiliary functions and of the main result 
will be reported in the extended version,
the inductive proof of the correctness is outlined below.





\begin{restatable}{theorem}{main}
\label{theo:main}
Let $\sttElm \in \SttSet[\MSym]$ be a maximal hybrid state for a parity game
$\GameName$, where $\cprtElm[\sttElm] = \min \{ \mathsf{dom}(\PSet[\sttElm])
\cup \topZSym \}$ and $b_{0\sttElm},b_{1\sttElm} \geq 1$.
Assume $\solFun$ is called on $\sttElm$ and let $\sttElm' \defeq
((\der{\regFun}, \prtElm'), (\der{\undFun}, \cprtElm',
\bElm[\PlrSym\sttElm], \bElm[\OppSym\sttElm]))$ be the hybrid state, for
$(\der{\regFun}, \der{\undFun}) \defeq \solFun(\sttElm)$, $\prtElm' \defeq
\cprtElm[\sttElm]$, and some $\cprtElm' > \cprtElm_s$.
The following holds:
\begin{itemize}
  \item
    if $\cprtElm[\sttElm] \equiv_{2} \alpha_{\sttElm}$ then $\BSet[\sttElm] =
    \BSet[\sttElm']$, hence, $\BSet[\sttElm']$ preserves all global guarantees
    of $\BSet[\sttElm]$.
  \item
    if $\cprtElm[\sttElm] \not\equiv_{2} \alpha_{\sttElm}$ then:
    \begin{itemize}
      \item
        $\GSet[\sttElm]$ contains all small dominions of
        player $\alpha_{\sttElm}$ of size $\leq \bElm[\alpha_{\sttElm}]$  in $P_s$ and
        intersects with no small $\dual{\alpha}_{\sttElm}$ dominions of size
        $\leq \bElm[\dual{\alpha}_{\sttElm}]$ in $P_s$; and
      \item
        $\BSet[\sttElm'] = \GSet[\sttElm]$.
    \end{itemize}
\end{itemize}
\noindent
Moreover, if $\hsolFun$ is called on $\sttElm \in \SttSet[\MSym]$, then it
modifies $\sttElm$ into a strongly-maximal hybrid state with the following
property: $\LSet[\sttElm]$ does not contain a small dominion of player
$\dual{\alpha}[\sttElm]$ of size $\leq \lfloor
\frac{b_{\dual{\alpha}[\sttElm]}}{2} \rfloor$, and $\RSet[\sttElm]$ and
$\BSet[\sttElm]$ are closed under attractor in $\PSet[\sttElm]$.


\end{restatable}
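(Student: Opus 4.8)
The plan is to prove the two assertions—the input/output behaviour of $\solFun$ and the ``Moreover'' statement about $\hsolFun$—together, by a single well-founded mutual induction, since $\solFun$ invokes $\hsolFun$ (Lines~3 and~9 of Algorithm~\ref{alg:hybsol}) while $\hsolFun$ invokes $\solFun$ at halved precision. I would induct on the lexicographic measure $(\card{\LSet[\sttElm]}, \prtElm[\sttElm], b_{0\sttElm} \cdot b_{1\sttElm}, \iota)$, where $\iota \in \{0,1\}$ is a flag that is $1$ on a $\solFun$-call and $0$ on an $\hsolFun$-call. This measure strictly decreases at every nested call: passing from $\solFun(\sttElm)$ to $\hsolFun(\sttElm)$ keeps the first three components fixed and lowers $\iota$; the recursive call $\solFun(\NxtFun(\sttElm))$ at Line~5 strictly lowers $\prtElm[\sttElm]$ while never enlarging $\LSet[\sttElm]$; and the call $\solFun(\HalfFun(\sttElm))$ issued inside $\hsolFun$ both lowers $\prtElm[\sttElm]$ and shrinks the bound product (the opponent's bound is halved). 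The repeat-loop of $\hsolFun$ terminates separately, since $\prec$ is a strict order on a finite domain and each productive iteration moves strictly down it. Throughout, the auxiliary-procedure guarantees for $\NxtFun$, $\HalfFun$, $\MaxFun$, $\PrmFun$, and $\UndFun$—deferred to the extended version—are taken as given, so that each pseudocode line reads as a known transformation of $\regFun$, $\undFun$, and the derived sets $\RSet[\sttElm]$, $\USet[\sttElm]$, $\BSet[\sttElm]$, $\GSet[\sttElm]$.

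For the ``Moreover'' claim I would examine the repeat-loop of $\hsolFun$. Each iteration first re-maximises the region under the $\alpha_{\sttElm}$-attractor and records $\der{\sttElm}$, then, when the state is open, issues the halved call $\solFun(\HalfFun(\sttElm))$, and finally restores maximality through $\MaxFun$ or $\PrmFun$. Applying the induction hypothesis (for $\solFun$ at the halved bound) to that call, the part it processes is free of $\dual{\alpha}[\sttElm]$-dominions of size $\leq \lfloor \frac{b_{\dual{\alpha}[\sttElm]}}{2} \rfloor$. The loop exits exactly when an iteration fails to make progress, \ie, $\sttElm \not\prec \der{\sttElm}$; at that fixed point $\RSet[\sttElm]$ is $\alpha_{\sttElm}$-maximal and $\HSet[\sttElm][{\dual{\alpha}[\sttElm]}] \setminus (\LSet[\sttElm] \cup \USet[\sttElm])$ is $\dual{\alpha}[\sttElm]$-maximal, so $\sttElm$ is strongly maximal. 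That $\LSet[\sttElm]$ then retains no tiny $\dual{\alpha}[\sttElm]$-dominion is the contrapositive of Lemma~\ref{lem:no-p}: such a dominion would, after removal of the top-priority region, leave a non-empty sub-dominion disjoint from $\atrFun_{\GameName}^{\overline{\alpha}}(\RSet[\sttElm])$, forcing one more productive iteration. Closure of $\RSet[\sttElm]$ and $\BSet[\sttElm]$ under attractor within $\PSet[\sttElm]$ then follows from Lemma~\ref{lem:noattraction} together with strong maximality.

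For the $\solFun$ claim I would follow the three phases and split on the parity of $\cprtElm[\sttElm]$. Fix an opponent dominion $\DSet$ with $\card{\DSet} \leq b_{\dual{\alpha}[\sttElm]}$. Phase~1 (Line~3) runs $\hsolFun$, so by the previous paragraph $\LSet[\sttElm]$ is purged of opponent dominions of size $\leq \lfloor \frac{b_{\dual{\alpha}[\sttElm]}}{2} \rfloor$; hence the surviving part $\DSet' = \DSet \cap \dom{\regFun^{(\leq \prtElm[\sttElm])}}$ is either empty or, by Lemma~\ref{lem:no-p}, carries a sub-dominion of size $> \lfloor \frac{b_{\dual{\alpha}[\sttElm]}}{2} \rfloor$ whose residual outside $\DSet'$ is $\leq \lfloor \frac{b_{\dual{\alpha}[\sttElm]}}{2} \rfloor$. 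Phase~2 (Lines~4--8), whose recursive $\solFun(\NxtFun(\sttElm))$ call sits strictly below the current one in the measure, processes this surviving sub-dominion and re-establishes maximality via $\PrmFun$ or $\MaxFun$; the small residual is then absorbed by the conditional second $\hsolFun$ of Phase~3 (Line~9), whereas if no progress was made $\DSet'$ was already empty and $\DSet$ was fully handled in Phase~1. Finally $\UndFun$ (Line~10) updates the undetermined data. When $\cprtElm[\sttElm] \equiv_{2} \alpha_{\sttElm}$, $\UndFun$ only relabels $\USet[\sttElm]$ and leaves $\BSet[\sttElm] = \BSet[\sttElm']$, so all global guarantees are preserved; when $\cprtElm[\sttElm] \not\equiv_{2} \alpha_{\sttElm}$, $\UndFun$ reassigns the processed local area so that the good set of $\sttElm$ becomes the bad set of $\sttElm'$, giving exactly $\BSet[\sttElm'] = \GSet[\sttElm]$, with the two size guarantees on $\GSet[\sttElm]$ assembled from the Phase~1 clearance and the inductive guarantee returned by the Phase~2 call.

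The step I expect to be the crux is the parity-flip identity $\BSet[\sttElm'] = \GSet[\sttElm]$ together with its attached size guarantees, precisely because it is the seam where the Priority-Promotion data structure, which moves information across the call tree, must be reconciled with Parys' bounded guarantees, which decay along it. One has to verify that a promotion applied by $\PrmFun$—possibly sending a region to a much higher priority several levels up—cannot break the escape and boundedness invariants of Items~\ref{def:hybsrcspc(esc)} and~\ref{def:hybsrcspc(und)} for $\undFun$, and that the halving bookkeeping of $\HalfFun$ stays in step with the good/bad relabelling across the change of parity. Threading the single fixed dominion $\DSet$ through Phases~1--3 so that it interacts correctly both with the attractor-closure Lemmas~\ref{lem:noattraction} and~\ref{lem:no-p} and with the induction hypothesis at the flipped parity is the delicate accounting; the remainder reduces to routine verification against the maintained invariants.
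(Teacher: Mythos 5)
Your proposal is correct and follows essentially the same route as the paper's own proof: a mutual induction over the recursion (the paper phrases it as induction on the highest priority, with $\hsolFun$ handled before $\solFun$ at each level, which your lexicographic measure merely reformulates), with the $\hsolFun$ guarantee extracted from the loop's fixed point plus Lemma~\ref{lem:no-p}, and the $\solFun$ guarantee obtained by threading a fixed small opponent dominion $\DSet$ through the three phases via the $\DSet'/\DSet''$ split, the second $\hsolFun$ call absorbing the residual, and the final $\UndFun$ transfer realising $\BSet[\sttElm'] = \BSet[\sttElm]$ or $\BSet[\sttElm'] = \GSet[\sttElm]$ according to the parity of $\cprtElm[\sttElm]$. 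The only items the paper spells out that you elide are the explicit base cases ($\prtElm[\sttElm] = \bot$ and priority $0$) and the degenerate case $\lfloor \frac{b_{\dual{\alpha}[\sttElm]}}{2} \rfloor = 0$, where the induction hypothesis is unavailable but the call returns immediately and the guarantee is vacuous; both are routine checks within your scheme.
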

\smallskip

\noindent\textbf{Proof sketch.}
We prove this theorem by induction, using the lemmas from the previous section.
Establishing the base case for $\solFun$ and maximal priority $\bot$, $\hsolFun$
and maximal priority $0$, and $\solFun$ and maximal priority $0$ is straight
forward.

The induction step for $\hsolFun$ then establishes that, after return,
$\LSet[\sttElm]$ contains no tiny $\overline\alpha_s$ dominion of
size $\leq \lfloor \frac{b_{\overline\alpha_s}}{2} \rfloor$, as they would
otherwise (using Lemma \ref{lem:no-p}) be found in the last recursive call of
$\solFun$.
We also have that $\RSet[\sttElm]$ is closed under attractor because,
since closed states lead to promotion, the resulting state $\sttElm$ must be
open, so that $\MaxFun$ provides closure of $\RSet[\sttElm]$ under attractor.

Using this, the induction step for $\solFun$ has mainly to show that an initial
small dominion $\DSet$ ($\card{\DSet} \leq b_{\overline\alpha_s}$) of player
$\overline\alpha_s$ is entirely in $\BSet[\sttElm]$ when $\UndFun$ is called.
Let $\DSet'$ be the intersection of $\DSet$ with $\LSet[\sttElm]$.
Obviously, $\DSet'$ is a dominion of player $\overline\alpha_s$ (Lemma
\ref{lem:noattraction}).
If $\DSet'$ is empty, we are done.
Otherwise, $\DSet'$ must have a non-empty sub-dominion $\DSet''$ that does not
intersect with $\RSet[\sttElm]$ (Lemma \ref{lem:no-p}), and these are added, by
inductive hypothesis, to $\BSet[\sttElm]$ by the full precision call (Line~5).
In addition, $\card{\DSet''}> \lfloor \frac{b_{\overline\alpha_s}}{2} \rfloor$
by the return guarantees of $\hsolFun$.
The rest of the dominion included in $\DSet'\setminus \DSet''$, is then
$\leq \lfloor \frac{b_{\overline\alpha_s}}{2} \rfloor$ and, by the guarantees of
$\hsolFun$, is added to $\BSet[\sttElm]$ by the second call of $\hsolFun$ (note
that we have $\sttElm' \prec \sttElm$).

With the global guarantee that $\BSet[\sttElm]$ does not intersect with small
dominions of size $\leq b_{\alpha_{\sttElm}}$ of player $\alpha_{\sttElm}$, we
have added all small dominions of the players $\overline\alpha_s$ and
$\alpha_s$ to $\BSet[\sttElm]$ and $\GSet[\sttElm] \cup \LSet[\sttElm]$,
respectively and, depending on the priority, we can insert either $U_s$ or
$L_s$ into $u^-1(c_s)$.
\qed

Correctness is then the special case that we call $\solFun$ with $c_s =
\topZSym$ and full precision.

\begin{corollary}
When $\solFun$ is called with the initial state $\isttElm$, \ie with $\cprtElm[\sttElm] = \top_0$ and full precision $\bElm[\alpha]=\card{\GameName}$, for all $\alpha \in \SetB$, then, after $\solFun$ returns, $r^{-1}(\top_1) \cup u^{-1}(\top_0)$ is the winning region of player $1$. \qed
\end{corollary}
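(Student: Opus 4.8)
The plan is to obtain the corollary as the outermost, fully-precise instance of Theorem~\ref{theo:main}, so that the whole argument reduces to specialising the theorem and then collapsing its \emph{relative} guarantees into \emph{absolute} ones. Take $\sttElm \defeq \isttElm$: the caller priority is $\cprtElm[\sttElm] = \topZSym$ and both bounds equal $\bElm[\alpha] = \card{\GameName} \geq 1$ for every $\alpha \in \SetB$, which are exactly the hypotheses of the theorem. It therefore applies to the call $\solFun(\isttElm)$ and yields the returned state $\sttElm'$, with $(\regFun, \undFun) \defeq \solFun(\isttElm)$ the final region and undetermined functions.

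The key observation is that full precision $\bElm[\alpha] = \card{\GameName}$ makes the qualifier \emph{small} vacuous, as every $\alpha$-dominion has at most $\card{\GameName}$ positions. Hence the theorem's clause that $\GSet[\sttElm]$ ``contains all small dominions of player $\alpha_{\sttElm}$ of size $\leq \bElm[\alpha_{\sttElm}]$ and intersects no small $\dual{\alpha}_{\sttElm}$-dominion of size $\leq \bElm[\dual{\alpha}_{\sttElm}]$'' becomes the absolute statement that $\GSet[\sttElm]$ contains \emph{every} $\alpha_{\sttElm}$-dominion and meets \emph{no} $\dual{\alpha}_{\sttElm}$-dominion. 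By memoryless determinacy, $\WinSet[\PlrSym]$ and $\WinSet[\OppSym]$ partition $\PosSet$ and are a $0$- and a $1$-dominion, respectively, so these absolute statements pin the winning partition down exactly: in the representative case $\prtFun(\GameName)$ odd (so $\alpha_{\sttElm} = 1$ and $\topZSym \not\equiv_{2} \alpha_{\sttElm}$) the second bullet gives directly $\GSet[\sttElm] = \WinSet[\OppSym]$ and $\BSet[\sttElm'] = \GSet[\sttElm]$, whereas when $\prtFun(\GameName)$ is even the winning positions of player $1$ are accumulated through the promotions to $\top_1$ performed by the inner calls on odd-parity regions, while the first bullet preserves the global guarantees across the top-level call.

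For the read-off, I would argue by the partition of $\PosSet$. Soundness for player $1$ is immediate: $r^{-1}(\top_1)$ is a $1$-dominion by Definition~\ref{def:qsidomfun}, Item~\ref{def:qsidomfun(dom)}, and $u^{-1}(\top_0) \subseteq \WinSet[\OppSym]$ by Item~\ref{def:hybsrcspc(und)} of Definition~\ref{def:hybsrcspc}, which forces every position undetermined at $\cprtElm = \topZSym$ to be won by $\OppSym$; thus $\WSet[\OppSym] = r^{-1}(\top_1) \cup u^{-1}(\top_0) \subseteq \WinSet[\OppSym]$. Dually, Items~\ref{def:hybsrcspc(par)} and~\ref{def:hybsrcspc(und)} applied to the returned state leave every remaining position promoted to $\top_0$, so that $\WSet[\PlrSym] = \PosSet \setminus \WSet[\OppSym] = r^{-1}(\top_0)$ is a $0$-dominion, whence $\WSet[\PlrSym] \subseteq \WinSet[\PlrSym]$. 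Since $\WSet[\PlrSym], \WSet[\OppSym]$ and $\WinSet[\PlrSym], \WinSet[\OppSym]$ both partition $\PosSet$, the two inclusions upgrade to the desired equality $\WSet[\OppSym] = \WinSet[\OppSym]$.

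I expect the only delicate point to be this last identification: the theorem phrases its guarantees relative to the priorities $\geq \prtElm[\sttElm]$ and to the current region $\RSet[\sttElm]$, whereas the corollary reads the answer off the two top pseudo-priorities, so one must verify---uniformly in the parity of $\prtFun(\GameName)$ and without assuming any order between $\topZSym$ and $\topOSym$---that the two branches of $\UndFun$ together with the earlier promotions deposit exactly $\WinSet[\OppSym]$ into $r^{-1}(\top_1) \cup u^{-1}(\top_0)$ and leave nothing stranded at a non-top priority. Everything else is a direct specialisation of Theorem~\ref{theo:main} together with the soundness clauses of Definitions~\ref{def:qsidomfun} and~\ref{def:hybsrcspc}.
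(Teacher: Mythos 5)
Your opening move---specialising Theorem~\ref{theo:main} to $\isttElm$ and observing that full precision $\bElm[\alpha]=\card{\GameName}$ makes the qualifier ``small'' vacuous, so the relative guarantees become absolute---is exactly the paper's route. The genuine gap is in your read-off, in the reverse inclusion. You claim that Items~\ref{def:hybrcspc(par)} and~\ref{def:hybsrcspc(und)} ``leave every remaining position promoted to $\top_0$'', so that $\PosSet \setminus (r^{-1}(\top_1) \cup u^{-1}(\top_0)) = r^{-1}(\top_0)$ is a $0$-dominion. This is false. Inspect $\UndFun$ at the outermost call: when $\prtFun(\GameName)$ is even ($\cprtElm[\sttElm] = \topZSym \equiv_{2} \alpha_{\sttElm}$) only $\USet[\sttElm]$ is moved into $u^{-1}(\topZSym)$ and $\regFun$ is left untouched, so the residual local area $\LSet[\sttElm]$ stays in $\regFun$ at non-top priorities; when $\prtFun(\GameName)$ is odd, the positions of $\USet[\sttElm]$ are reset in $\regFun$ to their native game priorities. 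In both cases the complement of $r^{-1}(\top_1) \cup u^{-1}(\top_0)$ strictly exceeds $r^{-1}(\top_0)$ in general, and the paper says so explicitly right after the corollary: only $r^{-1}(\top_0)$ and $r^{-1}(\top_1)$ are computed constructively, while ``the remainder of the winning regions'' is not---so it cannot be exhibited as a dominion from the data structure at all. The cited items of Definition~\ref{def:hybsrcspc} are purely syntactic constraints on which pseudo-priorities may occur and cannot supply such a claim; for the same reason, your forward inclusion $u^{-1}(\top_0) \subseteq \WinSet[\OppSym]$ ``by Item~\ref{def:hybsrcspc(und)}'' is circular: that sentence in the paper's prose is the semantic invariant which Theorem~\ref{theo:main} exists to establish, not part of the definition.

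The paper's proof needs no statement about the shape of the complement. It takes the set $r^{-1}(\top_1) \cup u^{-1}(\top_0)$ (which is $\BSet[\sttElm']$ for the returned state) and applies the theorem's two absolute guarantees directly: the set contains all dominions of Player~$\OppSym$ (the bound excludes none), hence it contains $\WinSet[\OppSym]$, which is itself a $\OppSym$-dominion; and it intersects no dominion of Player~$\PlrSym$, hence in particular not $\WinSet[\PlrSym]$, so by memoryless determinacy it is contained in $\PosSet \setminus \WinSet[\PlrSym] = \WinSet[\OppSym]$. Both inclusions thus come from the theorem plus determinacy, with Definition~\ref{def:qsidomfun} needed only for the global invariant that the promoted sets are genuine dominions. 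Notably, your own second paragraph already states these absolute guarantees (and your odd-parity case uses them correctly); the repair is to carry them through to the read-off for \emph{both} inclusions, instead of switching to the definitional argument, and to drop the unproven claim that Player~$\OppSym$'s winning positions are ``accumulated through promotions to $\top_1$'' in the even-parity case, which is likewise asserted rather than derived.
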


This is because $r^{-1}(\top_1)$ and $r^{-1}(\top_0)$ contain dominions of the respective players and are closed under attractor by our global guarantees.
It is also clear that $u^{-1}(\top_0)$ can only be filled in the final call of $\UndFun$, such that
$u^{\top_0}
\cup r^{\top_1}$ contains all dominions (as the bound does not exclude any) of Player $1$, but does not intersect with any dominion (as the bound again does not include any) dominion of Player $0$.
Note that the wining region of Player $0$ is simply the complement of the winning region of Player $1$.
It is interesting to note an algorithmic difference
between the parts of the winning regions in the dominions $r^{-1}(\top_0)$ and
$r^{-1}(\top_1)$, and the rest of the the winning regions of both players:
$r^{-1}(\top_0)$ and
$r^{-1}(\top_1)$, are computed constructively, and winning strategies are simply the contribution attractor strategies / arbitrary exits from states with dominating priority (on their subgame)
.
This is not the case for the remainder of the winning regions, as their calculation is not constructive.


\section{Discussion}

As a hybrid between Priority Promotion and Parys' approach, the algorithm retains the quasi-polynomial bound of Parys and the practical efficiency of Priority Promotion~\cite{BDM16b,BDM18,BDM18a}.
Our argument is exactly the same as Parys' (Section 5 of \cite{Par19}):
we use 2 parameters, $c$ for the number of priorities, and $l = 2\lfloor \log_2 (n) \rfloor +1$, where $n$ is the number of positions.
We estimate the number of times $\solFun$ is called, excluding the trivial calls that return immediately (in line 2), because we run out of priorities or bounds, by $R(h,l)$.
If $h = 0$, then we have run out of priorities ($p_s= \bot)$, and $R(h,l)=0$. If $l=0$, then we have run out of bounds ($b_{0s} = 0$ or $b_{1s} = 0$, with the other value being $1$).
As argued in Section 5 of \cite{Par19}, we can estimate $R(h,l) \leq n^l \binom{h+l}{l} - 1$.
As the cost of all operations is linear in the size of the game, and as $l$ is logarithmic in the number of positions, this provides a quasi-polynomial running time.

\noindent\textbf{Evaluation and Discussion.}
We have combined two generalisations of the classic recursive algorithm:
the quasi-polynomial recursion scheme of Parys, which relies on the local
spread of imperfect guarantees, and a Priority Promotion scheme, which relies on
identifying and realising the global potential of perfect guarantees.
That these improvements can be synthesised bodes well, as it promises to
perfectly join the advantages of both schemes, and the first experimental data
collected 
suggests that the algorithm lives up
to this promise.

The practical effectiveness of the solution algorithms presented here, namely
\RPP and \HPP, has been assessed by means of an extensive experimentation on
both concrete and synthetic benchmarks.
The algorithms have been incorporated in Oink~\cite{Dij18}, a tool written
in \CPP that collects implementations of several parity game solvers proposed
in the literature, including the known quasi-polynomial ones.
We shall compare solution times against the quasi-polynomial solvers
\SSPM~\cite{JL17}, \QPT~\cite{FJSSW17}, and the improved version of Parys' algorithm Par~\cite{LSW19}, as well as the
original version of the best exponential solver classes, namely the recursive
algorithm \ZLK from~\cite{Zie98} 
and the original priority promotion
\PP~\cite{BDM18}, whose superiority in practical contexts is widely
acknowledged%
\footnote{Variations of Zielonka's algorithm as well as of Priority Promotion approaches~\cite{BDM16b,BDM18,BDM18a} (including Tangle Learning~\cite{Dij18a} and Justification~\cite{LBD20} based approaches), who share the same basic data structure and promotion principles--are available.
Their performance on benchmarks does not vary significantly.
We therefore went with the original Parity Promotion approach to compare the principle performance.}
(see 
\eg,~\cite{SWW18,Dij18}).

\begin{table}[b]
  \vspace{-1.em}
  \begin{center}
    \small
    \scalebox{0.95}[0.95]{\tabwc}
  \end{center}
  \vspace{-0.5em}
  \caption{\label{tab:wc} \small Solution times in seconds on the worst case family~\cite{BDM16}.}
  \vspace{-2.5em}
\end{table}

The results give a simple argument for why--and when--to use this algorithm\footnote{Experiments were carried out on a 64-bit 3.9 GHz
\textsc{Intel\textregistered} quad-core machine, with i5-6600K processor and
16GB of RAM, running \textsc{Ubuntu}~18.04 with \textsc{Linux} kernel
version~4.15.0. Oink was compiled with gcc~7.4.}.
The first question is why one should use a QP algorithm.
The answer to that question is quite simple: it is not hard to produce pathological cases for exponential time solvers.
For complex games, it is well known that even the most efficient solvers in practice, \ie ZLK~\cite{Fri11a,Dij19,BDM20} and PP~\cite{BDM16,Dij19,BDM20}, take exponential time, while \HPP has a quasi-polynomial worst case complexity.
To show this behaviour, we have evaluated these three solvers and the improved version of Parys’ algorithm Par, which is a quasi polynomial version of ZLK, on the worst case family developed for the approaches based on quasi-dominions~\cite{BDM16}. The results are reported in Table~\ref{tab:wc}. Clearly the complex infrastructure required by the HPP can pay off in terms of running time, while Par does not outperform ZLK on these small examples. We guess it will eventually, but the game size has to grow significantly for this to show.

The reason for why most QP algorithms should not be used in practice is given with Keiran's family of (for current solvers) simple benchmarks~\cite{Kei15} in Appendix~\ref{sec:expereval}: our algorithm takes twice as long as the leading algorithm (\PP) on this benchmark as a whole, the difference is between 0.17 and 0.36 second on average; the classic quasi-polynomial algorithms (\QPT and \SSPM) already time out; and $Par$ fares similar to our algorithm. This is because these benchmarks are quite simple, and neither our algorithm nor $Par$ runs out of bound. We also included the results for the non-bounded recursive version \RPP.

\figperthreeR

We then tried to assess scalability \wrt the
number of positions and priorities, so as to evaluate how sensitive the solvers
are to variations of those two parameters.
To this end, we set up two types of synthetic benchmarks.
The first kind of benchmarks keeps the number of priorities fixed and only
increases the size of the underlying graph, while the second one maintains a
linear relation between positions and priorities. Here we drop both \SSPM and \QPT,
since they could not solve any of these benchmarks.

Figure~\ref{fig:bigper} reports the solution times of the quasi polynomial
solvers on 10 clusters, each composed of 100 randomly generated games, of
increasing size varying from $10^4$ to $10^5$. Each point corresponds to the
total time to solve all the games in the cluster\footnote{The instances were generated by issuing the following OINK command for left-side games: rngame n n/4 2 10; right-side games: rngame n 500 2 10}.
On the left-hand side the number of priorities grows linearly with the
positions, \ie, equal to $\frac{n}{4}$, with $n$ the number of positions.
On the right-hand side, instead, all the games have $500$ priorities.
In both cases, the timeout was set to 25 seconds.
In these experiments, \HPP and $Par$ are tested toghether with the exponential solvers.
\HPP definitely scales very well \wrt the number of priorities, as opposed to
$Par$, which is very sensitive to this parameter and starts hitting the timeout
already on the smallest instances.
\HPP, indeed, behaves very much like the exponential solvers, none of which
seems to be particularly sensitive to this parameter in practice, despite
requiring time exponential in the number of priorities in the worst case.

What seems to emerge from the experimental analysis is that \HPP behaves quite
nicely in practice, competing with the leading exponential solution
algorithms:
the algorithmic overhead that guarantees its quasi-polynomial upper bound does
not seem to impact the performance in any meaningful way, unlike what happens
for all the other quasi-polynomial solvers, which do \emph{not} scale with the number
of positions and/or the number of priorities.
This bodes well for the applicability of the hybrid approach in more
challenging practical contexts, such as deciding temporal logic properties or
solving reactive synthesis problems, where the number of priorities is typically
higher.

Thus, \HPP should be used when the game is hard; for this, it should have \emph{some} structure (as opposed to be essentially random) as well as a high number of priorities, as the used advanced data structure only kicks in when the number of priorities is higher than $\log(n)$.

The excellent performance of the basic interplay between the two parts of the data structure invites exploring the limits of this approach. 
In future work, we will refine the interplay between these parts in our
algorithm, e.g.\ by extracting the guarantees on the bounds provided upon return instead of the guarantees required when a call is made.




\begin{section}*{Acknowledgments}

  \includegraphics[height=8pt]{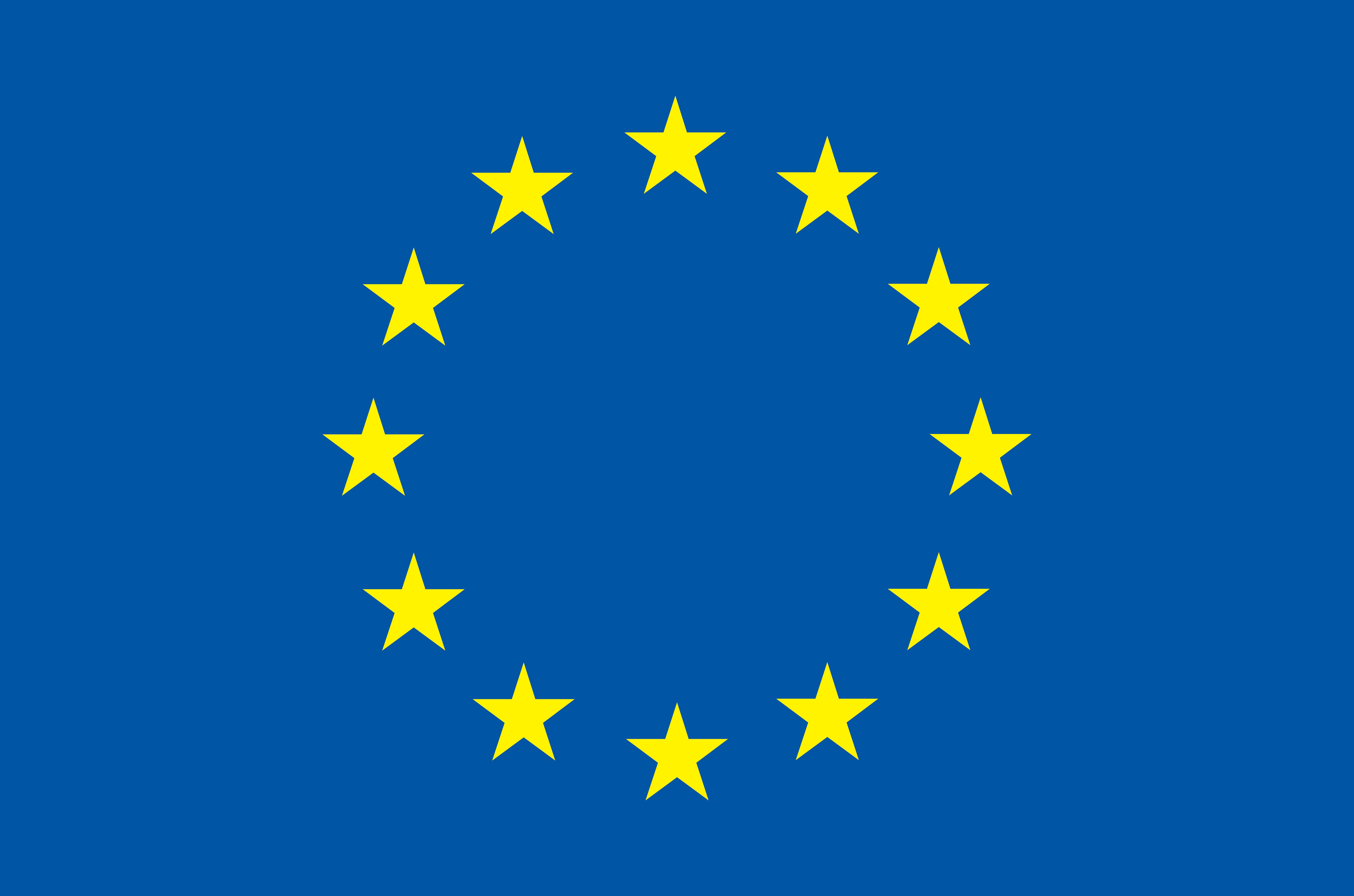} This project has received funding from the European Union’s Horizon 2020 research and innovation programme under grant agreement No 101032464.
  F.~Mogavero acknowledges a partial support by GNCS 2020 project ``Ragionamento
  Strategico e Sintesi Automatica di Sistemi Multi-Agente''.
  The work was supported by EPSRC grant EP/P020909/1.

\end{section}


  \clearpage
  \bibliographystyle{plain}
  \bibliography{References,Bib}

\begin{thebibliography}{10}

\bibitem{AHK02}
R.~Alur, T.A. Henzinger, and O.~Kupferman.
\newblock {Alternating-Time Temporal Logic.}
\newblock {\em JACM}, 49(5):672--713, 2002.

\bibitem{BSW69}
K.A. Bartlett, R.A. Scantlebury, and P.T. Wilkinson.
\newblock {A Note on Reliable Full-Duplex Transmission over Half-Duplex Links.}
\newblock {\em CACM}, 12(5):260--261, 1969.

\bibitem{BDM16b}
M.~Benerecetti, D.~Dell'Erba, and F.~Mogavero.
\newblock {Improving Priority Promotion for Parity Games.}
\newblock In {\em HVC'16}, LNCS 10028, pages 1--17. Springer, 2016.

\bibitem{BDM16}
M.~Benerecetti, D.~Dell'Erba, and F.~Mogavero.
\newblock {Solving Parity Games via Priority Promotion.}
\newblock In {\em CAV'16}, LNCS 9780 (Part II), pages 270--290. Springer, 2016.

\bibitem{BDM17}
M.~Benerecetti, D.~Dell'Erba, and F.~Mogavero.
\newblock {Robust Exponential Worst Cases for Divide-et-Impera Algorithms for
  Parity Games.}
\newblock In {\em GANDALF'17}, EPTCS 256, pages 121--135, 2017.

\bibitem{BDM18a}
M.~Benerecetti, D.~Dell'Erba, and F.~Mogavero.
\newblock {A Delayed Promotion Policy for Parity Games.}
\newblock {\em IC}, 262(2):221--240, 2018.

\bibitem{BDM18}
M.~Benerecetti, D.~Dell'Erba, and F.~Mogavero.
\newblock {Solving Parity Games via Priority Promotion.}
\newblock {\em FMSD}, 52(2):193--226, 2018.

\bibitem{BDM20}
M.~Benerecetti, D.~Dell'Erba, and F.~Mogavero.
\newblock {Robust Worst Cases for Parity Games Algorithms.}
\newblock {\em IC}, 272:104501:1--31, 2020.

\bibitem{BDHK06}
D.~Berwanger, A.~Dawar, P.~Hunter, and S.~Kreutzer.
\newblock {DAG-Width and Parity Games.}
\newblock In {\em STACS'06}, LNCS 3884, pages 524--536. Springer, 2006.

\bibitem{BV07}
H.~Bj{\"o}rklund and S.G. Vorobyov.
\newblock {A Combinatorial Strongly Subexponential Strategy Improvement
  Algorithm for Mean-Payoff Games.}
\newblock {\em DAM}, 155(2):210--229, 2007.

\bibitem{Bjorklund2004365}
Henrik Bj{\"{o}}rklund, Sven Sandberg, and Sergei~G.\ Vorobyov.
\newblock Memoryless determinacy of parity and mean payoff games: a simple
  proof.
\newblock {\em Theoretical Computer Science}, 310(1–3):365--378, 2004.

\bibitem{Browne-all/97/fixedpoint}
A.~Browne, E.~M. Clarke, S.~Jha, D.~E. Long, and W.~Marrero.
\newblock An improved algorithm for the evaluation of fixpoint expressions.
\newblock {\em Theoretical Computer Science}, 178(1--2):237--255, 1997.

\bibitem{CJKLS17}
C.S. Calude, S.~Jain, B.~Khoussainov, W.~Li, and F.~Stephan.
\newblock {Deciding Parity Games in Quasipolynomial Time.}
\newblock In {\em STOC'17}, pages 252--263. ACM, 2017.

\bibitem{CK74}
V.G. Cerf and R.E. Kahn.
\newblock {A Protocol for Packet Network Intercommunication.}
\newblock {\em TC}, 22(5):637--648, 1974.

\bibitem{DJT20}
Laure Daviaud, Marcin Jurdzinski, and K.~Thejaswini.
\newblock The strahler number of a parity game.
\newblock In {\em Automata, Languages and Programming, 47th International
  Colloquium, {ICALP}, Proceedings, Part {II}}, 2020.

\bibitem{deAlfaro+Henzinger+Majumdar/01/Control}
Luca de~Alfaro, Thomas~A. Henzinger, and Rupak Majumdar.
\newblock From verification to control: Dynamic programs for omega-regular
  objectives.
\newblock In {\em Proc. LICS}, pages 279--290. IEEE Computer Society Press,
  June 2001.

\bibitem{EJ91}
E.A. Emerson and C.S. Jutla.
\newblock {Tree Automata, muCalculus, and Determinacy.}
\newblock In {\em FOCS'91}, pages 368--377. IEEECS, 1991.

\bibitem{EJS93}
E.A. Emerson, C.S. Jutla, and A.P. Sistla.
\newblock {On Model-Checking for Fragments of muCalculus.}
\newblock In {\em CAV'93}, LNCS 697, pages 385--396. Springer, 1993.

\bibitem{EL86}
E.A. Emerson and C.-L. Lei.
\newblock {Efficient Model Checking in Fragments of the Propositional
  muCalculus.}
\newblock In {\em LICS'86}, pages 267--278. IEEECS, 1986.

\bibitem{Fea10a}
J.~Fearnley.
\newblock {Non-Oblivious Strategy Improvement.}
\newblock In {\em LPAR'10}, LNCS 6355, pages 212--230. Springer, 2010.

\bibitem{FJSSW17}
J.~Fearnley, S.~Jain, S.~Schewe, F.~Stephan, and D.~Wojtczak.
\newblock {An Ordered Approach to Solving Parity Games in Quasi Polynomial Time
  and Quasi Linear Space.}
\newblock In {\em SPIN'17}, pages 112--121. ACM, 2017.

\bibitem{FS13}
J.~Fearnley and S.~Schewe.
\newblock {Time and Space Results for Parity Games with Bounded Tree and DAG
  Width.}
\newblock {\em LMCS}, 9(2):1--31, 2013.

\bibitem{FJKSSW19}
John Fearnley, Sanjay Jain, Bart de~Keijzer, Sven Schewe, Frank Stephan, and
  Dominik Wojtczak.
\newblock An ordered approach to solving parity games in quasi-polynomial time
  and quasi-linear space.
\newblock {\em International Journal on Software Tools for Technology
  Transfer}, 21(3):325--349, 2019.

\bibitem{Fri11b}
O.~Friedmann.
\newblock {An Exponential Lower Bound for the Latest Deterministic Strategy
  Iteration Algorithms.}
\newblock {\em LMCS}, 7(3):1--42, 2011.

\bibitem{Fri11a}
O.~Friedmann.
\newblock {Recursive Algorithm for Parity Games Requires Exponential Time.}
\newblock {\em RAIROTIA}, 45(4):449--457, 2011.

\bibitem{Fri13}
O.~Friedmann.
\newblock {A Superpolynomial Lower Bound for Strategy Iteration Based on Snare
  Memorization.}
\newblock {\em DAM}, 161(10-11):1317--1337, 2013.

\bibitem{FL09}
O.~Friedmann and M.~Lange.
\newblock {Solving Parity Games in Practice.}
\newblock In {\em ATVA'09}, LNCS 5799, pages 182--196. Springer, 2009.

\bibitem{FL10a}
O.~Friedmann and M.~Lange.
\newblock {A Solver for Modal Fixpoint Logics.}
\newblock {\em ENTCS}, 262:99--111, 2010.

\bibitem{Friedmann/11/Zadeh}
Oliver Friedmann.
\newblock A subexponential lower bound for zadeh's pivoting rule for solving
  linear programs and games.
\newblock In {\em Proc.\ of IPCO}, LNCS, pages 192--206, 2011.

\bibitem{GP96}
J.F. Groote and J.~van~de Pol.
\newblock {A Bounded Retransmission Protocol for Large Data Packets.}
\newblock In {\em AMST'96}, LNCS 1101, pages 536--550. Springer, 1996.

\bibitem{Hes98}
W.H. Hesselink.
\newblock {Invariants for the Construction of a Handshake Register.}
\newblock {\em IPL}, 68(4):173--177, 1998.

\bibitem{Jur98}
M.~Jurdzi{\'n}ski.
\newblock {Deciding the Winner in Parity Games is in UP $\cap$ co-UP.}
\newblock {\em IPL}, 68(3):119--124, 1998.

\bibitem{Jur00}
M.~Jurdzi{\'n}ski.
\newblock {Small Progress Measures for Solving Parity Games.}
\newblock In {\em STACS'00}, LNCS 1770, pages 290--301. Springer, 2000.

\bibitem{JL17}
M.~Jurdzi{\'n}ski and R.~Lazic.
\newblock {Succinct Progress Measures for Solving Parity Games.}
\newblock In {\em LICS'17}, pages 1--9. ACM, 2017.

\bibitem{JPZ08}
M.~Jurdzi{\'n}ski, M.~Paterson, and U.~Zwick.
\newblock {A Deterministic Subexponential Algorithm for Solving Parity Games.}
\newblock {\em SJM}, 38(4):1519--1532, 2008.

\bibitem{Kei15}
J.J.A. Keiren.
\newblock {Benchmarks for Parity Games.}
\newblock In {\em FSEN'15}, LNCS 9392, pages 127--142. Springer, 2015.

\bibitem{KM90}
C.P.J. Koymans and J.C. Mulder.
\newblock {A Modular Approach to Protocol Verification Using Process Algebra.}
\newblock In {\em {Applications of Process Algebra.}}, pages 261--306. CUP,
  1990.

\bibitem{Koz83}
D.~Kozen.
\newblock {Results on the Propositional muCalculus.}
\newblock {\em TCS}, 27(3):333--354, 1983.

\bibitem{KV97}
O.~Kupferman and M.Y. Vardi.
\newblock {Module Checking Revisited.}
\newblock In {\em CAV'97}, LNCS 1254, pages 36--47. Springer, 1997.

\bibitem{Lange/05/ParitySAT}
Martin Lange.
\newblock Solving parity games by a reduction to {SAT}.
\newblock In {\em Proc.\ Int.\ Workshop on Games in Design and Verification},
  2005.

\bibitem{LBD20}
R.~Lapauw, M.~Bruynooghe, and M.~Denecker.
\newblock {Improving Parity Game Solvers with Justifications.}
\newblock In {\em VMCAI'20}, LNCS 11990, pages 449--470. Springer, 2020.

\bibitem{Leh18}
K.~Lehtinen.
\newblock {A Modal mu Perspective on Solving Parity Games in Quasi-Polynomial
  Time.}
\newblock In {\em LICS'18}, pages 639--648. ACM \& IEEECS, 2018.

\bibitem{LPSW21}
K.~Lehtinen, P.~Parys, S.~Schewe, and D.~Wojtczak.
\newblock {A Recursive Approach to Solving Parity Games in Quasipolynomial
  Time}.
\newblock Technical report, arXiv, 2021.

\bibitem{LSW19}
K.~Lehtinen, S.~Schewe, and D.~Wojtczak.
\newblock {Improving the Complexity of Parys' Recursive Algorithm.}
\newblock Technical report, arXiv, 2019.

\bibitem{Ludwig/95/random}
Walter Ludwig.
\newblock A subexponential randomized algorithm for the simple stochastic game
  problem.
\newblock {\em Inf. Comput.}, 117(1):151--155, 1995.

\bibitem{McN93}
R.~McNaughton.
\newblock {Infinite Games Played on Finite Graphs.}
\newblock {\em APAL}, 65:149--184, 1993.

\bibitem{Obdrzalek/03/TreeWidth}
J.~Obdr\v{z}\'alek.
\newblock Fast mu-calculus model checking when tree-width is bounded.
\newblock In {\em Proc. CAV}, pages 80--92. Springer-Verlag, 2003.

\bibitem{Par19}
P.~Parys.
\newblock {Parity Games: Zielonka's Algorithm in Quasi-Polynomial Time.}
\newblock In {\em MFCS'19}, LIPIcs 138, pages 10:1--13. Leibniz-Zentrum fuer
  Informatik, 2019.

\bibitem{Par20}
Pawe{\l} Parys.
\newblock Parity games: Another view on lehtinen's algorithm.
\newblock In {\em 28th EACSL Annual Conference on Computer Science Logic (CSL
  2020)}. Schloss Dagstuhl-Leibniz-Zentrum f{\"u}r Informatik, 2020.

\bibitem{Pit06}
N.~Piterman.
\newblock {From Nondeterministic Buchi and Streett Automata to Deterministic
  Parity Automata.}
\newblock In {\em LICS'06}, pages 255--264. IEEECS, 2006.

\bibitem{Puri/95/simprove}
Anuj Puri.
\newblock {\em Theory of hybrid systems and discrete event systems}.
\newblock PhD thesis, Computer Science Department, University of California,
  Berkeley, 1995.

\bibitem{SWW18}
L.~Sanchez, W.~Wesselink, and T.~A.~C. Willemse.
\newblock {A Comparison of BDD-Based Parity Game Solvers.}
\newblock In {\em GANDALF'18}, EPTCS 277, pages 103--117, 2018.

\bibitem{Sch08a}
S.~Schewe.
\newblock {An Optimal Strategy Improvement Algorithm for Solving Parity and
  Payoff Games.}
\newblock In {\em CSL'08}, LNCS 5213, pages 369--384. Springer, 2008.

\bibitem{Sch17}
S.~Schewe.
\newblock {Solving Parity Games in Big Steps.}
\newblock {\em JCSS}, 84:243--262, 2017.

\bibitem{SF06}
S.~Schewe and B.~Finkbeiner.
\newblock {Satisfiability and Finite Model Property for the Alternating-Time
  muCalculus.}
\newblock In {\em CSL'06}, LNCS 6247, pages 591--605. Springer, 2006.

\bibitem{STV15}
S.~Schewe, A.~Trivedi, and T.~Varghese.
\newblock {Symmetric Strategy Improvement.}
\newblock In {\em ICALP'15}, LNCS 9135, pages 388--400. Springer, 2015.

\bibitem{Schewe/09/determinise}
Sven Schewe.
\newblock Tighter bounds for the determinisation of {B\"u}chi automata.
\newblock In {\em Proceedings of the Twelfth International Conference on
  Foundations of Software Science and Computation Structures (FoSSaCS 2009),
  22--29 March, York, England, UK}, volume 5504 of {\em Lecture Notes in
  Computer Science}, pages 167--181. Springer-Verlag, 2009.

\bibitem{Schewe+Finkbeiner/06/Asynchronous}
Sven Schewe and Bernd Finkbeiner.
\newblock Synthesis of asynchronous systems.
\newblock In {\em Proceedings of the 16th International Symposium on
  Logic-Based Program Synthesis and Transformation (LOPSTR 2006), 12--14 July,
  Venice, Italy}, volume 4407 of {\em Lecture Notes in Computer Science}, pages
  127--142. Springer-Verlag, 2006.

\bibitem{Dij18a}
T.~van Dijk.
\newblock {Attracting Tangles to Solve Parity Games.}
\newblock In {\em CAV'18}, LNCS 10982, pages 198--215. Springer, 2018.

\bibitem{Dij18}
T.~van Dijk.
\newblock {Oink: an Implementation and Evaluation of Modern Parity Game
  Solvers.}
\newblock In {\em TACAS'18}, LNCS 10805, pages 291--308. Springer, 2018.

\bibitem{Dij19}
T.~van Dijk.
\newblock {A Parity Game Tale of Two Counters.}
\newblock In {\em GANDALF'19}, EPTCS 305, pages 107--122, 2019.

\bibitem{Var98}
M.Y. Vardi.
\newblock {Reasoning about The Past with Two-Way Automata.}
\newblock In {\em ICALP'98}, LNCS 1443, pages 628--641. Springer, 1998.

\bibitem{VHBJB01}
R.~Veldema, R.F.H. Hofman, R.~Bhoedjang, C.J.H. Jacobs, and H.E. Bal.
\newblock {Source-Level Global Optimizations for Fine-Grain Distributed Shared
  Memory Systems.}
\newblock In {\em PPOPP'01}, pages 83--92. ACM, 2001.

\bibitem{VJ00}
J.~V{\"o}ge and M.~Jurdzi{\'n}ski.
\newblock {A Discrete Strategy Improvement Algorithm for Solving Parity Games.}
\newblock In {\em CAV'00}, LNCS 1855, pages 202--215. Springer, 2000.

\bibitem{Wil01}
T.~Wilke.
\newblock {Alternating Tree Automata, Parity Games, and Modal muCalculus.}
\newblock {\em BBMS}, 8(2):359--391, 2001.

\bibitem{Zie98}
W.~Zielonka.
\newblock {Infinite Games on Finitely Coloured Graphs with Applications to
  Automata on Infinite Trees.}
\newblock {\em TCS}, 200(1-2):135--183, 1998.

\bibitem{ZP96}
U.~Zwick and M.~Paterson.
\newblock {The Complexity of Mean Payoff Games on Graphs.}
\newblock {\em TCS}, 158(1-2):343--359, 1996.

\end{thebibliography}

  \clearpage
  \appendix


\section{Appendix for Section \ref{sub:parys}}
\label{app:b}

\noattraction*
\begin{proof}
From the definition of a dominion, Player $\alpha$ has a strategy that, from
within $D$, only agrees with plays that stay within $D$, contradicting any node
in
$D$ being within the attractor of $S$.
\end{proof}

\nop*
\begin{proof}
Let us fix a strategy, which witnesses that $D$ is a dominion for Player
$\alpha$ in
$\GameName$.
We now assume for contradiction that player $\overline\alpha$, for this
strategy, player $\overline{\alpha}$ can attract the player from any point in
$D$ to $R$. Then, by playing this attractor strategy outside of $R$ and a
witnessing strategy for $R$ being a $q$-region in $R$.
We note that a play also stays in $D$ due to the witness strategy of player
$\alpha$.
An ensuing play is a lasso path, which either eventually stays in $R$ (in which
case player $\overline{\alpha}$ wins due to her wintess strategy), or it
infinitely often enters and leaves $R$, in which case it passes escape positions
of $R$ infinitely often. We would then have that $q$ is the highest priority
that occurs infinitely often, such that player $\overline{\alpha}$ wins.
This contradicts that
$D$ is a dominion.
Thus, there is a well defined non-empty area $A$ in $D$ from which player
$\overline{\alpha}$ cannot attract to $R$. $A$ is a dominion, as player $\alpha$
can win on $A$ with the same witness strategy.
As $A$ clearly does not intersect with $R$, it does not intersect with
$\atrFun_{\GameName}^{\overline{\alpha}}(R)$ either By Lemma
\ref{lem:noattraction}.
\qed
\end{proof}
Having found such a dominion it can be closed under attrator and, after removing
this attractor, what remains from a dominion is still a dominion, which allows
for stepwise collecting dominions.
\begin{lemma}\label{lem:attract}
Let $D$ be a dominion for Player $\alpha$ in a game $\GameName$. Then, for $A =
\atrFun_{\GameName}^\alpha(D)$, then $A$ is a
dominion for Player $\alpha$ in $\GameName \setminus  A$.
\end{lemma}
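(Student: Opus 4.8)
I read the statement as asserting that $A \defeq \atrFun_{\GameName}^{\alpha}(D)$ is itself an $\alpha$-dominion in $\GameName$ (so that, in particular, $\GameName \setminus A$ is a well-defined subgame on $\PosSet \setminus A$); this is the ``closure under attractor'' property announced in the sentence preceding the lemma. As in the proof of Lemma~\ref{lem:noattraction}, I use the closed reading of a dominion: $D$ comes with an $\alpha$-witness $\strElm[D] \in \StrSet[\alpha](D)$ all of whose induced plays starting in $D$ remain inside $D$, are infinite, and satisfy $\prtFun(\pthElm) \equiv_{2} \alpha$. The plan is to glue $\strElm[D]$ with a standard attractor strategy on $A \setminus D$ and to verify that the resulting strategy witnesses $A$ as an $\alpha$-dominion.

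First I would recall the attractor rank $\rho \colon A \to \SetN$, where $\rho^{-1}(0) = D$ and, for $k \geq 1$, $\rho^{-1}(k)$ collects the positions added to the attractor at stage $k$. On $A \setminus D$ I define the attractor strategy $\tau$: at an $\alpha$-position $\posElm$ of rank $k \geq 1$ it picks some successor of strictly smaller rank (which exists by the definition of $\preFun[][\alpha]$), while every $\dual{\alpha}$-position of rank $k \geq 1$ has, by the same definition, all of its successors of strictly smaller rank and hence inside $A$. I then set $\strElm \defeq \tau$ on $A \setminus D$ and $\strElm \defeq \strElm[D]$ on $D$; since $\tau$ maps into $A$ and $\strElm[D]$ maps into $D \subseteq A$, this is a legitimate $\alpha$-strategy on $A$.

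The heart of the argument is to trace an arbitrary play $\pthElm$ consistent with $\strElm$ from a position of $A$. While the play stays in $A \setminus D$, each move strictly decreases $\rho$ (the $\alpha$-moves by the choice of $\tau$, the $\dual{\alpha}$-moves because every successor there has smaller rank), so after at most $\max_{\posElm \in A} \rho(\posElm)$ steps the play enters $D$; from that point it follows $\strElm[D]$ and, by the closed dominion property, remains in $D$ forever and is winning for $\alpha$. Hence $\pthElm$ is infinite, and since its initial segment inside $A \setminus D$ is finite it does not affect the $\limsup$, so $\prtFun(\pthElm) \equiv_{2} \alpha$. Therefore $A$ is an $\alpha$-dominion and $\GameName \setminus A$ is a subgame, as required.

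The step I expect to require the most care is the confinement-plus-progress claim: that the glued strategy never lets the play escape $A$ and drives it into $D$ in finitely many steps. The progress part is the usual rank-descent argument, but one must be precise that the $\dual{\alpha}$-positions of $A \setminus D$ are attracted precisely because \emph{all} their moves stay in $A$ (otherwise $\strElm$ would not even be a valid strategy on $A$), and that once inside $D$ the witness $\strElm[D]$ prevents the play from leaving $D$ --- exactly the property already exploited in Lemma~\ref{lem:noattraction}. Everything downstream, namely that the winning condition is decided by the infinite tail in $D$, is then routine.
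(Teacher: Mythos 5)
Your proposal is correct and takes essentially the same approach as the paper: the paper's one-line proof likewise combines the witness strategy for $D$ inside $D$ with an attractor strategy to $D$ on $A \setminus D$. Your write-up merely makes explicit the rank-descent and confinement details (and the reading of the statement as ``$A$ is an $\alpha$-dominion, so $\GameName \setminus A$ is a subgame'') that the paper leaves implicit.
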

\begin{proof}
It suffices to use a witness strategy for $D$ being a dominion in $D$, and an
attractor strategy to $D$ in the remainder of $A$.
\qed
\end{proof}
\begin{lemma}\label{lem:coattract}
Let $D$ be a dominion for Player $\alpha$ in a game $\GameName$. Then, for all
sets $S$, if $A = \atrFun_{\GameName}^{\overline\alpha}(S)$, then $D\setminus A$
is a
dominion for Player $\GameName$ in $\GameName \setminus  A$.
\end{lemma}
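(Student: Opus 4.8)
The plan is to prove the statement (with the intended winner being Player~$\alpha$) by combining the trap structure of the complement of an opponent attractor with Lemma~\ref{lem:noattraction}, which is precisely the tool that lets the two pieces fit together. First I would record the standard consequence of $A = \atrFun_{\GameName}^{\overline{\alpha}}(S)$ being $\overline{\alpha}$-attractor-closed: its complement $T \defeq \PosSet \setminus A$ is a trap for player $\overline{\alpha}$. Concretely, every $\overline{\alpha}$-position in $T$ has all of its successors in $T$ (a single move into $A$ would place it in $A$), and every $\alpha$-position in $T$ has at least one successor in $T$ (otherwise all its moves would enter $A$, forcing it into $A$). In particular $T$ has no sinks, so $\GameName \setminus A$ is exactly the subarena induced on $T$.

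The key step — and the reason Lemma~\ref{lem:noattraction} is placed just before — is to check that none of the retained positions of $D$ are deleted together with $A$. In the operative situation the dominion $D$ is separated from the attracted seed, i.e.\ $D \cap S = \emptyset$, so Lemma~\ref{lem:noattraction} yields $D \cap A = \emptyset$; hence $D \subseteq T$ and $D \setminus A = D$. With this in hand the remainder is a routine witness-restriction argument, which I would carry out as follows. Let $\strElm \in \StrSet[\alpha](D)$ be an $\alpha$-witness for $D$ in $\GameName$ and read it as a strategy on $D$ inside $\GameName \setminus A$. For an $\alpha$-position $\posElm \in D$ we have $\strElm(\posElm) \in D \subseteq T$, so the prescribed move survives; for an $\overline{\alpha}$-position $\posElm \in D$ the dominion property forces all of its $\GameName$-moves into $D \subseteq T$, so its successor set is unchanged in $\GameName \setminus A$ and every choice stays in $D$. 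Thus each $\strElm$-play in $\GameName \setminus A$ from $D$ is a legal $\strElm$-play in $\GameName$; it stays in $D$, is infinite, and satisfies $\prtFun(\pthElm) \equiv_{2} \alpha$. Hence $D = D \setminus A$ is an $\alpha$-dominion in $\GameName \setminus A$.

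The one place that needs care — and the main obstacle if one tried to argue without Lemma~\ref{lem:noattraction} — is the totality of the restricted witness at $\alpha$-positions. An $\alpha$-position $\posElm \in D \setminus A$ is guaranteed (by $A$-closure) to have \emph{some} move into $T$, but a priori that move need not land back in $D$: its $D$-successors could all lie in $A$ and be removed, leaving $\alpha$ forced out of $D \setminus A$ in the subgame. I would therefore foreground the non-attraction step as the crux: Lemma~\ref{lem:noattraction} removes this obstacle wholesale by ruling out $D \cap A \neq \emptyset$, so that no position of $D$ is ever deleted and the witness restricts verbatim. The trap property and the strategy bookkeeping are then the straightforward remainder.
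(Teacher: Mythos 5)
Your core argument is, in its second half, exactly the paper's entire proof: the paper disposes of this lemma in one sentence, asserting that the strategy witnessing $D$ being a dominion for Player~$\alpha$ in $\GameName$ also witnesses $D \setminus A$ being a dominion in $\GameName \setminus A$ --- precisely your witness-restriction step. Where you differ is that you first assume $D \cap S = \emptyset$ (your ``operative situation'') and invoke Lemma~\ref{lem:noattraction} to conclude $D \cap A = \emptyset$, so that $D \setminus A = D$ and the witness restricts verbatim. Formally, this proves less than the statement as written, which quantifies over \emph{all} sets $S$ with no disjointness hypothesis.

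However, the caveat you raise is not excess caution: it is a genuine counterexample mechanism, and the unconditional statement is in fact false. Take $\alpha = 0$ and a game with a Player-$0$ position $p$ of priority $0$ with moves to $q$ and $r$, a Player-$1$ position $q$ of priority $0$ with only a self-loop, and a Player-$1$ position $r$ of priority $1$ with only a self-loop. Then $D = \{p, q\}$ is a $0$-dominion (witness $p \mapsto q$). Choosing $S = \{q\}$ gives $A = \atrFun_{\GameName}^{1}(S) = \{q\}$: neither $r$ nor $p$ is attracted, since $p$ keeps its move to $r$. In $\GameName \setminus A$, the set $D \setminus A = \{p\}$ is not a $0$-dominion: the only surviving move from $p$ leads to $r$, where Player~$1$ wins. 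This is exactly your scenario of an $\alpha$-position of $D \setminus A$ all of whose $D$-successors lie in $A$; the paper's one-line proof silently assumes this never happens. So the lemma needs the disjointness hypothesis you added --- equivalently, its conclusion is only available for the sub-dominions supplied by Lemmas~\ref{lem:noattraction} and~\ref{lem:no-p}, which is how it is actually used. Your proof is the correct and sufficient form of the argument; the discrepancy lies in the paper's statement and proof, not in your reasoning.
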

\begin{proof}
The same strategy that witnesses $D$ being a dominion for Player $\alpha$ in
$\GameName$ witnesses $D\setminus A$ being a dominion for Player $\alpha$ in
$G\setminus A$.
\qed
\end{proof}
Together, this implies that \textsf{hsol} returns in a situation, where there
are no small dominions left outside of $\USet[\sttElm]$.
Broadly speaking, this means that, if the full precision call of \textsf{sol} in
\textsf{sol} cuts a chunk $> \lfloor \frac{b_{\overline \alpha_s}}{2} \rfloor$
from any remaining dominion $\leq b_{\overline \alpha_s}$, leaving a remainder
of that dominion of size $\leq \lfloor \frac{b_{\overline \alpha_s}}{2}
\rfloor$, which is then found by the second call of \textsf{hsol}.
After Parys' Solver is run (with full precision) for a parity game with maximal
priority $p_{\max}$, $u^{-1}(p_{\max}+1)$ contains the winning region of player
$p_{\max} \mod 2$.



\section{Properties of the auxiliary functions for Section~\ref{sec:corr}}
\label{app:hybrprioprom}

\begin{restatable}{lemma}{lmmnxthalf}
  \label{lmm:nxthalf}
  Given a strongly-maximal hybrid state $\sttElm \in \SttSet[\SSym]$, the
  functions $\NxtFun(\sttElm)$ and $\HalfFun(\sttElm)$ return a maximal hybrid
  state $\der{\sttElm} \in \SttSet[\MSym]$ such that $\der{\sttElm} \prec
  \sttElm$.
\end{restatable}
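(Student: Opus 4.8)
Since $\HalfFun$ only rescales the two bounds $\bElm[0\sttElm],\bElm[1\sttElm]$ by $\floor{\cdot}$ and then calls $\NxtFun$, and since the bounds enter neither the conditions of Definition~\ref{def:hybsrcspc}, nor the notion of maximality, nor the order $\prec$, it suffices to analyse $\der{\sttElm} \defeq \NxtFun(\sttElm)$ (the halved bounds stay natural numbers, so $\der{\sttElm}$ remains a tuple of the correct type). Writing $\qprtElm \defeq \max(\rng{\regFun[\sttElm]^{(< \prtElm[\sttElm])}})$, with the convention $\max \emptyset = \botSym$, I read off $\der{\regFun} = \regFun[\sttElm]$, $\der{\undFun} = \undFun[\sttElm]$, $\prtElm[\der{\sttElm}] = \qprtElm$ and $\cprtElm[\der{\sttElm}] = \prtElm[\sttElm]$. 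Everything rests on two identities that I record first: because $\regFun[\sttElm]$ and $\undFun[\sttElm]$ are untouched, $\HSet[\der{\sttElm}][\alpha] = \HSet[\sttElm][\alpha]$ for both $\alpha \in \SetB$; and because $\qprtElm$ is the largest priority that $\regFun[\sttElm]$ assigns strictly below $\prtElm[\sttElm]$, no position has priority in the open interval $(\qprtElm, \prtElm[\sttElm])$, whence $\LSet[\der{\sttElm}] = \dom{\regFun[\sttElm]^{(\leq \qprtElm)}} = \dom{\regFun[\sttElm]^{(< \prtElm[\sttElm])}} = \LSet[\sttElm] \setminus \RSet[\sttElm]$.

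The order and the validity are then routine. For $\der{\sttElm} \prec \sttElm$ I split on $\RSet[\sttElm]$: if $\RSet[\sttElm] \neq \emptyset$ then $\LSet[\der{\sttElm}] \subset \LSet[\sttElm]$ strictly (first clause); otherwise $\LSet[\der{\sttElm}] = \LSet[\sttElm]$ and $\qprtElm < \prtElm[\sttElm]$ (second clause). For $\der{\sttElm} \in \SttSet$ I check Items~(a)--(c) of Definition~\ref{def:hybsrcspc} from those for $\sttElm$: Item~(a) uses $\qprtElm < \prtElm[\sttElm]$ and the emptiness of $\regFun[\sttElm]$ and of $\undFun[\sttElm]$ on $(\qprtElm, \prtElm[\sttElm])$, the latter since $\dom{\undFun[\sttElm]^{(< \prtElm[\sttElm])}} = \emptyfun$ by Item~(b) for $\sttElm$; Items~(b) and~(c) are inherited as $\regFun[\sttElm], \undFun[\sttElm]$ and the sets $\HSet[\sttElm][\alpha, \qprtElm]$ are unchanged and the priority only drops. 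The sole delicate point is the clause $\cprtElm[\der{\sttElm}] \neq \topZSym \Rightarrow \der{\undFun}^{-1}(\topZSym) = \emptyfun$: here $\cprtElm[\der{\sttElm}] = \prtElm[\sttElm] \in \PrtSet[\botSym]$ is never $\topZSym$, so I need $\undFun[\sttElm]^{-1}(\topZSym) = \emptyfun$; this follows from Item~(b) for $\sttElm$ when $\cprtElm[\sttElm] \neq \topZSym$, and in the residual case $\cprtElm[\sttElm] = \topZSym$ it holds because $\topZSym$ is assigned to $\undFun$ only in the terminal call of $\UndFun$, after which neither $\NxtFun$ nor $\HalfFun$ is invoked.

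The heart of the lemma is maximality, i.e.\ that $\HSet[\sttElm][\alpha] \setminus \LSet[\der{\sttElm}]$ is $\alpha$-maximal \wrt $\LSet[\der{\sttElm}]$ for both $\alpha$. For the opponent $\alpha = \dual{\alpha}[\sttElm]$ this is immediate: $\RSet[\sttElm] \subseteq \HSet[\sttElm][\alpha_{\sttElm}]$ is disjoint from $\HSet[\sttElm][\dual{\alpha}[\sttElm]]$, so $\HSet[\sttElm][\dual{\alpha}[\sttElm]] \setminus \LSet[\der{\sttElm}] = \HSet[\sttElm][\dual{\alpha}[\sttElm]] \setminus \LSet[\sttElm]$, which is $\dual{\alpha}[\sttElm]$-maximal \wrt $\LSet[\sttElm]$ by maximality of $\sttElm$; restricting the attracted region to the smaller $\LSet[\der{\sttElm}] \subseteq \LSet[\sttElm]$ can only shrink the attractor, so it stays trivial. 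For the leading player $\alpha = \alpha_{\sttElm}$ the target is $\HSet[\sttElm][\alpha_{\sttElm}] \setminus \LSet[\der{\sttElm}] = (\HSet[\sttElm][\alpha_{\sttElm}] \setminus \LSet[\sttElm]) \cup \RSet[\sttElm] = \HSet[\sttElm][\alpha_{\sttElm}, \prtElm[\sttElm]]$, and I must show it is $\alpha_{\sttElm}$-maximal \wrt $\LSet[\der{\sttElm}] = \dom{\regFun[\sttElm]^{(< \prtElm[\sttElm])}}$.

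This last step is the main obstacle, and it is genuinely more than the conjunction of the two facts in hand: maximality of $\sttElm$ says the processed part $\HSet[\sttElm][\alpha_{\sttElm}] \setminus \LSet[\sttElm]$ attracts nothing from $\LSet[\sttElm]$, and strong maximality says $\RSet[\sttElm]$ attracts nothing from $\LSet[\sttElm]$, yet their union could \emph{a priori} attract a fresh $\dual{\alpha}[\sttElm]$-position whose moves split between the two pieces. I would exclude this by a least-fixpoint induction on $\atrFun[][\alpha_{\sttElm}](\HSet[\sttElm][\alpha_{\sttElm}, \prtElm[\sttElm]], \LSet[\der{\sttElm}])$, showing that no position is ever added. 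A leading-player position with one move into the target already contradicts maximality of $\sttElm$ (a move into the processed part) or of $\RSet[\sttElm]$ (a move into the region). An opponent position $\posElm$ all of whose moves enter the target splits by the parity of $\regFun[\sttElm](\posElm)$: if $\posElm \in \HSet[{\regFun[\sttElm]}][\dual{\alpha}[\sttElm]]$, the $\dual{\alpha}[\sttElm]$-witness of that quasi dominion keeps one move inside $\HSet[{\regFun[\sttElm]}][\dual{\alpha}[\sttElm]]$, which is disjoint from the target, a contradiction; if instead $\posElm \in \HSet[{\regFun[\sttElm]}][\alpha_{\sttElm}]$ with $\regFun[\sttElm](\posElm) < \prtElm[\sttElm]$, then $\posElm$ is $\alpha_{\sttElm}$-forced into $\HSet[\sttElm][\alpha_{\sttElm}, \prtElm[\sttElm]]$, contradicting the invariant---maintained for all strongly-maximal states as part of the global guarantees on $\regFun^{(> \prtElm[\sttElm])}$---that each region level $\HSet[\sttElm][\alpha, \qprtElm]$ is already $\alpha$-attractor closed. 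The escape conditions of Definition~\ref{def:regfun} are what keep that invariant consistent, and the argument has the flavour of Lemma~\ref{lem:noattraction}. I expect this case analysis, and the care needed to thread the attractor-closure invariant through both parities, to be the only non-mechanical part of the proof.
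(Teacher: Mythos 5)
The routine parts of your proof line up with the paper's own argument: the reduction of $\HalfFun$ to $\NxtFun$, the identity $\LSet[\der{\sttElm}] = \LSet[\sttElm] \setminus \RSet[\sttElm]$, the inheritance of the conditions of Definition~\ref{def:hybsrcspc}, the ordering argument, and the opponent's side of maximality are all handled the same way (the paper does not even discuss your $\topZSym$ caveat, so you are, if anything, more careful there). You are also right that the leading player's maximality is the crux and that it does \emph{not} follow from the mere conjunction of maximality of $\sttElm$ and region-maximality of $\RSet[\sttElm]$; notably, the paper's own proof dispatches precisely this point in a single sentence, so you spotted a subtlety the paper glosses over. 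The problem is that both branches of your case analysis for the split position fail. Case~1 rests on a false premise: a $\dual{\alpha}[\sttElm]$-witness of $\HSet[{\regFun[\sttElm]}][{\dual{\alpha}[\sttElm]}]$ need not keep any move inside that set --- escape positions are exactly the positions where a witness leaves, and Condition~1 of Definition~\ref{def:regfun} explicitly tolerates this. Concretely, take the game with positions $\aSym$ (Player~$\OppSym$, priority $1$, moves to $\bSym$ and $\dSym$), $\bSym$ (Player~$\PlrSym$, priority $2$, move to $\cSym$), $\cSym$ (Player~$\OppSym$, priority $2$, moves to $\bSym$ and $\aSym$), $\dSym$ (Player~$\PlrSym$, priority $4$, self-loop), and the state $\sttElm = ((\prtFun, 2), (\emptyfun, 4, 4, 4))$: here $\HSet[{\regFun[\sttElm]}][\OppSym] = \{\aSym\}$ is a legitimate region level whose unique witness must leave it, all conditions of Definitions~\ref{def:regfun} and~\ref{def:hybsrcspc} hold, $\sttElm$ is maximal and open, and $\aSym$ is exactly your split position, with one move into $\RSet[\sttElm] = \{\bSym, \cSym\}$ and one into $\HSet[\sttElm][\PlrSym] \setminus \LSet[\sttElm] = \{\dSym\}$. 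Case~2 is circular: the invariant that every level $\HSet[\sttElm][\alpha, \qprtElm]$ is $\alpha$-attractor closed is not among the state conditions, and it cannot be imported from the paper's global guarantees (those concern dominions inside $\BSet[\sttElm]$ and $\GSet[\sttElm]$ and are themselves proved \emph{using} this lemma); for $\qprtElm = \prtElm[\sttElm]$ it is verbatim the claim you are proving, since $\HSet[\sttElm][\alpha_{\sttElm}, {\prtElm[\sttElm]}] = (\HSet[\sttElm][\alpha_{\sttElm}] \setminus \LSet[\sttElm]) \cup \RSet[\sttElm]$.

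The missing idea is that the region-closure half of strong maximality is closure under the attractor computed \emph{in the induced subgame} $\GameName[\sttElm]$ --- this is what Line~2 of $\hsolFun$ establishes, $\RSet[\sttElm] = \atrFun[ {\GameName[\sttElm]} ][\alpha_{\sttElm}][ {\RSet[\sttElm]} ]$ --- and in $\GameName[\sttElm]$ every move exiting $\LSet[\sttElm]$ has been deleted. Given a position $\posElm \in \LSet[\sttElm] \setminus \RSet[\sttElm]$ of Player~$\dual{\alpha}[\sttElm]$ with all moves inside $(\HSet[\sttElm][\alpha_{\sttElm}] \setminus \LSet[\sttElm]) \cup \RSet[\sttElm]$, it must have at least one move into $\RSet[\sttElm]$, for otherwise all its moves would enter $\HSet[\sttElm][\alpha_{\sttElm}] \setminus \LSet[\sttElm]$, contradicting the $\alpha_{\sttElm}$-maximality of $\sttElm$; and every move of $\posElm$ that survives in $\GameName[\sttElm]$ lands in $\RSet[\sttElm]$. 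Hence $\posElm$ is attracted to $\RSet[\sttElm]$ inside $\GameName[\sttElm]$, contradicting strong maximality. This single observation disposes of the split case uniformly, with no case distinction on the parity of $\regFun[\sttElm](\posElm)$ and no appeal to witnesses, and it is also the real reason your Case~1 configuration cannot occur in a strongly-maximal state: in the example above, $\aSym$ is forced into $\RSet[\sttElm]$ once its move to $\dSym$ is cut. The distinction is not pedantic: if region closure is read as the \emph{full-game} restricted attractor $\atrFun[][\alpha_{\sttElm}](\RSet[\sttElm], \LSet[\sttElm])$, then the state above is strongly maximal, yet $\NxtFun(\sttElm)$ is not maximal ($\aSym$ has both moves in $\{\bSym,\cSym,\dSym\}$), so the lemma would simply be false; any correct proof has to invoke the subgame attractor, which neither your argument nor, for that matter, the paper's one-line justification does explicitly.
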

\begin{proof}
  Let $\sttElm \defeq ((\regFun, \prtElm), (\undFun, \cprtElm, \bElm[\PlrSym],
      \bElm[\OppSym]))$ be a strongly-maximal hybrid state. For function $\NxtFun$ let be $\der{\sttElm} = \NxtFun(\sttElm)$ the result obtained by computing the function
$\NxtFun$ on $\sttElm$. Due to Line 2 of Algorithm $\NxtFun$, $\der{\sttElm} = ((\regFun[\sttElm], \qprtElm), (\undFun[\sttElm], \prtElm[\sttElm], \bElm[0\sttElm], \bElm[1\sttElm]))$ where $\qprtElm = \max(\rng{\regFun[\sttElm]^{(< \prtElm[\sttElm] )}})$ by Line 1.
Consequently, trivially holds Condition~\ref{def:hybsrcspc(par)} of Definition~\ref{def:hybsrcspc} and that $\regFun[\sttElm]$ is a region function, $\undFun[\sttElm]$ is a promotion function, $\prtElm[\sttElm]$ is a priority, and
$\bElm[\alpha]$ for $\alpha\in\SetB$ are two integers.
 Moreover, clearly $\qprtElm\in[\bot,{\prtElm[\sttElm]}[$ and, therefore, also $\qprtElm$ is a priority and $\undFun[][-1](\topOSym) = \dom{\undFun[][(<\qprtElm)]} = \emptyfun$ as required by Condition~\ref{def:hybsrcspc(und)} of the same definition. In addition, since $\HSet[\sttElm][\beta, \qprtElm] = \HSet[\der{\sttElm}][\beta, {\prtElm[\sttElm]}] \cup \regFun[][-1](\qprtElm)$ and $\HSet[\der{\sttElm}][\dual{\beta}, \qprtElm] = \HSet[\sttElm][\dual{\beta}, {\prtElm[\sttElm]}]$, with $\beta = \qprtElm \bmod{2}$, it also follows Condition~\ref{def:hybsrcspc(esc)}. Finally, to prove that $\der{\sElm}$ is maximal it suffice to observe that $\HSet[\der{\sttElm}][\alpha] = \HSet[\sttElm][\alpha] \cup \RSet[\sttElm]$ and $\HSet[\der{\sttElm}][\dual{\alpha}] = \HSet[\sttElm][\dual{\alpha}]$. Consequently, since $\sttElm$ is strongly-maximal it holds that $\RSet[\sttElm]$ is $\alpha$-maximal \wrt $\LSet[\sttElm]$ and, therefore, $\HSet[\der{\sttElm}][\alpha]$ is $\alpha$-maximal as well. To conclude, if $\RSet[\sttElm]\neq\emptyset$ trivially we have that $\LSet[\der{\sttElm}]\subset\LSet[\sttElm]$, otherwise we have that $\LSet[\der{\sttElm}] = \LSet[\sttElm]$ and $\prtElm[\der{\sttElm}] = \qprtElm < \prtElm[\sttElm]$. In both case we can conclude that $\der{\sttElm} \prec \sttElm$.

 We can now focus on $\HalfFun$ function. Let us consider $\der{\sttElm}$ as the argument of function $\NxtFun$ at Line 2 of Algorithm $\HalfFun$. Due to Line 1, we have that $\der{\sttElm} =
    ((\regFun, \prtElm), (\undFun, \cprtElm, \der{\bElm[\PlrSym]}, \der{\bElm[\OppSym]}))$, where $\der{\bElm[\PlrSym]} = \lfloor\frac{\bElm[0\sElm]}{1 + \alpha}\rfloor$ and $\der{\bElm[\OppSym]} = \lfloor\frac{\bElm[1\sElm]}{2 - \alpha}\rfloor$. It is easy to observe, by the fact that $\sttElm$ is a strongly-maximal hybrid state and that both $\der{\bElm[\PlrSym]}$ and $\der{\bElm[\OppSym]}$ are integers, that $\der{\sttElm}$ is a strongly-maximal hybrid state as well. Hence, $\HalfFun$ returns $\NxtFun(\der{\sttElm})$ that in turn returns a new maximal hybrid
  state $\sttElm' \in \SttSet[\MSym]$ such that $\sttElm' \prec \der{\sttElm}$.
\end{proof}

\begin{restatable}{lemma}{lmmmax}
  \label{lmm:max}
  Given a hybrid state $\sttElm \in \SttSet$, the procedure $\MaxFun(\sttElm)$
  modifies in-place $\sttElm$ into a maximal hybrid state $\der{\sttElm} \in
  \SttSet[\MSym]$ such that $\der{\sttElm} \preccurlyeq \sttElm$.
\end{restatable}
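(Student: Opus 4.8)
The plan is to track the in-place modifications performed by $\MaxFun$ and then verify, in order, the inequality $\der{\sttElm} \preccurlyeq \sttElm$, the membership $\der{\sttElm} \in \SttSet$ (Definition~\ref{def:hybsrcspc}), and finally maximality. Observe first that $\MaxFun$ leaves the caller priority $\cprtElm[\sttElm]$, the two bounds $\bElm[\PlrSym]$ and $\bElm[\OppSym]$, and the current priority $\prtElm[\sttElm]$ untouched: all the work happens on $\regFun[\sttElm]$ and $\undFun[\sttElm]$, via the two attractor-based reassignments inside the \textbf{foreach} loop and the conditional reset at Line~9. Hence the whole argument reduces to understanding how these two functions are rewritten for each player $\alpha \in \SetB$.

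For the order, I would argue that the local area can only shrink. Each iteration reassigns the attracted set $\XSet \subseteq \LSet[\sttElm] \cup \USet[\sttElm]$ to the priority $\qprtElm = \min(\rng{(\regFun[\sttElm] \cup \undFun[\sttElm]) \rst (\HSet[\sttElm][\alpha] \setminus \LSet[\sttElm])})$. Since the positions over which this minimum is taken lie outside $\LSet[\sttElm] = \dom{\regFun[\sttElm]^{(\leq \prtElm[\sttElm])}}$, we have $\qprtElm > \prtElm[\sttElm]$ when $\qprtElm \equiv_{2} \alpha$ (the attracted positions are then written into $\regFun[\sttElm]$ at a strictly higher priority), and otherwise the attracted positions are written into $\undFun[\sttElm]$; in both cases every attracted position of $\LSet[\sttElm]$ leaves the local area. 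The reset at Line~9 only rewrites positions that are still in $\LSet[\sttElm]$ to their native priority $\prtFun(\posElm) \leq \regFun[\sttElm](\posElm) \leq \prtElm[\sttElm]$, so it neither removes nor adds local positions. Thus $\LSet[\der{\sttElm}] \subseteq \LSet[\sttElm]$ while $\prtElm[\der{\sttElm}] = \prtElm[\sttElm]$, giving $\der{\sttElm} \preccurlyeq \sttElm$.

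I would then check that $\der{\sttElm}$ is a well-formed hybrid state. Item~\ref{def:hybsrcspc(par)} is immediate, because the case split on $\qprtElm \equiv_{2} \alpha$ moves each position of $\XSet$ to exactly one of $\regFun[\sttElm]$, $\undFun[\sttElm]$, keeping their domains disjoint and their union equal to $\PosSet$, while the reset changes no domain. Item~\ref{def:hybsrcspc(und)} follows since attracted positions receive a priority $\qprtElm \geq \prtElm[\sttElm]$, never $\topOSym$, and the reset puts surviving local positions back at priorities $\leq \prtElm[\sttElm]$. The crux is Item~\ref{def:hybsrcspc(esc)} together with the requirement that $\regFun[\sttElm]$ remain a \emph{region} function: here I would invoke that extending the quasi $\alpha$-dominion $\HSet[\sttElm][\alpha] \setminus \LSet[\sttElm]$ by its $\alpha$-attractor preserves both quasi-dominion-hood and the escape structure---precisely the closure property that the region conditions of Definition~\ref{def:regfun} were designed to guarantee---so that assigning the attracted positions the minimal priority $\qprtElm$ of the set they join creates no new $\dual{\alpha}$-escape and keeps every undetermined position off the relevant escape sets. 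The reset at Line~9 is exactly what restores the region structure in the case $\ZSet \neq \RSet[\sttElm]$, where the old region was partially consumed and the leftover local positions no longer form a quasi $\alpha_{\sttElm}$-dominion.

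Finally, maximality follows from the least-fixpoint nature of the attractor: after $\HSet[\sttElm][\alpha] \setminus \LSet[\sttElm]$ has been closed under $\alpha$-attraction within $\LSet[\sttElm] \cup \USet[\sttElm]$, no position remaining in $\LSet[\der{\sttElm}]$ can still be $\alpha$-attracted into $\HSet[\der{\sttElm}][\alpha] \setminus \LSet[\der{\sttElm}]$. The main obstacle I anticipate is showing that the two iterations do not interfere. Here I would use that the targets $\HSet[\sttElm][\PlrSym] \setminus \LSet[\sttElm]$ and $\HSet[\sttElm][\OppSym] \setminus \LSet[\sttElm]$ are disjoint and that the $\PlrSym$- and $\OppSym$-attractors of disjoint sets are themselves disjoint, so that the reassignment performed for $\dual{\alpha}$ can neither reclaim a position already attracted for $\alpha$ nor open up a fresh $\alpha$-attraction; consequently the $\alpha$-maximality gained in the first iteration survives the second, and symmetrically, yielding maximality for both players, i.e.\ $\der{\sttElm} \in \SttSet[\MSym]$. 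The one genuinely hybrid point requiring care is that, for $\alpha = \dual{\alpha}[\sttElm]$, the restriction set $\LSet[\sttElm] \cup \USet[\sttElm]$ intersects the attractor target through the undetermined positions $\USet[\sttElm]$; I would handle this by checking separately that reassigning those positions to $\qprtElm$ is consistent with the escape invariant of Item~\ref{def:hybsrcspc(esc)}, which is the place where this argument departs from the purely region-based maximisation of \RPP.
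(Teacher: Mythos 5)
Your skeleton matches the paper's proof---unchanged components ($\prtElm[\sttElm]$, $\cprtElm[\sttElm]$, the two bounds), the ordering, the three items of Definition~\ref{def:hybsrcspc}, then maximality---and your arguments for the ordering and for Items~\ref{def:hybsrcspc(par)} and~\ref{def:hybsrcspc(und)} are correct. The genuine gap sits exactly at what you yourself call the crux. You justify preservation of the region-function property and of Item~\ref{def:hybsrcspc(esc)} by invoking closure under $\alpha$-attractor of ``the quasi $\alpha$-dominion $\HSet[\sttElm][\alpha] \setminus \LSet[\sttElm]$''. In the hybrid setting this premise fails: $\HSet[\sttElm][\alpha] \setminus \LSet[\sttElm]$ is the union of the region part $\HSet[\regFun][\alpha] \setminus \LSet[\sttElm]$ with the undetermined part $\HSet[\undFun][\dual{\alpha}]$, and the paper explicitly warns that these mixed sets are \emph{not} necessarily quasi dominions; moreover, quasi dominions are not closed under attractor anyway (the paper notes the $\alpha$-attractor of a quasi $\alpha$-dominion can even be a $\dual{\alpha}$-dominion)---only the conditions of Definition~\ref{def:regfun}, which govern the $\regFun$ part alone, recover that closure. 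So the step cannot be discharged by a single appeal to that design principle; it requires the case analysis the paper actually performs: split on whether $\qprtElm$ is a top priority and on whether $\qprtElm \equiv_{2} \alpha$. When $\qprtElm \equiv_{2} \alpha$, the set $\XSet$ joins $\regFun[\sttElm]$ at priority $\qprtElm$, and one argues that no position of $\XSet$ becomes a $\dual{\alpha}$-escape because player $\alpha$ can force the play from $\XSet$ into $\regFun[][-1](\qprtElm)$, from which the opponent can only exit through priorities $\geq \qprtElm$ congruent to $\alpha$ (Item~\ref{def:qsidomfun(prt)} of Definition~\ref{def:regfun}), the undetermined positions traversed on the way being unusable as exits by Item~\ref{def:hybsrcspc(esc)}. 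When $\qprtElm \not\equiv_{2} \alpha$, $\XSet$ joins $\undFun[\sttElm]$ instead, and Item~\ref{def:hybsrcspc(esc)} survives because $\XSet$ is, by construction, $\dual{\alpha}$-attracted to $\HSet[\undFun][\dual{\alpha}] \setminus \LSet[\sttElm]$. You do flag ``one genuinely hybrid point'', but only for the iteration $\alpha = \dual{\alpha}[\sttElm]$ and only through $\USet[\sttElm]$; undetermined positions sit in the attractor target in \emph{both} iterations, and in both cases you defer the verification rather than give it.

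A secondary flaw: your non-interference argument for maximality rests on the claim that the $\PlrSym$- and $\OppSym$-attractors of disjoint sets are disjoint, which is false as stated (an attractor contains its target, and a target position of one set may well lie in the attractor of the other). What actually makes the conclusion true is that (i) the iteration for player $\dual{\alpha}$ moves positions only into $\HSet[\sttElm][\dual{\alpha}]$, whichever branch of the parity test is taken, so the target $\HSet[\sttElm][\alpha] \setminus \LSet[\sttElm]$ of the other iteration never grows, and (ii) attractors are monotone in both the target and the restriction region, so shrinking $\LSet[\sttElm]$ cannot create new $\alpha$-attractions. With that repair this part of your proof is fine---indeed more explicit than the paper's, which settles maximality in a single sentence.
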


\begin{proof}
Let $\sttElm \defeq ((\regFun, \prtElm), (\undFun, \cprtElm, \bElm[\PlrSym],
      \bElm[\OppSym]))$ be an hybrid state and
$\der{\sttElm} \defeq
    ((\der{\regFun}, \der{\prtElm}), (\der{\undFun}, \der{\cprtElm}, \der{\bElm[\PlrSym]}, \der{\bElm[\OppSym]}))
= \MaxFun[][](\sttElm)$ the modified state obtained by computing the function
$\MaxFun$ on $\sttElm$.
It is easy to see that $\der{\prtElm} = \prtElm$, $\der{\cprtElm} = \cprtElm$, and
$\der{\bElm[\alpha]}=\bElm[\alpha]$ for $\alpha\in\SetB$.
Moreover, due to Line 3 of Algorithm $\MaxFun$, it holds that $\XSet \subseteq \RSet[\sttElm] \cup \LSet[\sttElm] \subseteq\dom{\prtFun[][(\leq \prtElm)]}$ while, due to Line 4, we have that $\qElm\geq\prtElm$.
As a consequence, by Lines 6 and 8, $\der{\regFun}$ is a a promotion function.
Furthermore, from the observation that in case $\qElm = \top_{\beta}$, with $\beta\in\SetB$, it necessarily holds that $\alpha \equiv_2 \beta$, we can also conclude that, by Lines 7 and 9, $\der{\undFun}$ is a promotion function as well.
To prove that $\der{\regFun}$ is a region function, let us consider again the case in which $\qElm = \top_{\beta}$ with $\beta\in\SetB$. Now, since $\alpha \equiv_2 \beta$, the set $\XSet$ corresponds to the $\beta$-attractor of $\RSet[\sttElm] \cup \LSet[\sttElm]$ by Line 3.
Once $\XSet$ is merged to $\regFun[][-1](\top_{\beta})$, by Line 6, it is obvious that $\regFun[][-1](\top_{\beta})$ is still a $\beta$-dominion due to the fact that the $\beta$-attractor of a $\beta$-dominion is a $\beta$-dominion as well.
We can now consider $\qElm < \top_{\beta}$ for any $\beta\in\SetB$.
Let us first focus on the interation of Line 2 in which $\alpha = \alpha_{\sttElm}$.
Two cases may arises: $\qElm \equiv_2 \alpha$ or $\qElm \not\equiv_2 \alpha$.
In the first case, due to Lines 5-6, we have that $\HSet[\der{\regFun}][\alpha] = \HSet[\regFun][\alpha] \cup \XSet$ which is clearly a quasi $\alpha$-dominion since $\XSet$ corresponds to the $\alpha$-attractor of $\RSet[\sttElm] \cup \LSet[\sttElm]$, while $\HSet[\der{\regFun}][\dual{\alpha}] = \HSet[\regFun][\dual{\alpha}]$ is a quasi $\dual{\alpha}$-dominion. Hence, $\der{\regFun}$ is a quasi dominion function.
Finally, clearly $\XSet$ does not belongs to $\escFun[][\dual{\alpha}](\HSet[\der{\regFun}][\alpha])$, since Player $\alpha$ has a strategy to attract every position $\vElm\in\XSet$ to $\regFun[][-1](\qElm)$, form which the opponent can only escape through positions having priority $\qElm'\geq\qElm$ and congruent to the parity of $\qElm$.
Summing up, the resulting $\der{\regFun}$ is a region function and due to the fact that the priority function $\prtFun$ is always a region function, we can conclude that also after Line 10 $\der{\regFun}$ is a region function.
On the other hand, if $\qElm \not\equiv_2 \alpha$, we can easily conclude that $\der{\regFun}$ is a region function due to Lines 5, 8, and 10.
When, instead, the loop at Line 2 selects a parity $\alpha \neq \alpha_{\sttElm}$ and the priority selected at Line 4 has the same parity as $\alpha$, the proof that $\der{\regFun}$ is a region function is equivalent to the one provided above for the case in which $\qElm \not\equiv_2 \alpha$ and $\alpha = \alpha_{\sttElm}$.
Similarly, when $\qElm \not\equiv_2 \alpha$ the proof corresponds to the one provided for the case in which $\qElm \equiv_2 \alpha$ and $\alpha = \alpha_{\sttElm}$.
To prove Condition~\ref{def:hybsrcspc(par)} of Definition~\ref{def:hybsrcspc} it suffice to observe that whenever $\XSet$ is merged to $\regFun$ (resp. $\undFun$), it is removed from $\undFun$ (resp. $\undFun$) due to Lines 6-7 and 8-9, while the operation at Line 10 does not change $\dom{\regFun}$, since $\LSet[\sttElm] \defeq \dom{\regFun[][(\leq \prtElm)]}$.
Condition~\ref{def:hybsrcspc(und)}, instead, follows form the fact that that $\qElm\geq\prtElm$ and $\qElm = \top_{\beta}$ with $\beta\in\SetB$ only if $\alpha \equiv_2 \beta$.
Therefore, it holds that $\der{\undFun}[][-1](\topOSym) = \undFun[][-1](\topOSym)$, and $\dom{\der{\undFun}[][(<\prtElm)]} = \dom{\undFun[][(<\prtElm)]}$ which are all empty.
It remains to prove Condition~\ref{def:hybsrcspc(esc)}.
Since $\sttElm$ is a hybrid state, it holds that $\USet[\sttElm]$ and every $\vElm \in \HSet[\undFun][\dual{\alpha}]$ cannot be attracted by $\RSet[\sttElm]$ and more in general by $\LSet[\sttElm]$, due to Condition~\ref{def:hybsrcspc(esc)}.
Due to the same condition, it also holds that $\dom{\undFun} \cap \escFun[][\dual{\alpha}](\HSet[\sttElm][\alpha, \qElm']) = \emptyset$ for every $\qElm'>\prtElm$.
In addition, by Lines 5 and 9, the domain of $\undFun$ only increase when the priority selected at Line 4 has parity $\dual{\alpha}$.
However, the positions of $\XSet$ merged in this case are not part of $\escFun[][\dual{\alpha}](\HSet[\der{\sttElm}][\alpha])$ since $\XSet$ has been computed as the $\dual{\alpha}$-attractor to $\HSet[\undFun][\dual{\alpha}]\setminus \LSet[\sttElm]$ and, therefore, Player $\dual{\alpha}$ has a strategy to reach $\HSet[\undFun][\dual{\alpha}]\setminus \LSet[\sttElm]$ from $\XSet$.
The maximality of $\der{\sttElm}$ is a trivial consequence of Line 3, together with the observation that the operation at Line 10 cannot affect the maximality of the state.
Finally, since $\dom{\der{\undFun}} \supseteq \dom{\undFun}$ and $\dom{\der{\regFun}} \supseteq \dom{\regFun}$, while $\der{\prtElm} = \prtElm$, we can also conclude that $\der{\sttElm} \preccurlyeq \sttElm$.
\end{proof}

\begin{restatable}{lemma}{lmmprm}
  \label{lmm:prm}
  Given a promotable hybrid state $\sttElm \in \SttSet[\PSym]$, the procedure
  $\PrmFun(\sttElm)$ modifies in-place $\sttElm$ into a maximal hybrid state
  $\der{\sttElm} \in \SttSet[\MSym]$ such that $\der{\sttElm} \prec \sttElm$.
\end{restatable}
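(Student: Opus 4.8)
The plan is to follow the same template as the proof of Lemma~\ref{lmm:max}. Write $\sttElm \defeq ((\regFun, \prtElm), (\undFun, \cprtElm, \bElm[\PlrSym], \bElm[\OppSym]))$, let $\der{\sttElm} \defeq \PrmFun(\sttElm)$, and observe first that $\PrmFun$ modifies neither the priority $\prtElm$, nor the caller priority $\cprtElm$, nor the two bounds, so that $\der{\prtElm} = \prtElm$, $\der{\cprtElm} = \cprtElm$ and $\der{\bElm[\alpha]} = \bElm[\alpha]$ for both $\alpha \in \SetB$; then verify, one by one, Conditions~\ref{def:hybsrcspc(par)}--\ref{def:hybsrcspc(esc)} of Definition~\ref{def:hybsrcspc} for $\der{\sttElm}$, followed by maximality and the strict inequality $\der{\sttElm} \prec \sttElm$. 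Throughout I abbreviate $\alpha$ for $\alpha_{\sttElm}$. The entry point is that, since $\sttElm$ is promotable and hence closed, $\RSet[\sttElm]$ does not meet $\escFun[][\dual{\alpha}](\HSet[\sttElm][\alpha, \prtElm])$, so every opponent escape out of $\RSet[\sttElm]$ lands at a priority strictly above $\prtElm$; by Condition~\ref{def:hybsrcspc(par)} of the incoming state those priorities cannot fall in the gap $(\prtElm, \cprtElm)$, whence both values computed at Line~1 satisfy $\prtElm[\regFun], \prtElm[\undFun] \geq \cprtElm > \prtElm$ and carry parity $\alpha$. Moreover, promotability forces $\RSet[\sttElm] \neq \emptyset$, because a state with an empty region is always open.

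Next I would split along the two branches of $\PrmFun$. In the branch $\prtElm[\regFun] \leq \prtElm[\undFun]$ the whole region is reassigned to $\prtElm[\regFun]$ inside $\regFun$, which is the classical priority-promotion step: $\der{\undFun} = \undFun$, so Condition~\ref{def:hybsrcspc(und)} is inherited, and the real work is to show $\der{\regFun}$ is again a region function. It remains a promotion function because $\prtElm[\regFun] > \prtElm = \regFun(\posElm) \geq \prtFun(\posElm)$ for $\posElm \in \RSet[\sttElm]$, and Conditions~\ref{def:qsidomfun(wit)}--\ref{def:qsidomfun(prt)} of Definition~\ref{def:regfun} hold precisely because $\prtElm[\regFun]$ is the \emph{minimal} escape target: the merged set $\HSet[\der{\regFun}][\alpha, {\prtElm[\regFun]}] = \RSet[\sttElm] \cup \regFun[][-1]({\prtElm[\regFun]})$ only lets $\dual{\alpha}$ leave at priorities $\geq \prtElm[\regFun]$ of parity $\alpha$. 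In the branch $\prtElm[\regFun] > \prtElm[\undFun]$ the region leaves $\regFun$ and enters $\undFun$: removing a whole region keeps $\der{\regFun} = \regFun \setminus \RSet[\sttElm]$ a region function, while $\der{\undFun} = \undFun[{\RSet[\sttElm]} \mapsto {\prtElm[\undFun]}]$ is a promotion function by the same inequality on $\RSet[\sttElm]$. In both branches Condition~\ref{def:hybsrcspc(par)} is immediate (the domains are only reshuffled between $\regFun$ and $\undFun$, and the promoted priorities lie outside the gap), and Condition~\ref{def:hybsrcspc(und)} reduces to checking that $\prtElm[\undFun]$ is never a forbidden top symbol, which is guaranteed by how $\bepFun$ selects an escape target already present in $\undFun$.

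The delicate structural invariant to re-establish is Condition~\ref{def:hybsrcspc(esc)}, exactly as in Lemma~\ref{lmm:max}: after the update, no position in $\dom{\undFun}$ may sit on a $\dual{\beta}$-escape of the recomputed sets $\HSet[\der{\sttElm}][\beta, \qprtElm]$. In the $\regFun$-branch the undetermined set is untouched and the sets $\HSet[\sttElm][\alpha, \qprtElm]$ only grow, for $\qprtElm \leq \prtElm[\regFun]$, by the closed region $\RSet[\sttElm]$; since $\RSet[\sttElm]$ was closed, adjoining it creates no new $\dual{\alpha}$-escape that an undetermined position could occupy. In the $\undFun$-branch one uses that $\prtElm[\undFun]$ is the best escape priority of $\RSet[\sttElm]$ into $\undFun$, so the freshly undetermined positions escape only upward to priorities $\geq \prtElm[\undFun]$, keeping them off the escapes of the relevant $\HSet$-sets. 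This is the invariant the paper singled out as crucial for the implication ``strongly maximal and closed $\Rightarrow$ promotable''.

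The ordering is then the easy part: as $\RSet[\sttElm] = \regFun[][-1](\prtElm)$ is non-empty and is promoted away to a priority $> \prtElm$, we get $\LSet[\der{\sttElm}] = \LSet[\sttElm] \setminus \RSet[\sttElm] \subset \LSet[\sttElm]$, hence $\der{\sttElm} \prec \sttElm$. I expect the genuine obstacle to be the maximality claim $\der{\sttElm} \in \SttSet[\MSym]$. One of the two directions is free: on the side \emph{opposite} to where $\RSet[\sttElm]$ lands, $\HSet[\der{\sttElm}][\beta] \setminus \LSet[\der{\sttElm}] = \HSet[\sttElm][\beta] \setminus \LSet[\sttElm]$ is unchanged and the free set only shrinks, so $\beta$-maximality transfers at once. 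The side \emph{onto which} $\RSet[\sttElm]$ is promoted is the crux: there the target becomes the merge $(\HSet[\sttElm][\beta] \setminus \LSet[\sttElm]) \cup \RSet[\sttElm]$ over the smaller free set $\LSet[\sttElm] \setminus \RSet[\sttElm]$, and one must show this merge attracts no further local position. This cannot be obtained by invoking the maximality of $\sttElm$ and the $\alpha$-maximality of $\RSet[\sttElm]$ \emph{separately}, since an opponent vertex could split its moves between the two parts; the argument has to exploit both promotability hypotheses jointly, together with the escape structure forced by closedness and the minimality of the chosen escape priority. Carrying out this combined attractor argument carefully is where the proof has to be most precise.
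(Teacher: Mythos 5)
Your proposal follows the same template as the paper's own proof (unchanged priority, caller priority and bounds; the case split on $\prtElm[\regFun] \leq \prtElm[\undFun]$ versus $\prtElm[\regFun] > \prtElm[\undFun]$; then the three conditions of Definition~\ref{def:hybsrcspc}, maximality, and the ordering), but it contains one concrete error and one missing step. The error is the parity claim at the start: you assert that both best-escape priorities "carry parity $\alpha$". For $\prtElm[\regFun]$ this is correct, but for $\prtElm[\undFun]$ it is exactly backwards. Since the state is closed, the opponent's escapes from $\RSet[\sttElm]$ stay inside $\HSet[\sttElm][\alpha_{\sttElm}, {\prtElm[\sttElm]}] = \HSet[\regFun][\alpha_{\sttElm}, {\prtElm[\sttElm]}] \cup \HSet[\undFun][\dual{\alpha}_{\sttElm}, {\prtElm[\sttElm]}]$, so an escape landing in $\dom{\undFun}$ lands on a position whose $\undFun$-priority has parity $\dual{\alpha}_{\sttElm}$; hence $\prtElm[\undFun] \not\equiv_{2} \prtElm[\sttElm]$ whenever it lies below the top symbol, which is precisely what the paper establishes and then uses. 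This is not cosmetic: in the branch $\prtElm[\regFun] > \prtElm[\undFun]$ the region is re-filed in $\undFun$ at priority $\prtElm[\undFun]$, and it remains on player $\alpha_{\sttElm}$'s side of the partition only because $\undFun$-parities are flipped in the definition $\HSet[\sttElm][\alpha] = \HSet[\regFun][\alpha] \cup \HSet[\undFun][\dual{\alpha}]$. With your parity, the update ${\undFun}[{\RSet[\sttElm]} \mapsto {\prtElm[\undFun]}]$ would move $\RSet[\sttElm]$ onto the opponent's side, and your verification of Condition~\ref{def:hybsrcspc(esc)}, of the quasi-dominion structure, and of maximality in that branch would all be reasoning about the wrong sets. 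Your stated reason there ("the freshly undetermined positions escape only upward to priorities $\geq \prtElm[\undFun]$") is not the paper's reason; the paper argues that $\RSet[\sttElm]$ is an $\alpha_{\sttElm}$-dominion in $\GameName[\sttElm]$, so $\escFun[][\alpha_{\sttElm}](\RSet[\sttElm]) \subseteq \HSet[\sttElm][\alpha_{\sttElm}]$, and that the opposite parity of $\prtElm[\undFun]$ keeps $\RSet[\sttElm]$ inside $\HSet[\der{\sttElm}][\alpha_{\sttElm}, {\prtElm[\sttElm]}]$.

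The second issue is that the maximality claim, which you yourself single out as "the crux", is never actually proven: your final paragraph describes what a combined attractor argument would have to accomplish and stops there. The paper closes this step by deriving it from promotability in both branches: $\HSet[\der{\sttElm}][\dual{\alpha}_{\sttElm}] = \HSet[\sttElm][\dual{\alpha}_{\sttElm}]$, so $\dual{\alpha}_{\sttElm}$-maximality is inherited, while $\HSet[\der{\sttElm}][\alpha_{\sttElm}] = \HSet[\sttElm][\alpha_{\sttElm}] \cup \RSet[\sttElm]$ is $\alpha_{\sttElm}$-maximal (note that, by the parity facts above, $\RSet[\sttElm]$ stays on the $\alpha_{\sttElm}$ side in both branches, which is what makes the two sides behave uniformly). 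Whether or not one finds the paper's justification terse, a proof that ends with "carrying out this combined attractor argument carefully is where the proof has to be most precise" has not carried it out; the step you correctly identify as the hardest one is simply absent.
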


\begin{proof}
Let $\sttElm \defeq ((\regFun, \prtElm), (\undFun, \cprtElm, \bElm[\PlrSym],
      \bElm[\OppSym]))$ be a promotable hybrid state and
$\der{\sttElm} \defeq
    ((\der{\regFun}, \der{\prtElm}), (\der{\undFun}, \der{\cprtElm}, \der{\bElm[\PlrSym]}, \der{\bElm[\OppSym]}))
= \PrmFun[][](\sttElm)$ the modified state obtained by computing the function
$\PrmFun$ on $\sttElm$.
It is easy to see that $\der{\prtElm} = \prtElm$, $\der{\cprtElm} = \cprtElm$, and
$\der{\bElm[\alpha]}=\bElm[\alpha]$ for $\alpha\in\SetB$.
At Line 1 of Algorithm $\PrmFun$ the
two best escape priorities $(\prtElm[\regFun], \prtElm[\undFun])$ from $\RSet[\sttElm]$ to $\rFun$ and $\uFun$ are computed by the function $\bepFun$.
Due to the
definition of $\bepFun$ and the fact that $\RSet[\sttElm]$ is an $\alpha$-dominion in
$\GameName[\sElm]$, it follows that both $\prtElm[\regFun]$ and $\prtElm[\undFun]$ are
priorities higher than $\prtElm$.
Moreover, by the same definition of $\bepFun$ and the fact that $\sttElm$ is promotable, it also holds that $\prtElm[\regFun]\equiv_{2}\prtElm$, while $\prtElm[\undFun]\not\equiv_{2}\prtElm$ if
$\prtElm[\undFun]<\top_{\alpha_{\sttElm}}$.
Now, two cases may arises: $\prtElm[\regFun]\leq\prtElm[\undFun]$ or $\prtElm[\regFun] >
\prtElm[\undFun]$.
In the fist case, by Line 2 and 3, we have that
$\der{\undFun}=\undFun$ and $\der{\regFun}={\regFun}[\RSet[\sttElm] \mapsto{\prtElm[\regFun]}]$.
Hence, $\der{\undFun}$ is a promotion function and Condition~\ref{def:hybsrcspc(und)} of Definition~\ref{def:hybsrcspc} is satisfied.
Moreover, since $\RSet[\sttElm]\subseteq\regFun[][(\leq\prtElm)]$, we have that $\dom{\der{\regFun}} =\dom{\regFun}$ and, therefore, also the two requirements of Condition~\ref{def:hybsrcspc(par)} are guaranteed.
Finally, since $\prtElm[\regFun]\equiv_{2}\prtElm$, it follows that $\HSet[\der{\sttElm}][\beta] =\HSet[\sttElm][\beta]$ for any $\beta\in\SetB$.
Hence, Condition~\ref{def:hybsrcspc(esc)} is satisfied.
At this point, it remains to prove that $\der{\regFun}$ is a region function.
Now, since $\RSet[\sttElm]\subseteq\regFun[][(\leq\prtElm)]$ and $\prtElm[\regFun]>\prtElm$, $\der{\regFun}$ is clearly a promotion function.
In addition, it is also a quasi dominion function as a consequence of the fact that $\HSet[\der{\sttElm}][\beta] =\HSet[\sttElm][\beta]$ for any $\beta\in\SetB$.
Moreover, being $\regFun$ a region function, it holds that $\prtFun(\vElm)\geq\prtElm$ and $\prtFun(\vElm)\equiv_{2} \alpha$ for each position $\vElm$ in $\escFun[][\dual{\alpha}](\HSet[\regFun][\alpha,\prtElm])$.
However, due to Line 1 we have that ${\bepFun[\GameName][\dual{\alpha}][{\RSet[\sttElm]},
{\regFun[\sElm]}]}=\prtElm[\regFun]$, which implies that $\prtFun(\vElm)\geq\prtElm[\regFun]$ and $\prtFun(\vElm)\equiv_{2} \alpha$ for each position $\vElm$ in $\escFun[][\dual{\alpha}](\HSet[\regFun][\alpha,{\prtElm[\regFun]}])$. Hence, $\der{\regFun}$ is a region function.
Finally, since $\sElm$ is promotable, it follows that $\HSet[\der{\sttElm}][\alpha]=\HSet[\sttElm][\alpha]\cup\RSet[\sttElm]$ is $\alpha$-maximal, while $\HSet[\der{\sttElm}][\dual{\alpha}]=\HSet[\sttElm][\dual{\alpha}]$ is $\dual{\alpha}$-maximal and, therefore, $\der{\sttElm}$ is maximal.
To conclude, being $\sElm$ promotable, it necessarily holds that $\RSet[\sttElm]\neq\emptyset$, which implies that $\LSet[\der{\sttElm}]\subset\LSet[\sttElm]$.
Hence, $\der{\sttElm} \prec \sttElm$.
Let us now consider the case in which $\prtElm[\regFun] > \prtElm[\undFun]$.
Due to Line
2 and 4 of Algorithm $\PrmFun$ we have that $\der{\regFun}=\regFun \setminus
\RSet[\sttElm]$ and $\der{\undFun}={\undFun}[\RSet[\sttElm] \mapsto {\prtElm[\undFun]}]$.
Conditions~\ref{def:hybsrcspc(par)} and~\ref{def:hybsrcspc(und)} follow from the fact that $\top_{\alpha_{\sttElm}} > \prtElm[\undFun] > \prtElm$ and $\RSet[\sttElm] \subseteq \regFun[][(\leq\prtElm)]$, which also imply that $\der{\undFun}$ is a promotion function.
In addition, since $\RSet[\sttElm] \defeq \regFun[][-1](\prtElm)$, it holds that $\HSet[\der{\regFun}][\alpha,\prtElm]=\HSet[\regFun][\alpha,\prtElm]\setminus\RSet[\sttElm]$, hence
$\der{\regFun}$ is a region function.
To prove Condition~\ref{def:hybsrcspc(esc)}, instead, let us observe that $\RSet[\sttElm]$ is an
$\alpha$-dominion in $\GameName[\sttElm]$ and, consequently,
$\escFun[][\alpha](\RSet[\sttElm])\subseteq\HSet[\sttElm][\alpha]$.
Moreover, we have that
$\prtElm[\undFun]\not\equiv_{2}\prtElm$, hence, as a consequence of the fact that $\der{\undFun}={\undFun}[\RSet[\sttElm] \mapsto {\prtElm[\undFun]}]$, it holds that
$\RSet[\sttElm] \in \HSet[\der{\undFun}][\dual{\alpha},\prtElm] \subseteq \HSet[\der{\sttElm}][\dual{\alpha},\prtElm]$.
In addition, it suffice to observe that
$\dom{\der{\regFun}[][(>\prtElm)]} =\dom{\regFun[][(>\prtElm)]}$ and
$\dom{\der{\undFun}[][(\geq\prtElm)]} =\dom{\undFun[][(\geq\prtElm)]} \cup \RSet[\sttElm]$, with
$\RSet[\sttElm]\neq\emptyset$, to conclude that $\der{\sttElm} \prec \sttElm$.
Finally, similarly to the previous case, it can be proved that both
$\HSet[\der{\sttElm}][\alpha]=\HSet[\sttElm][\alpha]\cup\RSet[\sttElm]$ is $\alpha$-maximal, while $\HSet[\der{\sttElm}][\dual{\alpha}]=\HSet[\sttElm][\dual{\alpha}]$ is $\dual{\alpha}$-maximal, as a direct consequence of the fact that $\sElm$ is promotable.
\end{proof}

\begin{restatable}{lemma}{lmmund}
  \label{lmm:und}
  Given a strongly-maximal hybrid state $\sttElm \in \SttSet[\SSym]$, the
  function $\UndFun(\sttElm)$ returns a pair $(\der{\regFun}, \der{\undFun}) \in
  \RegSet \times \PrmSet$ of a region function $\der{\regFun}$ and a promotion
  function $\der{\undFun}$ such that:
  \begin{inparaenum}[1)]
    \item
      $\regFun[\sttElm][(> {\prtElm[\sttElm]})] \subseteq \der{\regFun}[][(>
      {\prtElm[\sttElm]})]$;
    \item
      $\undFun[\sttElm][(> {\prtElm[\sttElm]})] \subseteq \der{\undFun}[][(>
      {\prtElm[\sttElm]})]$;
    \item
      $\der{\sttElm} = ((\der{\regFun}, \cprtElm[\sttElm]), (\der{\undFun}, \qprtElm, \bSym[0],
      \bSym[1]))$ is a promotable hybrid state if $\der{\sttElm}$ is \emph{closed} and, an hybrid state otherwise, for all priority $\qprtElm \in \prtFun$ such that $\cprtElm[\sttElm] < \qprtElm \leq \min(\rng{{\regFun[\sttElm] \cup \undFun[\sttElm]}^{>\cprtElm[\sttElm]}})$, and $\bSym[0], \bSym[1] \in \SetN$.
  \end{inparaenum}
\end{restatable}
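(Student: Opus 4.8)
The plan is to fix the strongly-maximal input $\sttElm \defeq ((\regFun, \prtElm), (\undFun, \cprtElm, \bElm[\PlrSym], \bElm[\OppSym]))$, to compute the returned pair $(\der{\regFun}, \der{\undFun}) = \UndFun(\sttElm)$ through the case split of the algorithm, and to verify the three claims one by one. The structural fact I would lean on throughout is that, by Condition~\ref{def:hybsrcspc(par)} and Condition~\ref{def:hybsrcspc(und)} of Definition~\ref{def:hybsrcspc}, no position carries a priority in the open window $(\prtElm[\sttElm], \cprtElm[\sttElm])$ nor below $\prtElm[\sttElm]$ in $\undFun$; hence the restrictions $(\geq {\cprtElm[\sttElm]})$ and $(> {\prtElm[\sttElm]})$ coincide on both $\regFun$ and $\undFun$. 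Since $\UndFun$ performs no attraction and only relabels priorities inside the local area, the inclusions~1) and~2) and the fact that $\der{\undFun}$ is a promotion function follow immediately: the only positions whose image changes are $\USet[\sttElm]$ (raised from $\prtElm[\sttElm]$ to $\cprtElm[\sttElm]$, or sent to native priorities $\leq \prtElm[\sttElm]$) and $\LSet[\sttElm] = \dom{\regFun[][(\leq {\prtElm[\sttElm]})]}$ (raised to $\cprtElm[\sttElm]$), and $\cprtElm[\sttElm] \geq \prtFun(\posElm)$ for every $\posElm \in \LSet[\sttElm]$ because $\prtFun(\posElm) \leq \regFun(\posElm) \leq \prtElm[\sttElm] < \cprtElm[\sttElm]$.

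In the first case, $\cprtElm[\sttElm] \equiv_{2} \alpha_{\sttElm}$, only the undetermined function changes, $\der{\regFun} = \regFun$ and $\der{\undFun} = \undFun[\USet[\sttElm] \mapsto \cprtElm[\sttElm]]$. Here I would observe that the set carrying $\USet[\sttElm]$ is preserved verbatim, $\HSet[\der{\sttElm}][{\dual{\alpha}[\sttElm]}, {\cprtElm[\sttElm]}] = \HSet[\sttElm][{\dual{\alpha}[\sttElm]}, {\prtElm[\sttElm]}]$, using the no-gap fact for the region part and the relabelling of $\USet[\sttElm]$ from $\prtElm[\sttElm]$ to $\cprtElm[\sttElm]$ (both of parity $\alpha_{\sttElm}$) for the undetermined part. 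Consequently Conditions~\ref{def:hybsrcspc(par)}, \ref{def:hybsrcspc(und)}, and~\ref{def:hybsrcspc(esc)} transfer unchanged from $\sttElm$ to $\der{\sttElm}$, and $\der{\regFun} = \regFun$ is a region function by hypothesis. This case is therefore clean.

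The second case, $\cprtElm[\sttElm] \not\equiv_{2} \alpha_{\sttElm}$ (so $\alpha_{\der{\sttElm}} = \dual{\alpha}[\sttElm]$), carries the work. After checking that $\dom{\der{\regFun}} = \dom{\regFun[][(\geq {\cprtElm[\sttElm]})]} \cup \USet[\sttElm]$ and $\dom{\der{\undFun}} = \dom{\undFun[][(\geq {\cprtElm[\sttElm]})]} \cup \LSet[\sttElm]$ still partition $\PosSet$ (disjointness of $\dom{\regFun}, \dom{\undFun}$ together with $\USet[\sttElm] \subseteq \dom{\undFun}$, $\LSet[\sttElm] \subseteq \dom{\regFun}$, and $\dom{\undFun[][(< {\prtElm[\sttElm]})]} = \emptyfun$), I would establish Condition~\ref{def:hybsrcspc(esc)} for the freshly undetermined block as follows. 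After the update $\LSet[\sttElm]$ sits in $\der{\undFun}$ at $\cprtElm[\sttElm]$, hence $\LSet[\sttElm] \subseteq \HSet[\der{\sttElm}][\alpha_{\sttElm}, {\cprtElm[\sttElm]}] = \HSet[\sttElm][\alpha_{\sttElm}, {\cprtElm[\sttElm]}] \cup \LSet[\sttElm]$, and a short computation of the complement shows that every target of a $\dual{\alpha}[\sttElm]$-escape of this set lies in $\HSet[\sttElm][{\dual{\alpha}[\sttElm]}] \setminus \LSet[\sttElm]$ (the settled opponent part together with $\USet[\sttElm]$, both of priority or parity that keeps them out of $\LSet[\sttElm]$). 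Thus a single-move escape of $\dual{\alpha}[\sttElm]$ out of the enlarged good set starting from $\LSet[\sttElm]$ would be a $\dual{\alpha}[\sttElm]$-attraction of $\LSet[\sttElm]$ into $\HSet[\sttElm][{\dual{\alpha}[\sttElm]}] \setminus \LSet[\sttElm]$, which is ruled out precisely by the $\dual{\alpha}[\sttElm]$-maximality of $\sttElm$ \wrt $\LSet[\sttElm]$. Conditions~\ref{def:hybsrcspc(par)} and~\ref{def:hybsrcspc(und)} for $\der{\sttElm}$ then hold because the chosen $\qprtElm$ satisfies $\cprtElm[\sttElm] < \qprtElm \leq \min(\rng{{(\regFun \cup \undFun)}^{(> \cprtElm[\sttElm])}})$, which reinstates the no-gap window one level up and leaves nothing below $\cprtElm[\sttElm]$ in $\der{\undFun}$.

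The main obstacle I anticipate is showing that $\der{\regFun}$ is a region function in this second case, specifically that the parity-$\alpha$ set $\HSet[\der{\regFun}][\alpha] = \HSet[\regFun][\alpha, {\cprtElm[\sttElm]}] \cup (\USet[\sttElm] \cap \HSet[\prtFun][\alpha])$ is a quasi $\alpha$-dominion for each $\alpha \in \SetB$, since it glues the untouched high tail of $\regFun$ to a low block reinstated at native priorities. The high tail is a region function by restriction (its sets $\HSet[\regFun][\alpha, \qprtElm]$ with $\qprtElm \geq \cprtElm[\sttElm]$ are unchanged quasi $\alpha$-dominions, closed under inclusion), while the low block behaves like the priority function $\prtFun$, which is always a region function; I would combine the region witness $\strElm[\alpha]$ on the tail with the stay-inside strategy on the low block and argue, using Condition~\ref{def:qsidomfun(prt)} of Definition~\ref{def:regfun}, that any opponent transition from the tail down to the low block can only occur through escape positions of native priority $\geq \cprtElm[\sttElm]$ and parity $\alpha$, so that the native maximal priority of every infinite play kept inside $\HSet[\der{\regFun}][\alpha]$ remains of parity $\alpha$; the no-gap layering keeps the tail thresholds and the native low thresholds of Definition~\ref{def:regfun} separated, so Conditions~\ref{def:qsidomfun(wit)} and~\ref{def:qsidomfun(prt)} follow at each threshold. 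Finally, when $\der{\sttElm}$ is closed, its promotability---the $\alpha_{\der{\sttElm}}$-maximality of $\RSet[\der{\sttElm}]$ \wrt $\LSet[\der{\sttElm}]$ and its $\dual{\alpha}[\der{\sttElm}]$-maximality---is inherited from the strong maximality of $\sttElm$, because $\UndFun$ leaves the move relation and the partition above $\cprtElm[\sttElm]$ intact. I expect the quasi-dominion verification for $\der{\regFun}$ to be the crux; everything else is careful but routine bookkeeping.
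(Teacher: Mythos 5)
Your proposal is correct and follows essentially the same route as the paper's own proof: the identical case split on whether $\cprtElm[\sttElm] \equiv_{2} \alpha_{\sttElm}$, the verbatim transfer of all hybrid-state conditions (and of the $\HSet$-sets) in the congruent case, and, in the non-congruent case, the use of the no-gap property for items 1) and 2), the (strong) maximality of $\sttElm$ for Condition~\ref{def:hybsrcspc(esc)} and for promotability, and the reset-to-native-priorities view of $\USet[\sttElm]$ for the region part. The only divergence is one of detail rather than method: where the paper simply asserts that $\der{\regFun}$ is a region function ``due to the fact that the priority function $\prtFun$ is always a region function'', you explicitly identify this as the crux and carry out the witness-gluing argument (tail witness combined with the stay-inside strategy, with every exit from a tail level forced through escape positions of native parity $\alpha_{\sttElm}$ and dominating priority, by the second condition of Definition~\ref{def:regfun}), which fills in a step the paper glosses over.
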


\begin{proof}
Let be $\sttElm \defeq ((\regFun, \prtElm), (\undFun, \cprtElm, \bElm[\PlrSym],
      \bElm[\OppSym]))$ a strongly-maximal hybrid state and $(\der{\regFun}, \der{\undFun})$ the result obtained by computing the function
$\UndFun$ on $\sttElm$.
Moreover, let $\der{\sttElm} = ((\der{\regFun}, \cprtElm[\sttElm]), (\der{\undFun}, \qprtElm, \bSym[0],
      \bSym[1]))$ the composition of state $\sttElm$ and the pair $(\der{\regFun}, \der{\undFun})$.
Two cases may arise: $\cprtElm \equiv_{2} \alpha$ or $\cprtElm \equiv_{2} \dual{\alpha}_{\sttElm}$.
In the first case, by Lines 1-2 of
Algorithm $\UndFun$, we have that $\der{\regFun} = \regFun$ and
$\der{\undFun}={\undFun}[\USet[\sttElm] \mapsto \cprtElm]$, where $\USet \defeq
\undFun[][-1](\prtElm)$.
By the fact that the priorities of the states are strictly decreasing we have that $\cprtElm > \prtElm$. Moreover, it holds that $\dom{\der{\undFun}}=\dom{\undFun}$ and, therefore, $\der{\regFun}$ is a region function, $\der{\undFun}$ is a promotion function, both Points 1) and 2) of the Lemma and both Conditions~\ref{def:hybsrcspc(par)} and~\ref{def:hybsrcspc(und)} of Definition~\ref{def:hybsrcspc} are satisfied for the composed state $\der{\sttElm}$.
To prove Condition~\ref{def:hybsrcspc(esc)} it suffice to
observe that due to Line 3 and the fact that $\cprtElm \equiv_{2} \alpha_{\sttElm}$, it
holds that $\HSet[\der{\sttElm}][\beta] = \HSet[\sttElm][\beta]$ for any $\beta\in\SetB$.
In particular, we have that $\HSet[\der{\regFun}][\alpha_{\sttElm},\qElm] = \HSet[\regFun][\alpha_{\sttElm},\qElm]$ for all priority $\qElm \geq \prtElm$, $\HSet[\der{\undFun}][\dual{\alpha}_{\sttElm},\qElm] = \HSet[\undFun][\dual{\alpha}_{\sttElm},\qElm]$ for all priority $\qElm > \cprtElm$ and $\HSet[\der{\undFun}][\dual{\alpha}_{\sttElm},\cprtElm] = \HSet[\undFun][\dual{\alpha}_{\sttElm},\cprtElm] \cup \USet[\sttElm]$.
Finally, if $\der{\sttElm}$ is \emph{closed}, it necessarily holds that $\der{\sttElm}$ is $\dual{\alpha}_{\sttElm}$-maximal and that $\RSet[\der{\sttElm}]$ is $\alpha$-maximal as a direct consequence of the fact that $\sttElm$ is strongly-maximal and $\regFun[][(>\prtElm)\wedge(<{\cprtElm[\sttElm]})]=\emptyset$.
Hence, $\der{\sttElm}$ is also promotable.
We can now focus on the case in which $\cprtElm \equiv_{2} \dual{\alpha}_{\sttElm}$.
By
Lines 2 and 4-5 of Algorithm $\UndFun$ we have that $\der{\regFun} = {\regFun[]^{( \geq \cprtElm)}}[\posElm \in \USet[\sttElm] \mapsto \prtFun(\posElm)]$ and $\der{\undFun} =
\undFun^{(\geq \cprtElm)}{[\LSet[\sttElm] \mapsto \cprtElm]}$.
Now, it is easy to see that also in this case both Points 1) and 2) of the Lemma and both Conditions~\ref{def:hybsrcspc(par)} and~\ref{def:hybsrcspc(und)} are satisfied for the composed state $\der{\sttElm}$, since $\LSet \defeq \dom{\regFun[][\leq \prtElm]}$ and $\USet\defeq \undFun[][-1](\prtElm)$ and, it also holds that $\der{\undFun}$ is a promotion function and $\der{\regFun}$ is a region function due to the fact that the priority function $\prtFun$ is always a region function.
Condition~\ref{def:hybsrcspc(esc)}, instead, follows form the fact that $\sttElm$ is a strongly-maximal
hybrid state and, therefore, it holds that $\escFun[][\alpha]
(\LSet[\sttElm]) \subseteq \HSet[\sttElm][\dual{\alpha}_{\sttElm}]$. By the same
Condition~\ref{def:hybsrcspc(esc)}, we have that
$\escFun[][\alpha_{\sttElm}](\HSet[\undFun][\alpha_{\sttElm},\prtElm])
\subseteq \preFun[][\dual{\alpha}_{\sttElm}](\HSet[\regFun][\alpha_{\sttElm},\prtElm] \setminus \LSet[\sttElm])$ and, consequently, for each priority $\qElm>\prtElm$ the set
$\undFun[][-1](\qElm)$ cannot reach $\USet[\sttElm]$ that, due to Line 5, is reset to the values of the priority function
$\prtFun$. The latter observation also implies that, if $\der{\sttElm}$ is \emph{closed}, $\der{\sttElm}$ is also $\dual{\alpha}_{\sttElm}$-maximal. Indeed, when $\der{\sttElm}$ is \emph{closed}, it holds that player $\dual{\alpha}_{\sttElm}$ cannot attracts $\RSet[\der{\sttElm}]$ and since $\LSet[\der{\sttElm}] = \RSet[\der{\sttElm}] \cup \USet[\sttElm]$ we can conclude that $\HSet[\der{\sttElm}][\alpha_{\sttElm}]\setminus\LSet[\der{\sttElm}]$ is $\dual{\alpha}_{\sttElm}$-maximal \wrt $\LSet[\der{\sttElm}]$.
Finally, similarly the the previous case, $\RSet[\der{\sttElm}]$ turns out to be $\alpha$-maximal as a direct consequence of the fact that $\sttElm$ is strongly-maximal and that $\regFun[][(>\prtElm)\wedge(<{\cprtElm[\sttElm]})]=\emptyset$.
Hence, $\der{\sttElm}$ is also promotable.
\end{proof}

\begin{restatable}{lemma}{lmmsol}
  \label{lmm:sol}
  Given a maximal hybrid state $\sttElm \in \SttSet[\MSym]$, the function
  $\solFun(\sttElm)$ terminates and returns a pair $(\der{\regFun},
  \der{\undFun}) \in \RegSet \times \PrmSet$ of a region function
  $\der{\regFun}$ and a promotion function $\der{\undFun}$ such that:
 \begin{inparaenum}[1)]
    \item
      $\regFun[\sttElm][(> {\prtElm[\sttElm]})] \subseteq \der{\regFun}[][(>
      {\prtElm[\sttElm]})]$;
    \item
      $\undFun[\sttElm][(> {\prtElm[\sttElm]})] \subseteq \der{\undFun}[][(>
      {\prtElm[\sttElm]})]$;
    \item
      $\der{\sttElm} = ((\der{\regFun}, \cprtElm[\sttElm]), (\der{\undFun}, \qprtElm, \bSym[0],
      \bSym[1]))$ is a promotable hybrid state if $\der{\sttElm}$ is \emph{closed} and, an hybrid state otherwise, for all priority $\qprtElm \in \prtFun$ such that $\cprtElm[\sttElm] < \qprtElm \leq \min(\rng{{\regFun[\sttElm] \cup \undFun[\sttElm]}^{>\cprtElm[\sttElm]}})$, and $\bSym[0], \bSym[1] \in \SetN$.
  \end{inparaenum}
  Moreover, the procedure $\hsolFun(\sttElm)$ modifies in-place $\sttElm$ into a
  strongly-maximal hybrid state $\der{\sttElm} \in \SttSet[\SSym]$ such that
  $\der{\sttElm} \preccurlyeq \sttElm$.
\end{restatable}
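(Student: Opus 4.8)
The plan is to prove both claims simultaneously by well-founded induction on the strict partial order $\prec$, which is well-founded because $\SttSet$ is finite. Fixing a state $\sttElm$, I assume as induction hypothesis that both claims hold for every state strictly below $\sttElm$ in $\prec$. The observation that reconciles this with the mutual recursion is that $\hsolFun$ re-enters $\solFun$ only through the call $\solFun(\HalfFun(\sttElm))$, while the self-call of $\solFun$ is on $\NxtFun(\sttElm)$; by Lemma~\ref{lmm:nxthalf} both $\HalfFun(\sttElm)$ and $\NxtFun(\sttElm)$ lie strictly below $\sttElm$ in $\prec$. Thus at a fixed level $\sttElm$ I first establish the $\hsolFun$-claim, whose recursive calls all land strictly below $\sttElm$ and are covered by the induction hypothesis, and only afterwards the $\solFun$-claim, which may call $\hsolFun(\sttElm)$ at the same level but is now already proven. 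The base cases, namely $\solFun$ returning unchanged at Line~2 and the first exit of the $\hsolFun$ loop, reduce to reading off Definition~\ref{def:hybsrcspc}, since the input is already a maximal hybrid state and the returned functions are unchanged.

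For $\hsolFun$ I would argue via a loop invariant: each iteration keeps $\sttElm$ a maximal hybrid state that is $\preccurlyeq$ the state on entry, and only grows the parts $\regFun[\sttElm][(> {\prtElm[\sttElm]})]$ and $\undFun[\sttElm][(> {\prtElm[\sttElm]})]$. Line~2 attracts $\RSet[\sttElm]$ inside $\GameName[\sttElm]$, which preserves maximality and makes $\RSet[\sttElm]$ be $\alpha_{\sttElm}$-maximal, thereby restoring strong maximality and meeting the precondition of Lemma~\ref{lmm:nxthalf} before $\solFun(\HalfFun(\sttElm))$ is invoked (discharged by the induction hypothesis); the subsequent $\MaxFun$ or $\PrmFun$ re-establishes maximality by Lemma~\ref{lmm:max} or Lemma~\ref{lmm:prm}. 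Termination of the loop is by well-foundedness of $\prec$: the guard continues the loop only while the body strictly decreased $\sttElm$, and the promotion branch does strictly decrease $\sttElm$ by Lemma~\ref{lmm:prm}. Upon exit I must certify strong maximality: the final Line~2 leaves $\RSet[\sttElm]$ $\alpha_{\sttElm}$-maximal, and since a closed state would have forced a strictly-decreasing promotion (contradicting the exit condition), the final state must be open and have passed through $\MaxFun$, which supplies the $\dual{\alpha}[\sttElm]$-maximality of $\HSet[\sttElm][{\dual{\alpha}[\sttElm]}] \setminus (\LSet[\sttElm] \cup \USet[\sttElm])$ demanded by the definition of $\SttSet[\SSym]$.

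For $\solFun$ in the inductive case I trace the three phases. Line~3 makes $\sttElm$ strongly maximal by the $\hsolFun$-claim already proven at this level; Line~5 is the self-call on $\NxtFun(\sttElm) \prec \sttElm$, covered by the induction hypothesis (its precondition holds because Line~3 left the state strongly maximal); Lines~6--8 restore maximality via $\MaxFun$ or $\PrmFun$; and Line~9 conditionally re-runs $\hsolFun$. The linchpin is that the state reaching $\UndFun$ at Line~10 is strongly maximal in both branches of the Line~9 test: if $\sttElm \prec \der{\sttElm}$ the final $\hsolFun$ restores it, while if no progress was recorded the state still coincides with the strongly-maximal one produced at Line~3. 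Claims~(1) and~(2) then follow by composing the monotonicity of the individual steps, since none of $\hsolFun$, the recursive call, $\MaxFun$, $\PrmFun$, or $\UndFun$ (Lemma~\ref{lmm:und}) ever deletes a position from $\regFun[\sttElm][(> {\prtElm[\sttElm]})]$ or $\undFun[\sttElm][(> {\prtElm[\sttElm]})]$; and Claim~(3) is delivered directly by Lemma~\ref{lmm:und} applied to the strongly-maximal state fed into $\UndFun$. Termination of $\solFun$ combines the well-foundedness argument with the termination of $\hsolFun$.

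I expect the principal obstacle to be exactly these two strong-maximality certifications: that the $\hsolFun$ loop can exit only in a strongly-maximal state, and that the Line~9 branching of $\solFun$ hands a strongly-maximal state to $\UndFun$. Both hinge on the easily-overlooked fact that a closed, (strongly) maximal state is promotable and hence strictly decreases $\sttElm$ in $\prec$, so that any non-decreasing step must have been an open-state maximisation; coupling this with the region attraction of Line~2 of $\hsolFun$ is what yields the dual-maximality clause of $\SttSet[\SSym]$. The secondary, bookkeeping-heavier task is to verify that the structural invariant of Definition~\ref{def:hybsrcspc} — in particular the escape condition~\ref{def:hybsrcspc(esc)} — survives the in-place updates across an entire $\hsolFun$ loop, where Lemmas~\ref{lmm:max}--\ref{lmm:und} must be re-applied iteration by iteration rather than once.
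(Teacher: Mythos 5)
Your proposal is correct and follows essentially the same route as the paper's proof: induction over the recursion grounded in the finiteness of the state space, with the base case at Line~2, the auxiliary Lemmas~\ref{lmm:nxthalf}, \ref{lmm:max}, \ref{lmm:prm} and \ref{lmm:und} discharging each step, termination from strict $\prec$-progress, and the two key strong-maximality certifications (the $\hsolFun$ loop can only exit after a non-progressing $\MaxFun$, and the Line~9 branch analysis guarantees $\UndFun$ receives a strongly-maximal state). Your explicit well-founded induction on $\prec$, with the $\hsolFun$-claim proved before the $\solFun$-claim at each level to untangle the mutual recursion, is a cleaner formulation of what the paper calls ``external induction,'' but it is the same argument.
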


\begin{proof}
Let be $\der{\sttElm} \defeq \solFun(\sttElm)$ the result obtained by
computing $\solFun$ on a maximal hybrid state $\sttElm \defeq ((\regFun, \prtElm),
(\undFun, \cprtElm, \bElm[\PlrSym], \bElm[\OppSym]))$.
The proof proceeds by induction, where in the base case we have that $\prtElm=\bot$ or $\bElm[\beta]= 0$ for any $\beta\in\SetB$. In this case, by Line
1 and 2 of Algorithm~\ref{alg:hybsol} we have that $\der{\sttElm}=((\regFun, \cprtElm[\sttElm]), (\undFun, \qprtElm, \bSym[0], \bSym[1]))$, which trivially satisfies all the requirements.
Therefore, let us now consider an arbitrary recursive call of $\solFun$ where $\prtElm>\bot$ and $\bElm[\beta] > 0$ for any $\beta\in\SetB$.
At to Line 3 of $\solFun$, the procedure $\hsolFun$ is called on the state $\sttElm$.
Consequently, at Line 1 of
Algorithm~\ref{alg:hybsolhalf}, the set $\RSet[\sttElm]$ is maximised by computing its $\alpha_{\sttElm}$-attractor $\atrFun[{\GameName_{\sttElm}} ][\alpha_{\sttElm}]$. It is not hard to show that the new state $\sttElm$ is still a strongly-maximal hybrid greater or equal to the old one, \wrt the ordering of Definition~\ref{def:hybsrcspc}.
Then, two cases may arises: $\sttElm$ is \emph{open} or \emph{closed}.
In the first case, at Line 5 a recursive call of $\solFun$ is
performed on the state generated by $\HalfFun$.
Now, due
to the fact that $\sttElm$ is a strongly maximal hybrid state and by
Lemma~\ref{lmm:nxthalf}, it follows that the resulting state of $\HalfFun(\sttElm)$
is a maximal hybrid state greater or equal to the input $\sttElm$, \wrt the ordering of
Definition~\ref{def:hybsrcspc}.
Hence, by external induction, we
have that $\sttElm = ((\der{\regFun}, \prtElm), (\der{\undFun}, \cprtElm, \bElm[\PlrSym], \bElm[\OppSym]))$ is a hybrid state and, by Point 1) and 2) it is also greater or equal to the previous $\sttElm$, \wrt the ordering of
Definition~\ref{def:hybsrcspc}.
At this point, $\sttElm$ can be \emph{open} or \emph{closed}.
Let us first consider again the case in which it is \emph{open}.
By Lemma~\ref{lmm:max} and Line 7, the function $\MaxFun$ modifies $\sttElm$ into a maximal hybrid state that is greater or equal to the previous $\sttElm$, \wrt the ordering of
Definition~\ref{def:hybsrcspc}.
On the other hand,
when $\sttElm$ is \emph{closed}, by external induction it holds that
$\sttElm$ is promotable and, by Lemma~\ref{lmm:prm} the function $\PrmFun$ modifies $\sttElm$ into a
maximal hybrid state greater than the previous one the ordering of
Definition~\ref{def:hybsrcspc}.
This last case also apply when $\sttElm$ is \emph{closed} at Line 4.
Now, since the modified state by $\PrmFun(\sttElm)$ is always greater than $\sttElm$, \wrt the ordering of
Definition~\ref{def:hybsrcspc}, we can conclude that function $\hsolFun$ only ends when the
function $\MaxFun$ modify the input hybrid state such that there is no progress \wrt the ordering of
Definition~\ref{def:hybsrcspc}, otherwise $\hsolFun$ starts a new iteration due to the Loop at Line 1.
By the fact that the state space is finite and that it starts a new iteration only if there is a progress in the ordering it is easy fo prove that $\hsolFun$ always terminates.
Observe that, when $\sttElm'\not\prec\sttElm$ where $\sttElm'$ is the state modified by $\MaxFun(\sttElm)$, it holds that $\sttElm'$ is also
strongly-maximal and, therefore, $\hsolFun$ returns a strongly maximal hybrid state.
At this point, we can go back to Line 3 of Algorithm~\ref{alg:hybsol} where the modified
state $\sttElm$ is strongly maximal as observed above. Now, by
Lemma~\ref{lmm:nxthalf} the input state of $\solFun$ at Line 5 is maximal.
Hence, by
external induction, the modified state composed of the results of Line 5 $\sttElm = ((\der{\regFun}, \prtElm), (\der{\undFun}, \cprtElm, \bElm[\PlrSym], \bElm[\OppSym]))$ is a hybrid state and, by Point 1) and 2) it is also greater or equal to the previous $\sttElm$, \wrt the ordering of
Definition~\ref{def:hybsrcspc}.
Similarly to the proof for $\hsolFun$, the state $\sttElm$ can be \emph{open} or \emph{closed}.
In the first case, two more cases may arise, indeed, if
$\sttElm$ is \emph{open}, then by Lemma~\ref{lmm:max} and Line 7, the function $\MaxFun$ modifies $\sttElm$ into a maximal hybrid state that is greater or equal to the previous $\sttElm$, \wrt the ordering of
Definition~\ref{def:hybsrcspc}.
On
the other hand, if $\sttElm$ is \emph{closed}, then by external induction it holds that
$\sttElm$ is promotable and, by Lemma~\ref{lmm:prm} the function $\PrmFun$ modifies $\sttElm$ into a
maximal hybrid state greater than the previous one the ordering of
Definition~\ref{def:hybsrcspc}.
Thus, at Line 11, is case $\sttElm \prec \der{\sttElm}$ where $\der{\sttElm}$ is the state stored before the recursive calls at Line 4, $\hsolFun$ is called on a maximal state and, as proved above, the modified resulting state is strongly maximal. Otherwise, as observer in the proof for $\hsolFun$, $\sttElm$ is already strongly-maximal.
As a consequence, by Lemma~\ref{lmm:und} and Line 10, we can conclude that
$\solFun$ returns a pair of region function and priority function
$(\der{\rFun},\der{\uFun})$ such that $\der{\sttElm} = ((\der{\regFun}, \cprtElm[\sttElm]), (\der{\undFun}, \qprtElm, \bSym[0],
      \bSym[1]))$ is a promotable hybrid state if $\der{\sttElm}$ is \emph{closed} and, an hybrid state otherwise, for all priority $\qprtElm \in \prtFun$ such that $\cprtElm[\sttElm] < \qprtElm \leq \min(\rng{{\regFun[\sttElm] \cup \undFun[\sttElm]}^{>\cprtElm[\sttElm]}})$, and $\bSym[0], \bSym[1] \in \SetN$.
\end{proof}

\section{Proof of Theorem \ref{theo:main}}
\main*
\label{app:mainth}

We provide an inductive proof over the highest priority.

For the \textbf{induction basis}, we consider, in this order, (1) $\solFun$ with highest priority $\bot$ (i.e.\ with an empty game); (2)
$\hsolFun$ being called with highest priority $0$; and
(3)
$\solFun$ being called with highest priority $0$.

For empty games---(1), highest priority $\bot$---there is nothing to do (and $\solFun$ does nothing).

For (2), games with maximal (and thus only) priority $0$ in $\hsolFun$, all states are in $R_s$, $s$ is closed, and all states in $R_s$ are promoted.
The game is then empty, $\solFun$ is called with an empty game, and Maximise is called on an empty set.
All global guarantees are retained due to Lemma \ref{lmm:prm}, case (1), and Lemma \ref{lmm:max}.
As $\mathsf{dom}(r_s^{\leq 0}) = \mathsf{dom}(u_s^{= 0}) = \emptyset$,
the additional requirement for $\hsolFun$ holds.

For (3), games with maximal (and thus only) priority $0$ in $\solFun$, all nodes are promoted---as we have shown for case (2)---in the first call of $\hsolFun$ (line 3).
$s$ is then empty. The call of $\solFun$ in line 5 therefore does nothing, and neither do the call of $\MaxFun$ in line 7, or of $\UndFun$ in line 10 for an empty game.

Thus, the global guarantees are retained by (2), (1), and due to Lemmas \ref{lmm:max}  and \ref{lmm:und}.

We now implement the \textbf{induction step}, again first for $\hsolFun$, and then for $\solFun$, using the results from $\hsolFun$.

For the induction step of $\hsolFun$,
we observe that, in each iteration of the loop, line 2 closes $R_s$ under attractor,
thus creating the precondition for the call of $\solFun$ (establishing the closure condition when $R_s$ is removed from the induced subgame on which the subcalls work).

When executed, the call of $\solFun$ then provides, by induction hypothesis, an increase of $\mathsf B_s$, which now contains every small dominion $D$ of player $1-\alpha$ of size $\leq \lfloor \frac{b_{1-\alpha_s}}{2} \rfloor$ in $S = \mathsf{dom}(r^{\leq p_s}) \setminus R_s$ from  before the call, while retaining all global guarantees.
Note that no such set exists if $\lfloor \frac{b_{1-\alpha_s}}{2} \rfloor = 0$ holds; while this case is not covered in the induction hypothesis, \textsf{sol} does nothing (or: returns immediately without implementing any change) in this case; thus, all global guarantees are retained.

Thus, the global guarantees are retained in every step either by induction hypothesis (Lemmas \ref{lmm:nxthalf}), by the special case $\lfloor \frac{b_{1-\alpha_s}}{2} \rfloor = 0$, or by Lemmas \ref{lmm:max} and Lemma \ref{lmm:prm}.

Finally, to see that $\mathsf B_s$ is closed, we observe that it is closed after $\MaxFun$ is run,
because closedness of $s$ always leads to a promotion, and thus to $s < \widehat s$, $s$ is open in the last iteration. Thus, the last operation executed in $\hsolFun$ is the call of $\MaxFun$, which entails closedness of $\mathsf B_s$ (Lemma \ref{lmm:max}).

We have shown that $\mathsf B_s$ now contains all small dominions $D$ of player $1-\alpha$ of size $\leq \lfloor \frac{b_{1-\alpha_s}}{2} \rfloor$ in $S = \mathsf{dom}(r^{\leq p_s}) \setminus R_s$ before the last call.
As $s$ has not changed during the call, it also contains all small dominions in $S = \mathsf{dom}(r^{\leq p_s}) \setminus R_s$ after the call, and on the time of return.
As $R_s$ is a
$p_s$-region,
Lemma \ref{lem:no-p} provides that, for any non-empty dominion $D'$ in $\mathsf{dom}(r^{\leq p_s})$,
there must be a non-empty sub-dominion $D'' \subseteq D'$, which is also a dominion in $S$.
Thus, as $S$ does not contain a dominion $D$ of player $1-\alpha_s$ with size $\leq \lfloor \frac{b_{1-\alpha_s}}{2} \rfloor$, neither does $\mathsf{dom}(r^{\leq p_s})$ when $\hsolFun$ returns.

We can now turn to the induction step for $\solFun$.
We fist check that the global guarantees are retained up to the point where $\UndFun$ is called.
We then make the argument that $\solFun$ will, for a dominion $D$ of player $1-\alpha$ in $\mathsf P_s$ of size $|D| \leq b_{1-\alpha}$, guarantee that $D \subseteq \mathsf B_s$ holds.
We will then show that, under these conditions, $\UndFun$ will retain the global guarantees.

The check that the global guarantees are retained up to the point where $\UndFun$ is called is straight forward: we have seen that $\hsolFun$ (line 3) does, and also provides that $R_s$ is closed under attractor before $\solFun$ is called in line 5.

$\MaxFun$ and $\PrmFun$ do retain the global guarantees by Lemmas \ref{lmm:max} and \ref{lmm:prm}, respectively.

We now argue that, when $\UndFun$ is called,
$\mathsf B_s$ contains all dominions of player $1-\alpha_s$ in $\mathsf P_s$ of size $\leq b_{1-\alpha_s}$.
Let $D$ be such a dominion.

We first observe that $\mathsf G_s$, and its $\alpha$-attractor $\atrFun[][\alpha][\mathsf G_s]$, cannot intersect with $D$ at any point, by the global guarantees.

Thus, $D$ is a dominion in $\mathsf P_s \setminus \atrFun[][\alpha][\mathsf G_s]$ at any point by Lemma \ref{lem:noattraction}.

After the first call of $\hsolFun$ (in line 3), we have established that $\mathsf{dom}(r^{\leq p_s})$ does not contain a dominion of player $1-\alpha_s$ of size $\leq \lfloor \frac{b_{1-\alpha_s}}{2}\rfloor$.

In particular, if $\mathsf{dom}(r^{\leq p_s})$ does not intersect with $D$, then $D$ must be contained in $\mathsf B_s$, as $\mathsf P_s \subseteq \mathsf{dom}(r^{\leq p_s}) \cup \mathsf B_s \cup \mathsf G_s$.

Otherwise, $D' = D \cap \mathsf{dom}(r^{\leq p_s})$ must be a dominion in $\mathsf{dom}(r^{\leq p_s})$ (with the same winning strategy for player $\alpha$ as in $\mathsf P_s$ (or $\mathsf P_s \setminus \atrFun[][\alpha][\mathsf G_s]$) by Lemma \ref{lem:noattraction}.

By Lemma \ref{lem:no-p}, $D'$ has a sub-dominion $D''\subseteq D'$ which does not intersect with $R_s$.
$|D''| > \lfloor \frac{b_{1-\alpha_s}}{2}\rfloor$ follows from the fact that $\mathsf{dom}(r^{\leq p_s})$ contains no smaller dominions.
Thus, by induction hypothesis, after the call of $\solFun$ from line 5 is returned, $\mathsf B_s$ also contains $D''$.

In this case, $s< \widehat s$ in line 9, such that $\hsolFun$ is called.
In this case, also calling $\MaxFun$ (by Lemma \ref{lmm:max} or $\PrmFun$ (By Lemma \ref{lmm:prm}) retain the global guarantees.

Let us assume for contradiction that, after the return from $\hsolFun$,
$I = \mathsf{dom}(r^{\leq p_s}) \cap (D' \setminus D'')$ is not empty.
Noting that $D \supseteq D' \supseteq D' \setminus D''$, $|D| \leq b_{1-\alpha_s}$, and $|D''|>\lfloor \frac{b_{1-\alpha_s}}{2}\rfloor$, $|I| \leq \lfloor \frac{b_{1-\alpha_s}}{2}\rfloor$.

Moreover, we have again that $I$ is contained in $\mathsf P_s \setminus \atrFun[][\alpha][\mathsf G_s]$ by our global guarantees, $\mathsf P_s \subseteq \mathsf{dom}(r^{\leq p_s}) \cup \mathsf B_s \cup \mathsf G_s$, and $\mathsf B_s$ is closed under $1-\alpha_s$ attractor in $\mathsf P_s$.

Thus, $I = (D' \setminus D'')\cap \mathsf{dom}(r^{\leq p_s})$
must be a dominion in $\mathsf{dom}(r^{\leq p_s})$ (with the same winning strategy for player alpha as in $\mathsf P_s$ (or $\mathsf P_s \setminus \atrFun[][\alpha][\mathsf G_s]$) by Lemma \ref{lem:no-p}. (Contradiction between the assumption that $I$ is non-empty, and that $\mathsf{dom}(r^{\leq p_s})$ cannot contain such a small dominion of player $1-\alpha_s$).

We have now established that the global guarantees hold, while $\mathsf B_s$ contains all dominions of size $\leq b_{1-\alpha_s}$ of player $1-\alpha_s$ in $\mathsf P_s$, when $\UndFun$ is called.
This entails that $\mathsf G_s \cup L_s$ contains all dominions of size $\leq b_{\alpha_s}$ of player $\alpha$ in $\mathsf P_s$, when $\UndFun$ is called.

$\UndFun$ then moves the respective set, $U_s$ or $L_s$, to $u_s^{-1}(c_s)$ of the caller priority, such that $\mathsf B_{s'}$ on that level is the current $\mathsf B_s$ or $\mathsf G_s \cup L_s$ when $c_s \equiv_2 p_s$ or $c_s \not\equiv_2 p_s$, respectively.

Thus, due to Lemma \ref{lmm:und}, $\UndFun$ retains the global guarantees.

This completes the induction step.


\section{Appendix for Section \ref{sec:hybalg;sub:prtprm}}
\label{app:c}

\begin{figure}[t]
\vspace{-1em}
\noindent
\begin{minipage}[t]{0.505\textwidth}
  \null
  \algsolexp
\end{minipage}
\begin{minipage}[t]{0.495\textwidth}
  \null
  \algsolexpmacro
\end{minipage}
\vspace{-2.5em}
\end{figure}

The auxiliary function $\NxtFun$ generates a new maximal state $\der{\sttElm} =
\NxtFun(\sttElm) \in \SttSet[\MSym]$, starting from a strongly-maximal one
$\sttElm \in \SttSet[\SSym]$.
The state $\der{\sttElm}$ is obtained by changing the current priority
$\prtElm[\sttElm]$ to the highest priority $\qprtElm$ of the positions in
$\LFun[\sttElm] \setminus \RFun[\sttElm]$.
Observe that when no such position exists, namely when $\LFun[\sttElm] =
\RFun[\sttElm]$, the new priority coincides with $\bot$.

$\MaxFun$ enforces the maximality property on the state $\sttElm$ received as
input, so that, in the resulting state obtained by modifying $\sttElm$ in-place,
no position of the local area $\LSet[\sttElm]$ can be attracted by the quasi
$\alpha$-dominions $\HSet[\sttElm][\alpha]$, with $\alpha \in \SetB$.
To this end, the procedure computes at Line~2 the $\alpha$-attractor
$\atrFun[][\alpha][ {\HSet[\sttElm][\alpha] \setminus \LSet[\sttElm],
\LSet[\sttElm]} ]$, collecting all the position of $\LSet[\sttElm]$ that player
$\alpha$ can force to move into $\HSet[\sttElm][\alpha] \setminus
\LSet[\sttElm]$.
The minimum priority $\qprtElm$ assigned by $\regFun$ to a position in the
attracting set is extracted at Line~3.
The attracted positions are then assigned priority $\qprtElm$ in the region
function $\regFun[\sttElm]$ at Line~4.
Since removing positions from the local area $\LSet[\sttElm]$ may induce a
violation of the two requirements of Definition~\ref{def:regfun}, the positions
that remain in $\LSet[\sttElm]$ at the end of the for-each loop of Lines~1-4
need to be reset to their original priority, as prescribed at Line~5.

To conclude, the procedure $\PrmFun$ requires a promotable state $\sttElm \in
\SttSet[\PSym]$ and applies a promotion operation to the region
$\RFun[\sttElm]$, while preserving any maximality property already enjoyed by
the input state.
It first computes the opponent best-escape priority $\qprtElm$ for the set
$\RFun[\sttElm]$ \wrt $\regFun[\sttElm]$ (Line~1).
Intuitively, this is the smallest priority the opponent can reach with one move
when escaping from the region $\RFun[\sttElm]$.
Formally, it is defined as:
\[
  \bepFun[][ {\dual{\alpha}[\sttElm]} ][{\RFun[\sttElm]}, {\regFun[\sttElm]}]
  \defeq \min(\rng{\regFun[\sttElm] \rst \rng{\IRel}}),
\]
where $\IRel \defeq \MovRel[\GameName] \cap (\escFun[][ {\dual{\alpha}[\sttElm]}
](\RFun[\sttElm]) \times (\dom{\regFun[\sttElm]} \setminus \RFun[\sttElm]))$
contains all the moves leading outside $\RFun[\sttElm]$ that the opponent can
use to escape.
The procedure, then, promotes $\RFun[\sttElm]$ to $\qprtElm$, by assigning at
Line~2 the priority $\qprtElm$ to all the positions of $\RFun[\sttElm]$ in the
region function $\regFun[\sttElm]$.
Observe that, thanks to the $\dual{\alpha}[\sttElm]$-maximality of the input
state, $\qprtElm$ is necessarily congruent to $\alpha_{\sttElm}$.
In particular, when the only possibility for player $\dual{\alpha}[\sttElm]$ to
escape from $\RSet[\sttElm]$ is to reach
$\regFun[\sttElm][-1](\topSym[\alpha])$, the value of $\qprtElm$ is
$\topSym[\alpha]$.
In this case, we are promoting $\RSet[\sttElm]$ from the status of quasi
$\alpha_{\sttElm}$-dominion to that of $\alpha_{\sttElm}$-dominion.
The correctness of this is ensured by Theorem~\ref{thm:inddom}.

At this point, by defining $\solFun(\GameName) \defeq (\regFun[][-1](\topZSym),
\regFun[][-1](\topOSym))$, where $\regFun \defeq \solFun(\isttElm)$, we obtain a
sound and complete solution algorithm for parity games.
In particular, the soundness follows from the fact that \RPP always traverses
states having as invariant the property that $\regFun[][-1](\topZSym)$ and
$\regFun[][-1](\topOSym)$ are dominions (see Item~\ref{def:qsidomfun(dom)} of
Definition~\ref{def:qsidomfun}).
Completeness, instead, is due to the recursive nature of the algorithm, whose
base case ensures that no position is left unprocessed at any given priority.

\begin{figure}[t]
\vspace{-1em}
\noindent
\begin{minipage}[t]{0.52\textwidth}
  \null
  \algsolpar
  \algsolparmacroi
\end{minipage}
\hspace{0.15em}
\noindent
\begin{minipage}[t]{0.45\textwidth}
  \null
  \algsolparhalf
  \vspace{0.35em}
  \algsolparmacroii
\end{minipage}
\vspace{-0.5em}
\end{figure}

The auxiliary function $\NxtFun$ of the Parys Solver generates a new state by decreasing the current priority. This function is also called by another auxiliary function, the $\hsolFun$, that generates a new state with halved precision for the player $\alpha$ and decreased priority.

As for the \RPP solver, $\MaxFun$ enforces the maximality property on current state $\sttElm$. As a result, in the modified state no position of the local area $\LSet[\sttElm]$ can be attracted by the quasi
$\alpha$-dominions $\HSet[\sttElm][\alpha]$.
To this end, the procedure computes at Line~1 the $\dual{\alpha}$-attractor
$\atrFun[][\dual{\alpha}][ {\HSet[\sttElm][\dual{\alpha}],
\LSet[\sttElm]} ]$, collecting all the position of $\LSet[\sttElm]$ that player
$\dual{\alpha}$ can force to move into $\HSet[\sttElm][\dual{\alpha}]$.
The attracted set is then assigned to $\regFun$ to a position in the
attracting set is extracted at Line~3.
The attracted positions are then assigned to $\undFun$ with the current priority $\prtElm[\sttElm]$.

Finally, the $\UndFun$ function, receive in input a state where all small dominions of player $\dual{\alpha}[\sttElm]$ of size $\leq b_{\dual{\alpha}[\sttElm]}$ \emph{from the time of the call} are
processed.
Therefore, none of this small dominions intersect with $\LSet[\sttElm]$, that is in turn moved to the previous priority level. The positions of
$\USet[\sttElm]$ can be instead reset in $\regFun$ to their original priority.

The auxiliary function $\NxtFun$ generates a new maximal state $\der{\sttElm} =
\NxtFun(\sttElm) \in \SttSet[\MSym]$, starting from the strongly-maximal one
$\sttElm \in \SttSet[\SSym]$ in input.
The state $\der{\sttElm}$ is obtained by first setting the caller priority to
the current one $\prtElm[\sttElm]$ and, then, by computing the highest priority
$\qprtElm$ among the unprocessed positions in $\LFun[\sttElm] \setminus
\RFun[\sttElm]$.
The promotion and maximisation procedures, $\PrmFun$ and $\MaxFun$, generalise
the corresponding ones associated with \RPP.
The only difference is that here we need to determine which one, between the
region function $\regFun$ and the promotion function $\undFun$, has to receive
the promoted region $\RegSet[\sttElm]$, in case of $\PrmFun$, or the positions
attracted from $\LSet[\sttElm] \cup \USet[\sttElm]$, in case of $\MaxFun$.
As before, $\PrmFun$ asks for the input state to be promotable, \ie, $\sttElm
\in \SttSet[\PSym]$, while $\MaxFun$ does not require any specific property on
it.

\vspace{-1em}
\noindent
\begin{minipage}[t]{0.496\textwidth}
  \null
  \algsolqsipol
\end{minipage}
\hspace{-0.5em}
\begin{minipage}[t]{0.504\textwidth}
  \null
  \algsolqsipolhalf
\end{minipage}

\noindent
\begin{minipage}[t]{0.475\textwidth}
  \null
  \algsolqsipolmacroi
\end{minipage}
\hspace{-0.5em}
\begin{minipage}[t]{0.535\textwidth}
  \null
  \algsolqsipolmacroii
\end{minipage}

Similarly to Parys' algorithm, the $\HalfFun$ function halves the bound of the
opponent player $\dual{\alpha}[\sttElm]$, leaving the bound of player
$\alpha_{\sttElm}$ unchanged.
Finally, the $\UndFun$ function, starts from a strongly maximal state $\sttElm
\in \SttSet[\SSym]$ where all small dominions of player $\dual{\alpha}[\sttElm]$
of size $\leq b_{\dual{\alpha}[\sttElm]}$ \emph{from the time of the call} are
processed.
Thus, neither do any of the small dominions ($\leq b_{\dual{\alpha}[\sttElm]}$)
of player $\dual{\alpha}[\sttElm]$ intersect with $\USet[\sttElm]$, nor do any
of the small dominions ($\leq b_{\dual{\alpha}[\sttElm]}$) of player
$\dual{\alpha}[\sttElm]$ intersect with $\LSet[\sttElm]$.
Depending on the parity of the calling priority, we can then return the
respective set and, where the parity is different, reset the positions of
$\USet[\sttElm]$ in $\regFun$ to their original priority.



\section{Experimental Evaluation - Keiren's Benchmarks}
\label{sec:expereval}

In this set of benchmarks we consider were first proposed in~\cite{Kei15} and
comprises a number of concrete verification problems, ranging from
model-checking, to equivalence-checking and decision problems for different
temporal logics.
They can be divided in the following four categories.

\noindent\textbf{\textit{Model-checking benchmarks.}}
The first group contains 313 games, with size up to $O(10^{7})$ positions.
It includes a number of different verification problems.
A first set contains encodings of a variety of communication protocols
from~\cite{KM90,CK74,GP96,BSW69}: the alternating bit protocol, the positive
acknowledgement with retransmission protocol, the bounded retransmission
protocol, and the sliding window protocols.
The protocols are parameterised with the number of messages to send and, when
applicable, the window size.
The set also contains verification problems for the cache coherence protocol
of~\cite{VHBJB01} and the wait-free handshake register of~\cite{Hes98}, as well
as the classic elevator and towers-of-Hanoi benchmarks from~\cite{FL09}.
The verification tasks under analysis cover fairness, liveness and safety
properties.
A second set, instead, contains encodings of two-player board games, such as
Clobber, Domineering, Hex, Othello, and Snake, all parameterised by their board
size.
Here, the existence of a winning strategy for the game is the property
considered.
The encoding into parity games results in games with very few priorities: up to
4 in some cases.

\noindent\textbf{\textit{Equivalence checking benchmarks.}}  This group contains 216 games
encoding equivalence tests between processes.  The verification problems test
various forms of process equivalences, such as strong, weak and branching
bisimulation, as well as branching simulation.  Most of the processes are the
ones already considered in the model-checking benchmarks.  The encoding into
parity games results in games with at most two priorities, hence the only
relevant measure of difficulty is the size, again reaching $O(10^{7})$ nodes for
the bigger instances.

\begin{table}
  \vspace{-1.em}
  \begin{center}
    \small
    \scalebox{0.89}[0.90]{\tabper}
  \end{center}
  \caption{\label{tab:per} \small Solution times in seconds on Keiren's
  benchmarks (1012 games).}
  \vspace{-2em}
\end{table}

\noindent\textbf{\textit{Decision problem benchmarks.}}
The third group contains encodings of satisfiability and validity problems for
formulae of various temporal logics: \LTL, \CTL, \CTLS, \PDL and the \MC, and
comprises 192 games.
The maximal size of a benchmark is around $3\cdot 10^6$ positions.
The parity games encoding have been obtained with the tool
MLSolver~\cite{FL10a}.
The situation here is more interesting, since these concrete problems feature a
higher number of priority, up to 20 in few cases.
Hence, unlike the previous two groups, these benchmarks allow us to stress a bit
more the scalability of the solution algorithms \wrt the increase in priorities.

\noindent\textbf{\textit{PGSolver.}}
This group contains 291 synthetic benchmarks, corresponding to known families of
hard cases for specific solvers and randomly generated ones.  The sizes and
number of priorities vary significantly, depending on the specific class of
games.

Table~\ref{tab:per} reports the results of the experiments for all the solvers
considered in the analysis, divided by class of benchmarks\footnote{The
benchmarks were run by issuing the following OINK commands: oink --no-single --no-loops --no-wcwc $GameName$ $SolverName$; where solvers are: ZLK, NPP, SSPM, QPT, ZLKQ.}.
For each solver, the total completion time, the average time per benchmark and
the percentage of timed-out executions are given.
We set a timeout of 10 seconds for all the benchmarks, except for the
equivalence-check class, for which 40 seconds is used instead.
As expected, the exponential solvers perform better on all the classes, with
\PP taking the lead most of the time.
\SSPM and \QPT both perform quite poorly, between two and three orders of
magnitude worse than the other solvers, and do not seem to scale beyond the
simplest instances, as also evidenced by the high number of timeouts.
Both $Par$ and \HPP, instead, perform relatively well in all the benchmarks,
being able to solve all the instances without incurring in timeouts and
maintaining a short distance from the exponential solvers performance-wise. $Par$ has a slight edge over \HPP on the model-checking and equivalence checking
problems, both of which feature a very low number of priorities, though the time
advantage on average is typically negligible.
On the other hand, when the number of priorities increases, like in the decision
problems, the situation reverses and \HPP takes the lead over $Par$ and
practically matches the performance of the exponential solvers.
This seems to suggest that \HPP may scale better \wrt the number of priorities
in the games.
To further investigate this behaviour we decided to perform additional
experiments, whose results are reported in the next subsection.

\end{document}